\title{Robust Model Selection and Nearly-Proper Learning for GMMs}
\author{%
Allen Liu \\
MIT\\
Cambridge, MA 02139 \\
\texttt{cliu568@mit.edu} \\
\And
Jerry Li \\
Microsoft Research\\
Redmond, WA 98052 \\
\texttt{jerrl@microsoft.com} \\
\And
Ankur Moitra \\
MIT\\
Cambridge, MA 02139 \\
\texttt{moitra@mit.edu} \\
}
\theoremstyle{plain}
\newtheorem{lemma}{Lemma}[section]
\newtheorem*{lemma*}{Lemma}
\newtheorem{corollary}[lemma]{Corollary}
\newtheorem*{corollary*}{Corollary}
\newtheorem{claim}[lemma]{Claim}
\newtheorem{fact}[lemma]{Fact}
\theoremstyle{plain}
\newtheorem{theorem}[lemma]{Theorem}
\newtheorem*{theorem*}{Theorem}
\newtheorem*{problem*}{Problem}
\newtheorem{definition}[lemma]{Definition}
\newtheorem*{remark}{Remark}
\newcommand\TV{\textsf{TV}}
\newcommand\comment[1]{}
\newcommand\abs[1]{\vert{#1}\vert}
\newcommand\norm[1]{\Vert{#1}\Vert}
\newcommand\R{\mathbb{R}}
\newcommand\Z{\mathbb{Z}}
\newcommand\C{\mathbb{C}}
\newcommand{\poly}{{\rm poly}}
\newcommand\eps{\epsilon}
\newcommand{\trunc}{\textsf{trunc}}
\newcommand\mcl{\mathcal}
\newif\ifrandom
\newcommand{\wt}{\widetilde}
\newcommand{\wh}{\widehat}
\begin{document}

\maketitle

\begin{abstract}

In learning theory, a standard assumption is that the data is generated from a finite mixture model. But what happens when the number of components is not known in advance? The problem of estimating the number of components, also called {\em model selection}, is important in its own right but there are essentially no known efficient algorithms with provable guarantees let alone ones that can tolerate adversarial corruptions. 
In this work, we study the problem of robust model selection for univariate Gaussian mixture models (GMMs). Given $\poly(k/\eps)$ samples from a distribution that is $\eps$-close in TV distance to a GMM with $k$ components, we can construct a GMM with $\wt{O}(k)$ components that approximates the distribution to within $\wt{O}(\eps)$ in $\poly(k/\eps)$ time.  Thus we are able to approximately determine the minimum number of components needed to fit the distribution within a logarithmic factor.   Prior to our work, the only known algorithms for learning arbitrary univariate GMMs either output significantly more than $k$ components (e.g. $k/\eps^2$ components for kernel density estimates) or run in time exponential in $k$. 
Moreover, by adapting our techniques we obtain similar results for reconstructing Fourier-sparse signals.

\end{abstract}

\section{Introduction}

Many works in learning theory operate under the assumption that the data is generated from a finite mixture model, and furthermore that the number of components is known in advance.  But what happens when the number of components is not known in advance? The problem of estimating the number of components is called {\em model selection} and has been intensively studied in statistics for over fifty years \citep{neyman1966use}. Indeed, in many scientific applications, it is the central issue. Consider the motivation given by \citet{chen2004testing}: In genetics, we might have a continuous-valued trait, like height, that can be measured across a population and we want to understand its genetic basis. But is the underlying genetic mechanism simple or complex? Is it controlled by just a few genes or are there many more genes waiting to be discovered that each have a small effect on it?

From a statistical perspective, what makes model selection challenging is that the standard analysis of the likelihood ratio test breaks down because of lack of regularity and non-identifiability \citep{hartigan1985failure}. Despite many attempts \citep{ghosh1984asymptotic, lo2001testing, huang2017model} and rejoinders \citep{jeffries2003note}, even understanding the asymptotic distribution of the likelihood ratio statistics has remained a long-standing challenge in the field \citep{kasahara2015testing}. From an algorithmic standpoint, the problem is even more difficult. 

In this work, we study the problem of robust model selection for one-dimensional Gaussian mixture models with $k$ components ($k$-GMMs for short).
A natural approach for this problem is via \emph{agnostic proper learning}, where the task is to, given samples from an unknown distribution, output the best $k$-GMM approximation to this distribution in TV distance.
An efficient agnostic proper learning algorithm, combined with standard tools from hypothesis testing, would immediately yield an algorithm for model selection.

Unfortunately, while there are many efficient algorithms for learning one-dimensional GMMs, they all fall into one of several categories: $(1)$ They assume some strong separation conditions on the components so that the samples can be clustered based on which component they were generated from. $(2)$ They solve the harder problem of learning the parameters of the components, which information-theoretically requires the number of samples to be exponential in $k$ \citep{moitra2010settling}. $(3)$ They employ brute-force search \citep{daskalakis2014faster, acharya2014near} or solve a system of polynomial inequalities \citep{li2017robust}, and run in time exponential in $k$. $(4)$ They learn an approximation that is either not a GMM, e.g. a piece-wise polynomial approximation \citep{chan2013efficient, acharya2017sample} or output a GMM where the number of components is much larger than $k$~\citep{wu2018improved, devroye2012combinatorial,bhaskara2015sparse}. $(5)$ They assume that the components in the GMM have the same or similar variances and means not too far apart so that there is a good approximation to the density with just a logarithmic number of components~\citep{Wu2018optimal,polyanskiy2020self}.
In all cases, these guarantees are insufficient for efficient model selection, and/or yield a trivial approximation to the number of components in a GMM except in restricted settings. 
In this work, we ask: Are there efficient algorithms for learning arbitrary one-dimensional GMMs that output an approximation with $\widetilde{O}(k)$ components? Relatedly: Are there efficient algorithms for approximating the number of components in a GMM? We give efficient algorithms whose running time and sample complexity are polynomial in $k$ for both of these problems, and also the related problem of reconstructing Fourier-sparse signals with an unknown number of frequencies.

\subsection{Learning and model selection for GMMs}

Our main result is a new robust learning algorithm for one-dimensional GMMs. We show:

\begin{theorem}\label{thm:main-gmm}
Let $k, \eps > 0$ be parameters and let $f$ be a distribution such that $d_{\TV}(\mcl{M}, f) \leq \eps$ for some unknown mixture of Gaussians $\mcl{M} = w_1 G_1 + \dots + w_k G_k$.  Assume that we are given $\wt{O}(k /\eps^2)$ samples from $f$.  Then there is an algorithm that runs in $\poly(k /\eps)$ time and with probability $0.9$ (over the random samples), outputs a mixture of $\wt{O}(k)$ Gaussians,  $\wt{\mcl{M}}$, such that 
\[
d_{\TV}(\wt{\mcl{M}}, f ) \leq \wt{O}(\eps) \,. 
\]
\end{theorem}

\noindent In contrast to other known learning algorithms (discussed earlier), our learning algorithm works for arbitrary GMMs, runs in polynomial time and uses a polynomial number of samples, and while it does not output a GMM with exactly $k$ components, it does the next best thing: it outputs a GMM with at most a polylogarithmic factor more components. 


As a corollary, we also give an algorithm for robust approximate model selection for GMMs. The connection to model selection is that when our algorithm fails to find a GMM with $\widetilde{O}(k)$ components that fits the data we can be assured that there must more than $k$ components to begin with. Notice in particular that improper approximations by themselves do not suffice for the model selection problem, as a good improper approximation could exist even if the distribution is far from any GMM with $\widetilde{O} (k)$ components.

\begin{theorem}\label{coro:gmm-hypothesis-test}
Let $k, \eps >0$ be parameters we are given.  Let $\mcl{F}_1$ be the family of distributions that are $\eps$-close to a $k$-GMM with $k$ components (in TV distance).   Let $\mcl{F}_2$ be the family of distributions that are not $\wt{O}(\eps)$-close to any GMM with $\wt{O}(k)$ components.  There is an algorithm that given $\poly(k /\eps)$ samples from a known distribution $\mcl{D}$, runs in $\poly(k /\eps)$ time, and outputs $1$ if $\mcl{D} \in \mcl{F}_1$ and outputs $2$ if $\mcl{D} \in \mcl{F}_2$ both with failure probability at most $0.2$.
\end{theorem}
\begin{remark}
Even if the distribution $\mcl{D}$ is completely unknown and we are only given samples from it, the above result still holds as long as $\mcl{D}$ is somewhat well behaved (note that such an assumption is necessary as hypothesis testing with respect to total variation distance without any assumptions on $\mcl{D}$ is impossible).  In particular we can use piecewise polynomial approximation \citep{chan2013efficient} or kernel density estimates \citep{terrell1992variable} to learn a distribution $\mcl{D}'$ that is close to $\mcl{D}$ that we have an explicit form for and then run the hypothesis test using $\mcl{D}'$.
\end{remark}

\subsection{Fourier sparse interpolation}

Our techniques also immediately apply to the problem of Fourier sparse interpolation, where the goal is to interpolate a signal based on noisy measurements of it at a few points \citep{chen2016fourier}.
We say that a function $\mcl{M}$ is $(k, C)$ simple if it can be written in the form
$$ \mcl{M}(t) = \sum_{j=1}^k a_j e^{2 \pi i \theta_j t} \; ,$$
where additionally $\sum_j |a_j| \leq C$.
In other words, a function is $(k, C)$ simple if it is $k$-sparse in the Fourier domain, and its Fourier coefficients are bounded in $\ell_1$ by $C$.

We consider the following problem.
We get query access to a function $f(t) = \mcl{M}(t) + \eta(t)$
at any point in the interval $[-1, 1]$, where $\mcl{M}$ is $(k, C)$ simple and has all frequencies in the interval $[-F, F]$, and $\eta(t)$ is noise that we will assume is bounded in $L_2$ norm. The goal is to compute a Fourier-sparse approximation $\widetilde{\mcl{M}}(t)$ that is close to $f(t)$, in the sense that its error is comparable to that of $\mcl{M}(t)$. Recently \citet{chen2016fourier} showed how to construct an approximation $\widetilde{\mcl{M}}(t)$ that satisfies
$$\|f(t) - \widetilde{\mcl{M}}(t)\|_2 \leq \| \eta(t) \|_2 + \eps \|\mcl{M}(t)\|_2$$
where the $L_2$ norm is taken over the interval $[-1, 1]$. Their algorithm works for any $\eps > 0$ and uses $\mbox{poly}(k, \log 1/\eps) \log F$ measurements. Moreover the $\widetilde{\mcl{M}}(t)$ that they output is $\mbox{poly}(k, \log 1/\eps)$-Fourier sparse. Similarly to the GMM setting, a natural goal is to perform robust interpolation but with tighter bounds on the number of frequencies. We show:

\begin{theorem}\label{thm:fourier-main}
Let $f, \mcl{M}$ be as above where $\mcl{M}$ is $(k,1)$-simple. Then for any desired accuracy $\eps > 0$ and constant $c > 0$, in $\poly(k, \log 1/\eps)  \log F  $ queries and $\poly(k /c , \log 1/  \eps)  \log^2 F$ time, we can output a function $\wt{\mcl{M}}$ such that with probability $1 - 2^{-\Omega(k)}$,
\begin{enumerate}
    \item $\wt{\mcl{M}}$ is $\widetilde{O}(k)$-Fourier sparse with $\norm{\wh{\wt{\mcl{M}}}}_1 \leq \widetilde{O}(k)$
    \item 
    $\int_{-1 + c}^{1 - c} |\wt{\mcl{M}} - f|^2 \leq  \widetilde{O}\left(\eps^2 + \int_{-1}^1 |f - \mcl{M}|^2 \right) $
\end{enumerate}
\end{theorem}
\begin{remark}
Note the constraints $\norm{\wh{\mcl{M}}}_1 $ and $\norm{\wh{\wt{\mcl{M}}}}_1$ translate into bounds on the sizes of the coefficients of the exponentials in $\mcl{M}$ and $\wt{\mcl{M}}$ respectively. 
\end{remark}


The natural open question left by our work is to improve the sparsity bounds, both for interpolation/learning and model selection. In principle it could be possible that there are efficient algorithms for these problems, however it now seems somewhat unlikely. Even without noise, learning a Gaussian mixture model with $k$ components without a separation condition in time $\mbox{poly}(k, 1/\epsilon)$ is open. From our work (see Section~\ref{sec:overview}), we see that even in the well-conditioned case this is equivalent to finding a non-trivially sparse solution to a system of polynomial equations where there seems to be no structure that makes algorithmic search better than brute-force possible.  Moreover, this question has already been open for many years, but there hasn't been any progress on proper learning.  Thus, we conjecture that both the learning and model selection problems are computationally hard if we are not allowed to relax the number of components. 

\subsection{Related work}
There is a vast literature on the three problems we consider. Here we will give a more detailed review of related work. 

\paragraph{Learning Mixtures of Gaussians and Model Selection} Since the pioneering work of \citet{pearson1894contributions}, mixtures of Gaussians have become one of the most ubiquitous and well-studied generative models in both theory and practice.
Numerous problems have been studied on the context of learning mixtures of Gaussians, including clustering~\citep{dasgupta1999learning,vempala2004spectral,achlioptas2005spectral,dasgupta2007probabilistic,AroraK07,kumar2010clustering,awasthi2012improved,mixon2017clustering,hopkins2018mixture,kothari2018robust,DiakonikolasKS18}, learning in the presence of adversarial noise in high dimensional settings~\citep{DiakonikolasKS18,hopkins2018mixture,kothari2018robust,bakshi2020robustly,diakonikolas2020robustly,kane2021robust,liu2020settling,liu2021learning}, parameter estimation~\citep{kalai2010efficiently,belkin2015polynomial,moitra2010settling,hardt2015tight}, learning in smoothed settings~\citep{hsu2013learning,anderson2014more,bhaskara2014smoothed,ge2015learning}, and density estimation~\citep{devroye2012combinatorial,chan2014efficient,acharya2017sample}.

Of particular interest to us is the line of work on proper learning~\citep{feldman2006pac,acharya2014near,li2017robust,ashtiani2018nearly}, where the goal is to output a mixture of $k$-Gaussians which is close in total variation to the underlying ground truth.
Unfortunately, while the sample complexity of these algorithms is usually polynomial, the runtime for all known approaches is exponential in $k$.
In contrast, our runtimes are polynomial, albeit for a relaxed version of the problem, where the output is allowed to be a mixture of $k'$ Gaussians, for $k' > k$.

For this ``semi-proper'' regime, efficient algorithms are known, albeit either only for restricted settings, or with significantly worse quantitative results than we achieve.
In the ``well-conditioned'' case, where the means are close together, and the variances of all the components are comparable, the aforementioned work of~\citep{Wu2018optimal,polyanskiy2020self} demonstrates that the nonparametric MLE can efficiently obtain an estimate using only logarithmically many pieces.
However, the nonparametric MLE is not suited for the general setting, where the means could be far apart, and variances could be very different, and will not converge in general.
Moreover, while nonparametric MLE is robust to perturbations in KL, it is not robust to perturbations in total variation distance, as we consider here.

For the general case, by using kernel density estimates, one can achieve $\eps$ approximation using $k' = O(k / \eps^C)$ for some constant $C$~\citep{devroye2012combinatorial}.
Similarly~\citet{bhaskara2015sparse} achieves $\eps$ error using $k' = O(k / \eps^3)$ pieces.
That is, for both of these approaches, they require a number of pieces which scales polynomially with $1 / \eps$.
In comparison, our dependence on $\eps$ in terms of the number of pieces is logarithmic.

As discussed previously, there are strong connections between proper learning and model selection
~\citep{neyman1966use,hartigan1985failure,ghosh1984asymptotic, lo2001testing,jeffries2003note,kasahara2015testing,huang2017model}.
Related notions have been considered in distribution testing~\citep{parnas2006tolerant,ValiantV10a,ValiantV10b,ValiantV11a,JiaoHW16,JiaoVHW17,HanJW16} and testing properties of boolean functions~\citep{diakonikolas2007testing, iyer2021junta}.

\paragraph{Continuous Time Sparse Fourier Transforms} Sparse Fourier transforms in the continuous setting, also known as sparse Fourier transforms off the grid, has been the subject of intensive study.
Indeed, the first algorithm for this problem dates back to \citet{pronyessai}.
Modern algorithms include MUSIC~\citep{schmidt1982signal},  ESPRIT~\citep{roy1986esprit}, maximum likelihood estimators \citep{bresler1986exact}, convex programming based methods \citep{candes2014towards} and the matrix pencil method \citep{moitra2015super}.

Most of these works, especially those that work in a noisy setting, require a frequency gap. Moreover they require more than $k$ samples (their bound usually depends on the frequency gap), even if the underlying signal is $k$-sparse in the Fourier domain.
A recent line of work has focused on the problem of improving the sample complexity \--- in particular getting bounds which only depend on $k$ with runtimes that are polynomial in $k$ \citep{fannjiang2012coherence,duarte2013spectral,tang2013compressed,tang2014near,boufounos2015s,huang2015super,price2015robust}. The setting where there is no gap and there is noise is particularly challenging. One approach is to relax the definition of a frequency gap, and require it only between ``clusters" of frequencies \citep{batenkov2020conditioning}. 
Another line of work~\citep{avron2019universal,chen2019active} shows how to output a hypothesis which is $k$-sparse without any gap assumptions and with sample complexity which is polynomial in $k$. However these methods run in exponential time.
As we previously discussed, the most relevant works to us are~\citet{chen2016fourier} and ~\citet{chen2019active}, which give an algorithm whose running time and sample complexity are polynomial in $k$ that works without any gap assumptions, but for a relaxation where we are allowed to output a $\widetilde{O} (k^2)$-Fourier sparse signal.

\section{Technical overview}
\label{sec:overview}

We now give an overview of our approach.  We will focus on just the GMM case in this overview.  Our approach for sparse Fourier interpolation follows a very similar outline.  We first present our techniques assuming that we have explicit access to $f$.  In Section~\ref{sec:abstract-away-samples} we show how to reduce to this case when we are only given samples.  In other words, the problem is as follows: we are given a function $f$, and we want to find a sparse approximation to $f$ as a nonnegative sum of Gaussians, i.e. we want to write
\[
f \sim a_1 G_1 + \dots + a_n G_n
\]
with $n$ small, where each $G_i$ is a Gaussian. 

\subsection{Well-conditioned case}

We first solve the ``well-conditioned" case.  Roughly, we say that a GMM is well-conditioned if the variances of the components are all constant scale and the means are all not too far from zero.  Formally, we have the following definition:
\begin{definition}
We say a Gaussian $G = N(\mu, \sigma^2)$ is $\delta$-well-conditioned if $\abs{\mu} \leq \delta$ and $|\sigma^2  - 1| \leq \delta$.
Furthermore we say a mixture of Gaussians $\mcl{M} = w_1 G_1 + \dots + w_kG_k$  is $\delta$-well-conditioned if all of the components $G_1, \dots , G_k$ are $\delta$-well-conditioned.
\end{definition}
Naturally, our techniques also apply to a shared scaling and/or translation of the components, but we will ignore this for now. Earlier work of~\citep{Wu2018optimal, polyanskiy2020self} proved an important structural result that a well-conditioned GMM can be $\eps$-approximated by a mixture with $O(\log 1/\eps)$ components. However we will want a robust and algorithmic version: In particular, instead of requiring the distribution to be exactly a well-conditioned GMM, we will only require that it be close in total variation distance. Even in this setting, with some level of model misspecification, we want an efficient algorithm for constructing an approximating GMM with few components. 
To this end, a key result, proved in Section~\ref{sec:GMM-well-conditioned}, is:

\begin{lemma}
Let $\eps > 0$ be a parameter.  Assume we are given access to a distribution $f$ such that $d_{\TV}(f , \mcl{M}) \leq \eps$ where $\mcl{M} = w_1G_1 + \dots + w_kG_k$ is a $0.5$-well-conditioned mixture of Gaussians.  Then we can compute, in $\poly(1/\eps)$ time, a mixture $\wt{\mcl{M}}$ of at most  $O(\log 1/\eps)$ Gaussians such that $d_{\TV}(\mcl{M}, \wt{\mcl{M}}) \leq \wt{O}(\eps)$.
\end{lemma}



Our approach departs from the moment matching framework of  \cite{Wu2018optimal, polyanskiy2020self}. Instead we take the probability density function of any well-conditioned Gaussian $G_j$. We can expand it as a Taylor series around $0$ of the form
\[
G_j(x) = c_{G_j}^{(0)} + \frac{c_{G_j}^{(1)}x}{1!} + \frac{c_{G_j}^{(2)}x^2}{2!} + \dots 
\]
for some coefficients $c_{G_j}^{(i)}$.  
We can then associate it with the vector $c_{G_j} = (c_{G_j}^{(0)}, \dots , c_{G_j}^{(\ell-1)} ) $ of length $\ell = O(\log 1/\eps)$.
Then, for any well-conditioned mixture $\mcl{M} = a_1 G_1 + \dots + a_k G_n$, we can associate it with the corresponding convex combination of the vectors of its components, i.e., we define $c_{\mcl{M}} = a_1 c_{G_1} + \ldots + \ldots a_k c_{G_k} \in \R^{\ell}$.

The point of this is the following implication: if two well-conditioned mixtures get mapped to vectors which are close, then these two mixtures must be close in total variation distance.
The intuition is that when we write down the $L_1$ distance between the two mixtures, because the Taylor coefficients of Gaussians decay exponentially fast, the contribution of terms with degree $l > O(\log 1/\eps)$ to the integral becomes negligible.

Now, we can associate the set of well-conditioned mixtures with a convex body in $O(\log 1/\eps)$-dimensions, where the vertices of the convex body are given by single Gaussians. 
Consequently we can use Caratheodory's theorem to argue that any point within this body can be approximated as an $O(\log 1/\eps)$-sparse convex combination of the vertices, or equivalently, any well-conditioned mixture can we approximated by a mixture of $O(\log 1/\eps)$ well-conditioned Gaussians. 

It remains to demonstrate how to actually find this sparse mixture of Gaussians.
Naively, the number of vertices is infinite, as there are infinitely many well-conditioned Gaussians.
However, it is not too hard to show that if we consider a slight coarsening of this body by only taking the vertices to be the vectors associated to the well-conditioned Gaussians which belong in some $\poly (1 / \eps)$-sized net, then the quality of our solution only degrades by constant multiplicative factors.
At this point we can appeal to standard results in convex optimization to find the desired sparse approximation.
We defer the details of this argument to Section~\ref{sec:GMM-well-conditioned}.

\subsection{Localization}

After solving the well-conditioned case, the next step is to reduce the general case to the well-conditioned case via localization.  
We begin with an important definition.
\begin{definition}[Gaussian Multiplier]
For parameters $\mu, \sigma$, we define 
\[
M_{\mu, \sigma^2}(x) = e^{-\frac{(x - \mu)^2}{2 \sigma^2}}
\]
i.e. it is a Gaussian scaled so that its maximum value is $1$.
\end{definition}
Gaussian multipliers will be crucial in the localization step.  Now assume that $f$ can be written as some unknown $k$-sparse combination, say
\[
f = a_1G_1 + \dots + a_k G_k 
\]
We can then modify $f$, e.g. by multiplying by a Gaussian multiplier $M_{\mu, \sigma^2}$.
Heuristically, this operation changes the coefficients $a_1, \ldots, a_k$ in a predictable way.  Namely, the coefficients $a_j$ of Gaussians $G_j$ that are far from $N(\mu, \sigma^2)$ are exponentially attenuated based on the distance to $N(\mu, \sigma^2)$.  This effectively ''localizes'' the mixture.  More formally, 
\begin{claim}\label{claim:mult-by-gaussian-intro}
We have the identity
 \[
 M_{\mu, \sigma^2}(x)N(\mu_1, \sigma_1^2) =   \frac{1}{\sqrt{1 + \frac{\sigma_1^2}{\sigma^2}}}e^{-\frac{(\mu_1 - \mu)^2}{2(\sigma_1^2 + \sigma^2)}} N\left( \frac{\mu \sigma_1^2 + \mu_1 \sigma^2}{\sigma_1^2 + \sigma^2} , \frac{\sigma_1^2 \sigma^2}{\sigma_1^2 + \sigma^2} \right) \,.
 \]
 \end{claim}
 \begin{proof}
 We prove the above through direct computation.
 \begin{align*}
  M_{\mu, \sigma^2}(x)G_1(x) &= e^{-\frac{(x - \mu)^2}{2\sigma^2} -\frac{(x - \mu_1)^2}{2\sigma_1^2}} \frac{1}{\sigma_1 \sqrt{2\pi }} = \frac{1}{\sigma_1 \sqrt{2\pi }} \cdot e^{-\frac{1}{2}\left( \left(\frac{1}{\sigma^2} + \frac{1}{\sigma_1^2} \right)x^2 - 2 \left( \frac{\mu}{\sigma^2} + \frac{\mu_1}{\sigma_1^2} \right)x + \frac{\mu^2}{\sigma^2} + \frac{\mu_1^2}{\sigma_1^2}\right)} \\ &= \frac{1}{\sigma_1 \sqrt{2\pi }} \cdot  \exp \left( -\frac{1}{2}\left( \sqrt{\frac{1}{\sigma^2} + \frac{1}{\sigma_1^2}}x - \frac{\frac{\mu}{\sigma^2} + \frac{\mu_1}{\sigma_1^2}}{\sqrt{\frac{1}{\sigma^2} + \frac{1}{\sigma_1^2}}} \right)^2 - \frac{1}{2} \cdot \frac{(\mu_1 - \mu)^2}{\sigma_1^2 + \sigma^2}  \right) \\ &= \frac{1}{\sqrt{1 + \frac{\sigma_1^2}{\sigma^2}}}e^{-\frac{(\mu_1 - \mu)^2}{2(\sigma_1^2 + \sigma^2)}} N\left( \frac{\mu \sigma_1^2 + \mu_1 \sigma^2}{\sigma_1^2 + \sigma^2} , \frac{\sigma_1^2 \sigma^2}{\sigma_1^2 + \sigma^2} \right) \,.
 \end{align*}
 \end{proof}
The hope is that this will leave us with only components that are not too far from each other \--- exactly the well-conditioned case which we already know how to solve.  If the variances of all of the components are comparable, then this is indeed the case.  However, additional complications arise when one of the components $G_i = N(\mu, \sigma_i^2)$ has variance $\sigma_i \ll \sigma$ because this component will still have much smaller variance than the others after localizing.  Nevertheless, we show that we can carefully localize at different scales, using smaller variance Gaussian multipliers to localize around smaller variance components so that all of the localized mixtures are well-conditioned.     

The main remaining question is to select a good family of localizations so that we can then fully reconstruct the original mixture from the localized mixtures.  Each localized mixture will cost us  $O(\log 1/\eps)$ components, and therefore we must use at most $\wt{O}(k)$ different localizations.  When all of the variances of the Gaussians are not too dissimilar, we can do so by leveraging the following structural result, which states that one can $\eps$-approximate the constant function using a sum of evenly spaced Gaussians with variance $1$ and spacing $(\log 1/\eps)^{-1/2}$ (or smaller).  The intuition behind this observation is that the Fourier transform of a Gaussian is also a Gaussian, which has exponential tail decay.
\begin{lemma}\label{lem:approx-constant-intro}
Let $0 < \eps < 0.1$ be a parameter.  Let $c$ be a real number such that $0 < c \leq (\log 1/\eps)^{-1/2}$.  Define
\[
f(x) = \sum_{j = - \infty}^{\infty} \frac{c}{\sqrt{2\pi}} M_{ c j \sigma , \sigma^2}(x) \,.
\]
Then $1 - \eps \leq f(x) \leq 1 + \eps$ for all $x$.
\end{lemma}
\begin{proof}
WLOG $\sigma = 1$.  Now the function $f$ is $c$-periodic and even, so we may consider its Fourier expansion 
\[
f(x) = a_0 + 2 a_1 \cos \left( \frac{2\pi  x}{c}\right) + 2a_2 \cos \left( \frac{4\pi  x}{c}\right)+ \dots 
\]
and we will now compute the Fourier coefficients.  First note that 
\[
a_0 = \frac{1}{c}\int_0^c f(x) dx = \frac{1}{\sqrt{2\pi}}\sum_{j = -\infty}^{\infty} \int_{c(j+1)}^{cj} M_{0, 1}(x) dx = 1 \,.
\]
Next, for any $j \geq 1$, 
\begin{align*}
a_j = \frac{1}{c} \int_0^c f(x) \cos \left( \frac{2\pi j x}{c}\right) dx = \frac{1}{\sqrt{2\pi}} \sum_{j = -\infty}^{\infty} \int_{c(j+1)}^{cj} M_{0, 1}(x) \cos \left( \frac{2\pi j x}{c}\right) dx \\ = \frac{1}{\sqrt{2\pi}} \int_{-\infty}^\infty \frac{1}{2}\left( e^{\frac{-x^2}{2} + \frac{2\pi i j x}{c}} + e^{\frac{-x^2}{2} - \frac{2\pi i j x}{c}} \right) dx = e^{-\frac{2\pi^2 j^2}{c^2}} 
\end{align*}
where in the above we use the notation $i = \sqrt{-1}$.  Using the assumption that $c \leq (\log 1/\eps)^{-1/2}$, it is clear that
\[
\sum_{j=1}^{\infty} e^{-\frac{2\pi^2 j^2}{c^2}} \leq \frac{\eps}{2}
\]
so we deduce that for any $x$,
\[
|f(x) - 1| \leq 2(|a_1| + |a_2| + \cdots ) = 2\sum_{j=1}^{\infty} e^{-\frac{2\pi^2 j^2}{c^2}}  \leq \eps \,.
\]
In other words, the function $f$ is between $1 - \eps$ and $1+\eps$ everywhere and we are done.
\end{proof}

In light of the above lemma, we can use a set of evenly spaced Gaussian multipliers and simply sum the different localized mixtures.  Note that it suffices to use $\wt{O}(k)$ different localizations because we only need to sum over the Gaussian multipliers that have some nontrivial overlap with one of the $k$ true components (since for Gaussian multipliers that are far from all of the components, the localized mixture will be approximately $0$).

To handle the fully general case, when the variances of the Gaussians are unbounded, we need a generalization of the previous lemma that allows us to $\eps$-approximate the indicator function of an interval with a sum of $O(\log^2 1/\eps)$ Gaussians.  The proof of this generalization is in Section~\ref{sec:approx-with-Gaussians}.
\begin{definition}[Significant Interval]
For a Gaussian multiplier $M_{\mu, \sigma^2}$, we say the $C$-significant interval of $M$ is $[\mu - C \sigma, \mu +  C \sigma  ]$.  We will use the same terminology for a Gaussian $N(\mu, \sigma^2)$.
\end{definition}
\begin{theorem}
Let $l$ be a positive real number and $0 < \eps < 0.1$ be a parameter.  There is a function $f$ with the following properties
\begin{enumerate}
    \item $f$ can be written a linear combination of Gaussian multipliers
    \[
    f(x) = w_1M_{\mu_1, \sigma_1^2}(x) + \dots  + w_nM_{\mu_n, \sigma_n^2}(x)
    \]
    where $n = O(\log^2 1/\eps)$ and $0 \leq w_1, \dots , w_n \leq 1$ 
    \item The $10 \sqrt{\log 1/\eps}$-significant intervals of all of the $M_{\mu_i, \sigma_i^2}$ are contained in the interval $[-(1 + \eps)l, (1 + \eps)l]$
    \item $0 \leq f(x) \leq 1 + \eps$ for all $x$
    \item $1 - \eps \leq f(x) \leq 1 + \eps$ for all $x$ in the interval $[-l, l]$
    \item $0 \leq f(x) \leq \eps$ for $x \geq (1+ \eps)l $ and $x \leq -(1 + \eps)l$
    
\end{enumerate}
\end{theorem}
We combine this structural result with a dynamic program which allows us to efficiently choose the scales at which to localize. Putting all of these pieces together yields our full algorithm, assuming we have access to the pdf of the unknown function.  We show how to eliminate the need for pdf access below and present our full algorithm in complete detail in Section~\ref{sec:GMM-full}.  


\subsection{Abstracting away the samples}\label{sec:abstract-away-samples}

In the previous sections, we have assumed that we have access to the underlying pdf function $f$.
Typically, however, we only have sample access to the unknown distribution.
To rectify this, we will use the improper learner in \citep{chan2013efficient} (see Theorem 37) whose output is a piecewise polynomial. 
We can then only work with this piecewise polynomial, which is an explicit function that we can then perform explicit computations with. 
\begin{definition}
A function $f$ is $t$-piecewise degree $d$ if there is a partition of the real line into intervals $I_1, \dots , I_t$ and polynomials $q_1(x), \dots , q_t(x)$ of degree at most $d$ such that for all $i \in [t]$, $f(x) = q_i(x)$ on the interval $I_i$.
\end{definition}
The work in \citep{chan2013efficient} guarantees to learn a piecewise polynomial $f'$ that is close to $\mcl{M}$ in $L^1$ distance when given $\wt{O}(k/\eps^2)$ samples (and they also show that this sample complexity is essentially optimal).

\begin{theorem}[\cite{chan2013efficient}]\label{thm:improper-GMMs}
Let $\mcl{M} = w_1G_1 + \dots + w_kG_k$ be an unknown mixture of Gaussians and $f$ a distribution such that $d_{\TV}(f, \mcl{M}) \leq \eps$.  There is an algorithm that, given $\wt{O}(k/\eps^2)$ samples from $f$, runs in $\poly(k/\eps)$ time and returns an $O(k)$-piecewise degree $O(\log 1/\eps)$ function $f'$ such that with $0.9$ probability (over the random samples),
\[
\norm{f' - f}_1 \le O(\eps) \,.
\]
\end{theorem}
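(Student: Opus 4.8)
The plan is to prove this by combining a \emph{structural} approximation fact — every $k$-component GMM is well-approximated in $L^1$ by a piecewise polynomial of the target complexity — with a generic \emph{semi-agnostic} learner for the class of such piecewise polynomials, following the approach of Chan, Diakonikolas, Servedio and Tan~\cite{chan2013efficient}.

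\textbf{Step 1 (Structural approximation).} First I would show that $\mcl{M} = w_1 G_1 + \dots + w_k G_k$ is $\eps$-close in $L^1$ to some $O(k)$-piecewise degree-$O(\log 1/\eps)$ function. It suffices to handle a single Gaussian $G_i$ and add up the approximations. For $G_i$, restrict attention to the interval $I_i$ of radius $O(\sigma_i \sqrt{\log(k/\eps)})$ about its mean, which carries all but $O(\eps/k)$ of its mass; after rescaling, the density on $I_i$ is a constant times $e^{-u^2/2}$ with $|u| = O(\sqrt{\log(k/\eps)})$, and a suitably truncated degree-$O(\log(k/\eps))$ Taylor polynomial of $e^{-u^2/2}$ approximates it uniformly on $I_i$ to within $O(\eps/k)$ of the total mass. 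Hence each $G_i$ is $O(\eps/k)$-close in $L^1$ to a function that equals a single degree-$O(\log 1/\eps)$ polynomial on $I_i$ and vanishes outside; summing the $k$ such pieces produces a piecewise polynomial with at most $O(k)$ breakpoints and degree $O(\log 1/\eps)$ that is $\eps$-close to $\mcl{M}$, and therefore $2\eps$-close to $f$ by the triangle inequality.

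\textbf{Step 2 (Learning the class).} Let $\mathcal{C}$ be the class of $O(k)$-piecewise degree-$O(\log 1/\eps)$ functions (one may additionally intersect with the set of densities). By Step 1, $\mathcal{C}$ contains a hypothesis within $O(\eps)$ of $f$ in $L^1$, so it is enough to agnostically learn $\mathcal{C}$ to error $O(\eps)$. The key combinatorial point is that the difference of two members of $\mathcal{C}$ is piecewise polynomial with $O(k)$ pieces and degree $O(\log 1/\eps)$, hence changes sign at most $m := O(k \log 1/\eps)$ times; consequently the Yatracos set $\{x : p(x) > q(x)\}$ for $p,q \in \mathcal{C}$ is a union of at most $m$ intervals, a class of VC dimension $O(m)$. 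Taking $\wt O(m/\eps^2) = \wt O(k/\eps^2)$ samples, a Glivenko--Cantelli argument makes the empirical $\mathcal{A}_m$-distance (the supremum of $|\int_S(\,\cdot\,)|$ over unions $S$ of $m$ intervals) uniformly $\eps$-accurate; running the minimum-$\mathcal{A}_m$-distance (Scheffé/Yatracos) estimator over $\mathcal{C}$ then outputs $f'$ with $\norm{f'-f}_1 \le O(\eps)$, with constant success probability that can be boosted to $0.9$.

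\textbf{Step 3 (Efficiency).} The remaining and main obstacle is implementing the minimum-distance fit in $\poly(k/\eps)$ time, since a priori it is a continuous optimization over breakpoint locations and polynomial coefficients. I would resolve it by restricting candidate breakpoints to the $\wt O(k/\eps^2)$ observed sample points, arguing that snapping each breakpoint to the nearest sample changes the empirical $L^1$ error by only $O(\eps)$ (since between consecutive samples the empirical distribution has negligible mass); this turns the problem into a dynamic program over $O(k)$ pieces, where for each candidate interval one solves a small linear/convex program to fit the best degree-$O(\log 1/\eps)$ polynomial in the $\mathcal{A}_m$ metric, yielding an $O(k)$-piecewise degree-$O(\log 1/\eps)$ output in $\poly(k/\eps)$ time. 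The delicate parts are making the breakpoint discretization rigorous without inflating the piece count, and ensuring the per-interval polynomial-fitting subroutine is simultaneously $O(\eps)$-accurate and polynomial time.
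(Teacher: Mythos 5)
This theorem is quoted verbatim (as their Theorem 37) from \cite{chan2013efficient}; the present paper does not prove it but uses it as a black box. Your sketch is a faithful high-level reconstruction of the approach taken in that reference: a Taylor-based structural lemma showing a $k$-GMM is $L^1$-close to an $O(k)$-piecewise low-degree polynomial, reduction to agnostic learning of that class via the Yatracos/Scheff\'e minimum-$\mathcal{A}_m$-distance estimator (where the VC dimension $O(k\log 1/\eps)$ of unions of $O(k\log 1/\eps)$ intervals yields the $\wt{O}(k/\eps^2)$ sample bound), and a dynamic program over sample-anchored breakpoints to make the fit run in polynomial time. One bookkeeping point: driving each component's truncation-plus-Taylor error to $O(\eps/k)$ over a window of radius $O(\sqrt{\log(k/\eps)})$ costs degree $O(\log(k/\eps))$, not $O(\log 1/\eps)$ as written; this only matters when $k \gg 1/\eps$ and is a quantitative slip rather than a conceptual one. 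You are also right that the bulk of the technical work in \cite{chan2013efficient} sits in your Step 3 (discretizing breakpoints to samples without inflating the piece count, and the per-interval polynomial fit inside the dynamic program), which your sketch identifies as the hard part but leaves at the level of a plan.
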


For technical reasons, we will need a few simple post-processing steps after using Theorem \ref{thm:improper-GMMs}.  We can ensure that the output hypothesis $f'$ is always nonnegative by splitting each polynomial into positive and negative parts and zeroing out the negative parts (since this will not increase the $L^1$ error).  Finally, we can re-normalize so that the output $f'$ is actually a distribution.  This renormalization at most doubles the $L^1$ error.  Thus we have:
\begin{corollary}\label{coro:improper-learner}
Let $\mcl{M} = w_1G_1 + \dots + w_kG_k$ be an unknown mixture of Gaussians and $f$ a distribution such that $d_{\TV}(f, \mcl{M}) \leq \eps$.  There is an algorithm that, given $\wt{O}(k/\eps^2)$ samples from $\mcl{D}$, runs in $\poly(k/\eps)$ time and returns an $O(k \log 1/\eps)$-piecewise degree $O(\log 1/\eps)$ function $f'$ such that $f'$ is a distribution and with $0.9$ probability (over the random samples),
\[
d_{\TV}(f, f') \le O(\eps) \,.
\]
\end{corollary}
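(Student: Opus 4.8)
The plan is to start from the improper hypothesis $f'$ output by Theorem~\ref{thm:improper-GMMs} --- an $O(k)$-piecewise degree-$O(\log 1/\eps)$ function with $\norm{f' - f}_1 \le O(\eps)$, obtained with probability $0.9$ from $\wt O(k/\eps^2)$ samples in $\poly(k/\eps)$ time --- and to repair its only two defects: it may be negative somewhere, and it need not integrate to $1$. Both are fixed by cheap deterministic post-processing that blows up the number of pieces by at most an $O(\log 1/\eps)$ factor, preserves the degree, and loses only constant factors in the error; the sample complexity, runtime and success probability are then inherited verbatim.

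\textbf{Step 1: nonnegativity.} Set $f'' := \max(f', 0)$. If $f' = q_i$ on the interval $I_i$ with $\deg q_i \le d = O(\log 1/\eps)$, then $q_i$ has at most $d$ real roots, which I compute in $\poly(d)$ time and which split $I_i$ into $\le d+1$ subintervals on each of which $q_i$ has a constant sign; keeping $q_i$ on the nonnegative subintervals and the zero polynomial on the rest exhibits $f''$ as an $O(k)\cdot O(d) = O(k\log 1/\eps)$-piecewise degree-$d$ function. Since $f \ge 0$, pointwise $|f - f''| \le |f - f'|$ (equality where $f' \ge 0$; where $f' < 0$ the left side equals $f$ and the right side is $f - f' \ge f$), so $\norm{f'' - f}_1 \le \norm{f' - f}_1 \le O(\eps)$.

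\textbf{Step 2: renormalization.} Since $\norm{f' - f}_1 < \infty$ and $f$ is a distribution, the two unbounded end pieces of $f'$ (hence of $f''$) must be identically zero, so $Z := \int f''$ is a finite sum of integrals of polynomials over bounded intervals and can be evaluated exactly. From $\int f = 1$ and $f'' \ge 0$,
\[
|Z - 1| = \left| \int (f'' - f) \right| \le \norm{f'' - f}_1 \le O(\eps),
\]
so $Z \ge 1 - O(\eps) > 0$ once $\eps$ is below a small absolute constant (which we may assume). Then $f''' := f''/Z$ is a bona fide distribution with the same piecewise structure, and
\[
\norm{f''' - f}_1 \le \norm{f''' - f''}_1 + \norm{f'' - f}_1 = |1 - Z| + \norm{f'' - f}_1 \le O(\eps).
\]
Hence $d_{\TV}(f, f''') = \tfrac12 \norm{f - f'''}_1 \le O(\eps)$, and returning $f'''$ (renamed $f'$) gives the corollary.

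\textbf{Obstacle.} The argument is essentially a chain of triangle inequalities, so there is no serious obstacle; the one point needing mild care is that root-finding is only approximate, so the sign-constant subintervals of the $q_i$ are located only to within some precision $\delta$. Since each $q_i$ is bounded on the bounded pieces, misclassifying a window of width $\delta$ around a root perturbs $\norm{f'' - f}_1$ by $O(\delta)$, so choosing $\delta = \poly(\eps/k)$ keeps the extra error $o(\eps)$; equivalently one can decide the sign of $q_i$ on a sufficiently fine grid. Everything else is routine bookkeeping.
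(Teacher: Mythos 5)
Your proposal is correct and follows essentially the same route as the paper: take the output of Theorem~\ref{thm:improper-GMMs}, zero out the negative parts (splitting each piece at the $O(\log 1/\eps)$ real roots of its polynomial, giving the $O(k\log 1/\eps)$ piece count), and renormalize, with the error controlled by two triangle inequalities. Your extra remark about approximate root-finding is a reasonable bit of added care that the paper leaves implicit.
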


\subsection{Hypothesis testing for model selection}

We now show how our result for model order selection, Theorem \ref{coro:gmm-hypothesis-test}, follows immediately from combining Theorem \ref{thm:main-gmm} with a standard procedure for testing the TV-distance between two distributions from samples (see \cite{yatracos1985rates}). 
\begin{claim}\label{claim:testL1}
Let $\mcl{D}_1, \mcl{D}_2$ be two distributions for which we have explicitly computable  density functions.  Let $\eps, \tau > 0$ be parameters.  Assume that we are given $O(1/\eps^2 \cdot \log 1/\tau)$ samples from $\mcl{D}_1$ and can efficiently sample from $\mcl{D}_2$.  Then in $\poly(1/\eps \log 1/\tau)$ time, we can compute $d$ such that with probability $1 - \tau$,
\[
| d - d_{\TV}(\mcl{D}_1, \mcl{D}_2) | \leq \eps \,.
\]
\end{claim}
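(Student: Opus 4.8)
The plan is to reduce the claim to the classical Scheffé estimator / Yatracos-type argument for estimating total variation distance from samples. Recall that for any two distributions $\mcl{D}_1, \mcl{D}_2$ with densities $p_1, p_2$, we have $d_{\TV}(\mcl{D}_1, \mcl{D}_2) = \Pr_{X \sim \mcl{D}_1}[X \in A] - \Pr_{X \sim \mcl{D}_2}[X \in A]$, where $A = \{x : p_1(x) > p_2(x)\}$ is the Scheffé set. Since we have explicit density functions for both distributions, the set $A$ is explicitly computable: membership $x \in A$ can be decided by evaluating $p_1(x)$ and $p_2(x)$ and comparing. So the algorithm is: (i) draw $m = O(\epsilon^{-2} \log 1/\tau)$ samples $X_1, \dots, X_m$ from $\mcl{D}_1$ and let $\wh{q}_1 = \frac{1}{m}\sum_i \one[X_i \in A]$; (ii) draw $m$ samples $Y_1, \dots, Y_m$ from $\mcl{D}_2$ (possible since we can efficiently sample from $\mcl{D}_2$) and let $\wh{q}_2 = \frac{1}{m}\sum_i \one[Y_i \in A]$; (iii) output $d = \wh{q}_1 - \wh{q}_2$.

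First I would verify correctness of the target quantity: $\E[\wh{q}_1] = \Pr_{X\sim \mcl{D}_1}[X \in A] =: q_1$ and $\E[\wh{q}_2] = q_2$, and $q_1 - q_2 = \int_A (p_1 - p_2) = \int (p_1 - p_2)^+ = d_{\TV}(\mcl{D}_1, \mcl{D}_2)$, using that $A$ is exactly the positivity set of $p_1 - p_2$ and that $\int (p_1 - p_2) = 0$. Then I would apply a concentration bound: each $\wh{q}_j$ is an average of $m$ i.i.d.\ Bernoulli random variables, so by Hoeffding's inequality $\Pr[|\wh{q}_j - q_j| \geq \epsilon/2] \leq 2\exp(-m\epsilon^2/2)$. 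Choosing $m = O(\epsilon^{-2}\log 1/\tau)$ makes each of these at most $\tau/2$, and a union bound over $j \in \{1,2\}$ followed by the triangle inequality gives $|d - d_{\TV}(\mcl{D}_1,\mcl{D}_2)| = |(\wh{q}_1 - \wh{q}_2) - (q_1 - q_2)| \leq \epsilon$ with probability $1 - \tau$. The running time is $\poly(1/\epsilon \cdot \log 1/\tau)$ times the cost of one density evaluation and one sample from $\mcl{D}_2$, both of which are assumed efficient.

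There is essentially no serious obstacle here — the statement is a standard folklore fact (this is precisely the content of the reference to \cite{yatracos1985rates}). The one point requiring a small amount of care is the sampling access to $\mcl{D}_2$: the claim as stated only gives us the ability to \emph{sample} from $\mcl{D}_2$ and to evaluate its density, not, say, to integrate $p_2$ over $A$ in closed form; the estimator above is designed precisely to need only these two primitives. An alternative that avoids even sampling from $\mcl{D}_2$ would be to compute $\Pr_{X\sim\mcl{D}_2}[X\in A] = \int_A p_2$ by numerical integration, but this requires controlling the structure of $A$, so the sampling-based estimator is cleaner and is what I would present. The only remaining subtlety is a tie-breaking convention on the measure-zero set $\{p_1 = p_2\}$, which is immaterial.
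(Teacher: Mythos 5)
Your proof is correct and is exactly the Scheffé-set estimator that the paper is invoking by its citation to Yatracos~\cite{yatracos1985rates}; the paper itself states Claim~\ref{claim:testL1} without proof, pointing to that reference. The argument you give --- take $A=\{p_1>p_2\}$ (decidable via the two explicit densities), estimate $\Pr_{\mcl{D}_1}[A]$ and $\Pr_{\mcl{D}_2}[A]$ empirically, and apply Hoeffding with $m=O(\eps^{-2}\log 1/\tau)$ --- matches the intended standard proof.
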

\begin{proof}[Proof of Theorem \ref{coro:gmm-hypothesis-test}]
We can run the algorithm in Theorem \ref{thm:main-gmm} with parameters $k, \eps$ to obtain an output distribution $\wt{\mcl{M}}$ that is a mixture of $\wt{O}(k)$ Gaussians.  We can then use Claim \ref{claim:testL1} with parameters $\eps, 0.01$ to measure the TV-distance between $\wt{\mcl{M}}$ and $\mcl{D}$ (note that we have explicit access to the pdf of $\mcl{D}$) and output $1$ or $2$ depending on if our estimate of the TV distance is less than $\wt{O}(\eps)$.  Combining the guarantees of Theorem \ref{thm:main-gmm} and Claim \ref{claim:testL1} ensures that our output satisfies the desired properties.
\end{proof}

\subsection{Sparse Fourier}

We now briefly describe how our techniques can be used for sparse Fourier reconstruction.
Recall that the problem is to, given query access to a function $f$ on $[-1, 1]$ which is approximately $k$-Fourier sparse, approximate it with an $\wt{O}(k)$-Fourier sparse function. 
As before, we first abstract away the query access, by leveraging the following result from~\cite{chen2016fourier}:
\begin{theorem}[Theorem 1.1 in \cite{chen2016fourier}]\label{thm:sparse-fourier-old}
Let $f$ be a function defined on $[-1,1]$ and assume we are given query access to $f$.  Let $\mcl{M}$ be a function that is $(k,1)$-simple and has frequencies in the interval $[-F, F]$.  Then for any desired accuracy $\eps$, in $\poly(k, \log 1/\eps)  \log F  $ samples and $\poly(k, \log 1/\eps)  \log^2 F$ time, we can output a function $f'$ such that with probability $1 - 2^{-\Omega(k)}$,
\begin{enumerate}
    \item $f'$ is $(\poly(k, \log 1/\eps), \exp( \poly(k, \log 1/\eps)) )$- simple
    \item 
    \[
    \int_{-1}^1 | f' - f|^2 \leq O\left( \eps^2 + \int_{-1}^1 |f - \mcl{M}|^2 \right) \,.
    \]
\end{enumerate}
\end{theorem}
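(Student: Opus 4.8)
The plan is to recast the reconstruction task as an active $\ell_2$ regression over a low-dimensional surrogate function class, built from a Taylor-expansion structural lemma together with a frequency-localization scheme. \textbf{Ingredient (i): approximate dimension.} Given a $(k,1)$-simple $g$ with frequencies in $[-F,F]$, sort the frequencies and cut the sorted sequence at every gap exceeding $\tau := \Theta(1/k)$; this produces at most $k$ ``clusters,'' each of total width at most $1$. Writing a cluster with center $\theta_0$ as $e^{2\pi i \theta_0 t}\sum_j a_j e^{2\pi i(\theta_j-\theta_0)t}$ and Taylor-truncating each inner factor at degree $M = O(\log(1/\eps))$ incurs error at most $\eps \sum_j |a_j|$, since $|(\theta_j - \theta_0) t| \le 1$ on $[-1,1]$. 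Summing over clusters, $g$ is $\eps$-close in $L^2[-1,1]$ to a member of $\mathrm{span}\{\, t^m e^{2\pi i \theta_0 t} : 0 \le m \le M,\ \theta_0 \text{ a cluster center} \,\}$, of dimension $O(k\log(1/\eps))$ once the centers are fixed.

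\textbf{Ingredient (ii): making the surrogate explicit.} The centers are not known in advance, so one localizes in frequency: multiply $f$ by band-limiting filters (e.g.\ Gaussians) to isolate its energy within dyadic frequency windows, recursing over the $O(\log F)$ scales from width $F$ down to width $O(1)$, using heavy-hitter-style arguments to find which windows carry energy. At the bottom level each relevant window has width $O(1)$ and a known center $\theta_0$; any cluster inside it is within $O(1)$ of $\theta_0$, so by (i) the local signal is, up to $\eps$, a degree-$O(\log(1/\eps))$ polynomial times the fixed exponential $e^{2\pi i \theta_0 t}$ --- no discretization of the center is needed, the polynomial degree absorbs the within-window uncertainty. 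Recursing over $O(\log F)$ levels yields the $\log F$ factor in the query bound and the $\log^2 F$ factor in the runtime; accumulating the per-level approximation errors and coefficient blow-ups is what degrades the output coefficient bound to $\exp(\poly(k,\log(1/\eps)))$ and leaves $f'$ only $\poly(k,\log(1/\eps))$-Fourier-sparse rather than exactly $k$-sparse.

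\textbf{Ingredient (iii): sampling and regression.} Once the relevant class sits within $\eps$ (in $L^2[-1,1]$) of an explicit space $V$ of dimension $d = \poly(k,\log(1/\eps))$, apply a standard active $\ell_2$-regression result of the type underlying leverage-score sampling: drawing $O(d \log d + d/\eps)$ points of $[-1,1]$ from the leverage-score distribution of $V$ gives a reweighted point set on which the empirical norm $(1 \pm \tfrac{1}{2})$-approximates $\|\cdot\|_{L^2[-1,1]}$ over $V$. Solving the corresponding weighted least-squares fit of a function in $V$ to the queried values of $f$ returns $f'$ with $\int_{-1}^1 |f' - f|^2 = O\!\left(\eps^2 + \int_{-1}^1 |f - \mcl{M}|^2\right)$, because $\mcl{M}$ itself has an $\eps$-accurate proxy in $V$ and $f'$ is the (near-)best fit in $V$.

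\textbf{Main obstacle.} The heart of the matter is the absence of a frequency gap: the clustering-at-threshold step needs a scale at which the true frequencies genuinely break into width-$O(1)$ groups, and the dyadic filters must isolate one cluster per window with leakage below $\eps$ despite having no gap to exploit. Keeping the filters' time-support (hence the query count) polynomial while driving their leakage that small, and then controlling how errors and coefficient magnitudes compound through the $O(\log F)$-level recursion, is where the real work lies --- and it is precisely the $\poly(k,\log(1/\eps))$-term output of this procedure that the rest of the present paper post-processes down to $\wt{O}(k)$ terms.
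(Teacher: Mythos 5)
This statement is not proved in the paper at all: it is cited verbatim as Theorem~1.1 of Chen, Kane, Price and Song~\cite{chen2016fourier} and used as a black box (the paper even says so explicitly just before Section~2.1.1, and the only addition is the remark that the coefficient bound ``immediately follows from the proof'' in that reference). There is therefore no proof of this statement in the present paper for your proposal to be compared against.

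Taken on its own terms as a reconstruction of the external proof, your sketch captures the broad architecture of~\cite{chen2016fourier} --- polynomial approximation of frequency clusters, frequency localization by filtering, and regression over an explicit low-dimensional space --- but it is a plan, not a proof, and you say so yourself: the filter design, the hierarchical heavy-hitter recursion, the control of filter leakage with no frequency gap, and the error/coefficient compounding through $O(\log F)$ levels are precisely the parts you flag as ``where the real work lies'' and then do not carry out. One substantive mismatch is also worth flagging in Ingredient~(i): your Taylor-truncation argument with error $\eps\sum_j|a_j|$ is sound only because the theorem \emph{as restated here} assumes $(k,1)$-simplicity, i.e.\ $\sum_j|a_j|\le 1$. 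Chen et al.'s actual theorem does not assume any bound on $\sum_j|a_j|$ --- only on $\int_{-1}^1|\mcl{M}|^2$ --- and under their hypotheses the coefficients can be exponentially large with cancellation, so naive Taylor truncation gives nothing. Their one-cluster polynomial-approximation lemma is proved by a much more delicate linear-algebraic argument and produces degree roughly $k^2\log(k/\eps)$, not $k\log(1/\eps)$. In other words, your Ingredient~(i) is a simpler argument that happens to work for the weaker, restated theorem in this paper (and, not coincidentally, it mirrors the paper's own Lemma~\ref{lem:Fourier-existence-simplecase} for the well-conditioned case), but it is not the argument that underlies the cited result, and it does not on its own explain the $\poly(k,\log 1/\eps)$-sparsity and $\exp(\poly(k,\log 1/\eps))$ coefficient bound that appear in the statement.
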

\begin{remark}
While the bound on the coefficients of $f'$ is not explicitly stated in Theorem 1.1 in \cite{chen2016fourier}, it immediately follows from the proof.
\end{remark}

Our algorithm for postprocessing this into a $\wt{O}(k)$-Fourier sparse signal follows roughly the same steps as in the Gaussian case.
First, we show that in a certain ``well-conditioned'' regime, namely, when the frequencies are not too dissimilar, there is a signal using $O(\log 1 / \eps)$ frequencies which approximates the function. 
To handle the general case, we use localizations based on carefully chosen kernels to reduce every signal to a sum of well-conditioned signals (at least, approximately).

One important distinction between the GMM and sparse Fourier reconstruction setting we highlight is that in the latter, the goal is usually to have runtimes which scale logarithmically with $1 / \eps$, whereas in the GMM setting, $\poly (1 / \eps)$ sample complexity and thus runtime is unavoidable.
However, our naive method of solving the well-conditioned case required constructing a net of $\poly (1 / \eps)$ many Gaussians, and thus required $\poly (1 / \eps)$ runtime.
To circumvent this difficulty, we demonstrate that in fact this can be improved, and that by being more careful, and choosing the (much smaller) set of vertices based on the Chebyshev points, we can in fact improve this runtime significantly.
See Sections~\ref{sec:fourier-well-conditioned} and~\ref{sec:sparse-fourier-full} for a full treatment of our algorithm.

\subsection{Paper organization}

The remainder of the paper will be devoted to proving Theorem~\ref{thm:main-gmm}, our main result for GMMs and Theorem~\ref{thm:fourier-main}, our main result for sparse Fourier reconstruction.  Due to space constraints, the remaining parts are deferred to the appendix.  We first present the proof of our result for GMMs.  In Section~\ref{sec:GMM-well-conditioned}, we deal with the well-conditioned case.  In Section~\ref{sec:approx-with-Gaussians}, we present some tools for localization which we will then use in Section~\ref{sec:GMM-full} to prove our full result for GMMs.  We then present the proof of our main result for sparse Fourier reconstruction which follows a very similar outline. We deal with the well-conditioned case in Section \ref{sec:fourier-well-conditioned} and then the general case in Section~\ref{sec:sparse-fourier-full}.  Appendix \ref{appendix:basic} contains several basic tools that will be used throughout the paper.


\begin{ack}
AL was supported in part by an NSF Graduate Research Fellowship and a Fannie and John Hertz Foundation Fellowship.  AM was supported in part by a Microsoft Trustworthy AI Grant, NSF CAREER Award CCF-1453261, NSF Large CCF1565235, a David and Lucile Packard Fellowship and an ONR Young Investigator Award.
\end{ack}

\section{Ethics and broader impact}\label{sec:ethics}
Our work is purely theoretical and we do not think there are any ethical issues or potential negative societal impacts.

\bibliographystyle{plainnat}
\bibliography{bibliography}

\section*{Checklist}

\begin{enumerate}

\item For all authors...
\begin{enumerate}
  \item Do the main claims made in the abstract and introduction accurately reflect the paper's contributions and scope?
    \answerYes{}
  \item Did you describe the limitations of your work?
    \answerYes{}
  \item Did you discuss any potential negative societal impacts of your work?
    \answerYes{}
  \item Have you read the ethics review guidelines and ensured that your paper conforms to them?
    \answerYes{}
\end{enumerate}

\item If you are including theoretical results...
\begin{enumerate}
  \item Did you state the full set of assumptions of all theoretical results?
    \answerYes{}
        \item Did you include complete proofs of all theoretical results?
    \answerYes{In the appendix}
\end{enumerate}

\item If you ran experiments...
\begin{enumerate}
  \item Did you include the code, data, and instructions needed to reproduce the main experimental results (either in the supplemental material or as a URL)?
    \answerNA{}
  \item Did you specify all the training details (e.g., data splits, hyperparameters, how they were chosen)?
    \answerNA{}
        \item Did you report error bars (e.g., with respect to the random seed after running experiments multiple times)?
    \answerNA{}
        \item Did you include the total amount of compute and the type of resources used (e.g., type of GPUs, internal cluster, or cloud provider)?
    \answerNA{}
\end{enumerate}

\item If you are using existing assets (e.g., code, data, models) or curating/releasing new assets...
\begin{enumerate}
  \item If your work uses existing assets, did you cite the creators?
    \answerNA{}
  \item Did you mention the license of the assets?
    \answerNA{}
  \item Did you include any new assets either in the supplemental material or as a URL?
    \answerNA{}
  \item Did you discuss whether and how consent was obtained from people whose data you're using/curating?
    \answerNA{}
  \item Did you discuss whether the data you are using/curating contains personally identifiable information or offensive content?
    \answerNA{}
\end{enumerate}

\item If you used crowdsourcing or conducted research with human subjects...
\begin{enumerate}
  \item Did you include the full text of instructions given to participants and screenshots, if applicable?
    \answerNA{}
  \item Did you describe any potential participant risks, with links to Institutional Review Board (IRB) approvals, if applicable?
    \answerNA{}
  \item Did you include the estimated hourly wage paid to participants and the total amount spent on participant compensation?
    \answerNA{}
\end{enumerate}

\end{enumerate}

\newpage 

\textbf{\large{Supplementary Materials}}
\appendix
\section{Well-Conditioned Case: Learning GMMs}\label{sec:GMM-well-conditioned}

We now deal with learning well-conditioned GMMs. We begin by formally specifying the properties that we want the components of the mixture to have.  Roughly, we want the components to have comparable variances and the separation between their means cannot be too large compared to the variances.  This means that after applying a suitable linear transformation, the components are all not too far from the standard Gaussian $N(0,1)$.

\begin{definition}
We say a Gaussian where $G = N(\mu, \sigma^2)$ is $\delta$-well-conditioned if 
\begin{itemize}
    \item $\abs{\mu} \leq \delta$
    \item $|\sigma^2  - 1| \leq \delta$
\end{itemize}
We say a mixture of Gaussians $\mcl{M} = w_1 G_1 + \dots + w_kG_k$  is $\delta$-well-conditioned if all of the components $G_1, \dots , G_k$ are $\delta$-well-conditioned.
\end{definition}

We now state our learning result for well-conditioned mixtures.

\begin{lemma}\label{lem:learn-tight-mixture}
Let $\eps > 0$ be a parameter.  Assume we are given access to a distribution $f$ such that $d_{\TV}(f , \mcl{M}) \leq \eps$ where $\mcl{M} = w_1G_1 + \dots + w_kG_k$ is a $0.5$-well-conditioned mixture of Gaussians.  Then we can compute, in $\poly(1/\eps)$ time, a mixture $\wt{\mcl{M}}$ of at most  $O(\log 1/\eps)$ Gaussians such that $d_{\TV}(\mcl{M}, \wt{\mcl{M}}) \leq \wt{O}(\eps)$.
\end{lemma}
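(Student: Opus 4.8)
The plan is to mimic the structure of the Fourier case: first establish the existence of an $O(\log 1/\eps)$-component mixture $\wt{\mcl{M}}$ that is $\wt{O}(\eps)$-close to $\mcl{M}$ (hence to $f$), and then invoke a linear-regression / convex-hull routine (analogous to Lemma \ref{lem:linear-regression} used in Corollary \ref{coro:fourier-simplecase}) to actually compute it from explicit access to $f$. For the existence step, I would work in the ``embedding'' framework sketched in Section 1.2.1. Because every $G_i = N(\mu_i,\sigma_i^2)$ is $0.5$-well-conditioned, after factoring out a common Gaussian envelope — say write $G_i(x) = N(0,1)(x)\cdot p_i(x)$ where $p_i(x) = \frac{1}{\sigma_i}\exp\!\big(\text{quadratic in }x\big)$ — each $G_i$ is a Gaussian times an analytic function whose Taylor coefficients decay fast on the relevant scale. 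More directly: I would argue that each well-conditioned $G_i$ can be approximated in $L^1$ to error $\eps^{10}$ (say) by $N(0,1)$ times a polynomial of degree $O(\log 1/\eps)$, because the log-density of $G_i$ differs from that of $N(0,1)$ by a quadratic with $O(1)$-bounded coefficients, and $\exp$ of such a quadratic is well-approximated on the bulk $|x| \le O(\sqrt{\log 1/\eps})$ by its degree-$O(\log 1/\eps)$ Taylor polynomial, while the tails $|x| > O(\sqrt{\log 1/\eps})$ contribute at most $\eps^{10}$ under the $N(0,1)$ weight. Thus each $G_i$ — and hence $\mcl{M} = \sum_i w_i G_i$ — lies within $\eps^{10}$ in $L^1$ of the space $V = \{ N(0,1)(x) q(x) : \deg q \le C\log 1/\eps\}$, which is $O(\log 1/\eps)$-dimensional.

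Now $\mcl{M}$ is (up to $\eps^{10}$) a convex combination of the functions $\{G_i\}$, each of which is within $\eps^{10}$ of $V$, so $\mcl{M}$ is within $2\eps^{10}$ of a point in the convex hull of the projections $\{\overline{G_i}\} \subset V$. By Caratheodory's theorem (this should be the role of Corollary \ref{coro:convexhull} or a variant), that point is a convex combination of at most $\dim V + 1 = O(\log 1/\eps)$ of the $\overline{G_i}$'s; lifting back, there is a mixture of $O(\log 1/\eps)$ of the original Gaussians $G_i$ whose $L^1$ distance to $\mcl{M}$ is $O(\eps^{10})$. This already gives the existence of a sparse $\wt{\mcl{M}}$ with $d_{\TV}(\mcl{M},\wt{\mcl{M}}) = O(\eps^{10})$, and in fact a cleaner statement: it suffices to search over mixtures of a small \emph{fixed} family of Gaussians (e.g. means on a fine grid in $[-0.5,0.5]$ and variances on a fine grid in $[0.5,1.5]$), since the embedded vectors $\overline{G_i}$ depend continuously on $(\mu_i,\sigma_i^2)$ and discretizing to accuracy $\poly(\eps)$ perturbs everything by $\poly(\eps)$.

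For the algorithmic step, having reduced to ``find a sparse convex combination of a fixed polynomial-size dictionary of Gaussians that is $\wt{O}(\eps)$-close to $f$ in $L^1$'', I would apply the regression primitive (Lemma \ref{lem:linear-regression}) with $\mcl{D}$ chosen appropriately — the subtlety being that we want $L^1$/TV error rather than $L^2$. I would handle this either by running the $L^1$-regression form of the primitive directly, or by passing through $L^2$ against a reweighting that dominates both $f$ and the dictionary elements (all Gaussians here are comparable to $N(0,1)$ on the bulk, so a fixed weight like $N(0,2)^{-1}$ works up to controllable tail terms), then converting back; the $O(\log 1/\eps)$-sparsity of the optimal solution lets us conclude the output is $O(\log 1/\eps)$-sparse. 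Running time is $\poly(1/\eps)$ since the dictionary has $\poly(1/\eps)$ elements and each regression is polynomial in the dictionary size.

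The main obstacle I expect is the careful handling of the $L^1$/TV metric rather than $L^2$ in both halves. In the existence step, truncating the Taylor series of $\exp(\text{quadratic})$ is naturally an $L^\infty$-on-the-bulk statement, and converting it to an $L^1$ bound on all of $\R$ requires a genuine (if routine) tail estimate using that the discarded mass sits under a Gaussian; one must also be careful that the polynomial factor $q$ does not blow up the tail integral. In the algorithmic step, $L^1$ regression over a dictionary, with the guarantee that sparsity of the optimum transfers to the output, is exactly the kind of thing Lemma \ref{lem:linear-regression} is presumably built to do, but matching its hypotheses (nonnegativity of coefficients, boundedness, the reference distribution $\mcl{D}$) to the Gaussian setting will need the reweighting trick above and a short argument that the reweighted $L^2$ error controls the TV error up to $\wt{O}(1)$ factors.
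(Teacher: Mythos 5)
Your existence argument --- Taylor-expand each well-conditioned $G_i$, embed into a low-dimensional coefficient space, apply Caratheodory --- is the same conceptual idea as the paper's, which combines Lemma \ref{lem:Gaussian-taylorseries}/Corollary \ref{coro:gaussian-taylorseries} with Caratheodory on the vectors of Taylor coefficients. (Your factored form $N(0,1)\cdot q$ would in fact yield a cleaner degree bound of $O(\log 1/\eps)$, versus the paper's $(10l)^2 + 1 = O((\log 1/\eps)^2)$ coming from the parameterization of Lemma \ref{lem:Gaussian-taylorseries}, modulo the tail control you flag.)

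The genuine gap is the final step of your algorithmic half, where you write that ``the $O(\log 1/\eps)$-sparsity of the optimal solution lets us conclude the output is $O(\log 1/\eps)$-sparse.'' This does not follow. Lemma \ref{lem:linear-regression} solves a simplex-constrained least-squares problem over the \emph{full} dictionary and returns nonnegative weights $b_1,\dots,b_n$; nothing in its statement or proof forces the output to be sparse, even when a sparse near-optimum exists. The fix --- and what the paper actually does --- is to run the regression first to obtain a possibly-dense convex combination, and only then apply Caratheodory \emph{constructively} to sparsify it. Concretely, Lemma \ref{lem:gaussian-learnmanycomponents} produces a mixture $\sum_j \wt{w_j}\wt{G_j}$ over a grid $\mcl{T}$ of size $O(1/\eps^2)$ with $\sum_j \wt{w_j} \approx 1$; the proof of Lemma \ref{lem:learn-tight-mixture} then maps each $\wt{G_j}$ to the vector $v_j$ of Taylor coefficients of its polynomial approximant, observes that $\sum_j \wt{w_j}v_j$ lies in the convex hull of $\{v_j\}$, explicitly computes $(10l)^2+1$ indices and new nonnegative weights realizing the same point, and lifts back to a mixture of $O((\log 1/\eps)^2)$ of the grid Gaussians, which is still $\wt{O}(\eps)$-close in $L^1$ because the Taylor polynomials track the Gaussians on the relevant range. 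You have all the ingredients, but your proposal omits this explicit post-regression sparsification pass and so would not output a sparse mixture.

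On the $L^1$-versus-$L^2$ issue: the paper's Lemma \ref{lem:gaussian-learnmanycomponents} handles it by passing to Fourier space. Since $\norm{\wh{f} - \wh{\mcl{M}}}_\infty \le \norm{f - \mcl{M}}_1$ and the well-conditioned Gaussians and their Fourier transforms decay rapidly outside $[-l,l]$, the regression can be run in $L^2$ over the bounded Fourier interval $[-l,l]$ and converted back via Cauchy--Schwarz and tail decay. This avoids the reweighting trick you sketch and sidesteps the worry about the polynomial factor $q$ blowing up the tail integral; I'd recommend adopting it.
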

\begin{remark}
Note that in the well-conditioned case, the number of components in the mixture that we compute does not depend on $k$.
\end{remark}

Our algorithm for proving Lemma \ref{lem:learn-tight-mixture} can be broken down into two parts.  In the first part, we find a mixture of $\poly(1/\eps)$ Gaussians that approximates $f$.  We then show how to reduce this mixture of $\poly(1/\eps)$ Gaussians to $O(\log 1/\eps)$ Gaussians by using the Taylor series approximation to a Gaussian.

\begin{lemma}\label{lem:gaussian-learnmanycomponents}
Let $\eps > 0$ be a parameter.  Assume we are given access to a distribution $f$ such that $d_{\TV}(f , \mcl{M}) \leq \eps$ where $\mcl{M} = w_1G_1 + \dots + w_kG_k$ is a $0.5$-well-conditioned mixture of Gaussians.  Then we can compute, in $\poly(1/\eps)$ time, a mixture of at most  $O(1/\eps^2)$ Gaussians that is $\wt{O}(\eps)$-close to $\mcl{M}$ in TV distance.
\end{lemma}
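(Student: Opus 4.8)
The plan is to show that a $0.5$-well-conditioned mixture is, constructively, close to a mixture over a fixed $\poly(1/\eps)$-size grid of candidate Gaussians, and then to sparsify that mixture down to $O(1/\eps^2)$ components by random sampling. First I would reduce to explicit access to $f$: running an improper learner (e.g.\ Theorem~\ref{thm:improper-GMMs} and its post-processing) we may assume $f$ is an explicitly given piecewise-polynomial density with $d_{\TV}(f,\mcl{M})\le O(\eps)$, and after capping its value at a $\poly(\log 1/\eps)$ threshold and renormalizing (which costs only $O(\eps)$ in TV) we may assume $\norm{f}_\infty=O(1)$. Because $\mcl{M}$ is $0.5$-well-conditioned, each component, hence $\mcl{M}$, hence $f$, places at most $\eps$ of its mass outside $I:=[-C,C]$ with $C=\Theta(\sqrt{\log 1/\eps})$, so it suffices to fit $f$ on $I$. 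For the structural ingredient, let $\mcl{G}$ be the set of Gaussians $N(\mu,\sigma^2)$ with $\mu$ on a grid of spacing $\poly(\eps)$ in $[-1/2,1/2]$ and $\sigma^2$ on a grid of spacing $\poly(\eps)$ in $[1/2,3/2]$, so $|\mcl{G}|=\poly(1/\eps)$; using the standard estimate $d_{\TV}(N(\mu,\sigma^2),N(\mu',\sigma'^2))=O(|\mu-\mu'|+|\sigma^2-\sigma'^2|)$ in this range, rounding each $G_i$ to its nearest point $\wh{G_i}\in\mcl{G}$ gives $d_{\TV}(G_i,\wh{G_i})\le\eps$, so the mixture $\nu:=\sum_i w_i\wh{G_i}$, supported on $\mcl{G}$, satisfies $d_{\TV}(\mcl{M},\nu)\le\eps$ and hence $d_{\TV}(f,\nu)\le O(\eps)$.

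The algorithm then finds some mixture over $\mcl{G}$ that is $O(\eps)$-close to $f$ by linear programming. Discretize $I$ into $M$ equal sub-intervals with midpoints $x_1,\dots,x_M$ and solve
\[
\min\ \tfrac{|I|}{M}\sum_{j}(u_j+v_j)\quad\text{s.t.}\quad \sum_{g\in\mcl{G}}\alpha_g\,g(x_j)+u_j-v_j=f(x_j)\ \ \forall j,\qquad \sum_{g}\alpha_g=1,\qquad \alpha,u,v\ge 0 ,
\]
whose optimum equals $\sum_j(|I|/M)\,|f(x_j)-\sum_g\alpha_g g(x_j)|$. The point that makes this faithful is that every convex mixture of Gaussians has total variation $\int_{\R}\big|(\sum_g\alpha_g g)'\big|\le\sum_g\alpha_g\int_{\R}|g'|=O(1)$ as a function on $\R$, while $f$ (piecewise polynomial with $\poly(1/\eps)$ pieces, degree $O(\log 1/\eps)$, and $\norm{f}_\infty=O(1)$) has total variation $\poly(1/\eps)$ on $I$; so for a suitable $M=\poly(1/\eps)$ the midpoint sum approximates $\int_I|f-\sum_g\alpha_g g|$ to within $\eps$ simultaneously for all feasible $\alpha$, and since all of these put $\le\eps$ mass outside $I$, the LP value is $\le O(\eps)$ (certified by $\nu$) and its optimal solution $\mu:=\sum_g\alpha_g g$ has $d_{\TV}(f,\mu)\le O(\eps)$. (A basic feasible solution already has at most $M+1$ nonzero $\alpha_g$, so with a little more bookkeeping one can even take $M=O(1/\eps^2)$ here; the sampling step below gives the cleaner bound in any case.)

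Finally I would sparsify $\mu$ to $O(1/\eps^2)$ components by an approximate-Carathéodory/sampling step: draw $m=O(1/\eps^2)$ components i.i.d.\ from the weight distribution $(\alpha_g)$ and output their average $\wt{\mcl{M}}$. Since $\norm{g}_{L^2(\R)}^2=O(1)$ for every $g\in\mcl{G}$, we have $\E\,\norm{\mu-\wt{\mcl{M}}}_{L^2(\R)}^2=O(1/m)$, so with constant probability $\norm{\mu-\wt{\mcl{M}}}_{L^2(\R)}=O(\eps)$; passing to $L^1$ on $I$ by Cauchy--Schwarz costs only a $\sqrt{2C}=\wt O(1)$ factor, and both $\mu$ and $\wt{\mcl{M}}$ are mixtures of well-conditioned Gaussians so contribute $\le 2\eps$ outside $I$, giving $d_{\TV}(\mu,\wt{\mcl{M}})=\wt O(\eps)$. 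As TV distance between two explicit mixtures is computable, we repeat a few times and keep a good draw. Then $\wt{\mcl{M}}$ is a mixture of $O(1/\eps^2)$ Gaussians with $d_{\TV}(\mcl{M},\wt{\mcl{M}})\le d_{\TV}(\mcl{M},f)+d_{\TV}(f,\mu)+d_{\TV}(\mu,\wt{\mcl{M}})=\wt O(\eps)$, as desired.

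I expect the algorithmic step --- honestly encoding the $L^1$/TV fitting objective as a finite LP --- to be the main obstacle. The crux is the uniform $O(1)$ total-variation bound for all mixtures over $\mcl{G}$, which is what lets a single spatial discretization certify the objective for every feasible point at once, together with ensuring $f$ itself has controlled total variation on $I$ (hence the pass through the piecewise-polynomial improper learner and the sup-norm cap). A secondary point is that the fitting step must be done in $L^1$ rather than $L^2$: since $f$ is only TV-close (not $L^2$-close) to $\mcl{M}$, an $L^2$-based convex program would only give error $\sqrt{\eps}$; $L^2$ enters only in the final sparsification, where both objects are mixtures of well-conditioned Gaussians and the $L^1\leftrightarrow L^2$ conversion is cheap.
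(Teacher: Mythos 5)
Your proof is correct in substance but takes a genuinely different route from the paper's, and it is worth noting what each approach buys. The paper moves to the Fourier domain: since $\|f-\mcl{M}\|_1\le 2\eps$ implies a pointwise bound $\|\wh f-\wh{\mcl{M}}\|_\infty\le 2\eps$, and since well‑conditioned Gaussians have Fourier transforms concentrated on $[-l,l]$ with $l=\lceil\log 1/\eps\rceil$, they can run a cheap $L^2$ regression (Lemma~\ref{lem:linear-regression}) on $[-l,l]$ with $L^2$ error already at the $O(l\eps^2)$ level, then convert back to a spatial $L^1$ bound via Cauchy–Schwarz and Gaussian decay; no LP and no discretization argument are needed, and the number of components is exactly the size of the fixed grid $\mcl{T}=O(1/\eps^2)$. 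Your version stays in the spatial domain and honestly fits $f$ in $L^1$ by an LP over a spatial grid, controlling the discretization error uniformly over all feasible $\alpha$ by the observation that every mixture over $\mcl{G}$ has total variation $O(1)$; you then sparsify by sampling. This is a sound and fairly robust argument (the uniform TV bound plus midpoint-rule error $\le |I|\cdot\mathrm{TV}/M$ is exactly the right certificate), and it sidesteps the Fourier machinery entirely. Your side-note about why an $L^2$ spatial fit would lose a $\sqrt{\eps}$ is also the right concern, and it is precisely why the paper detours through Fourier, where $L^1\to L^\infty$ upgrades the hypothesis before $L^2$ enters.

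Two small slips to flag, neither fatal. First, capping $f$ at a $\poly(\log 1/\eps)$ threshold does \emph{not} cost only $O(\eps)$ in TV: with $\|\mcl{M}\|_\infty=O(1)$, the mass of $f$ above level $C$ is $\le 2\eps+\|\mcl{M}\|_\infty/C$, so you need $C=\Theta(1/\eps)$ (not polylogarithmic), and consequently $\|f\|_\infty=O(1/\eps)$, not $O(1)$; the ensuing TV bound on $f$ and the choice of $M$ pick up only an extra $\poly(1/\eps)$, so the argument survives, but the constants as written are wrong. Second, as stated the runtime is governed by the number of pieces in the improper-learning output, which is $O(k\log 1/\eps)$; the paper's $L^2$ regression has the same dependence through the integrals defining the regression matrix, so this is a shared looseness in the ``$\poly(1/\eps)$ time'' phrasing rather than a defect specific to your proof. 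Also, with grid spacing $\Theta(\eps)$ one already has $|\mcl{G}|=O(1/\eps^2)$, so the final sampling step is harmless but unnecessary; if you prefer to keep the spacing smaller, the sampling (or a basic feasible solution of the LP) does the job as you indicate.
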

\begin{proof}
First, let $\mcl{T}$ be the set of all $0.5$-well-conditioned Gaussians such that $\mu$ and $\sigma^2$ are integer multiples of $0.1\eps$.  Note $|\mcl{T}| = O(1/\eps^2)$.
\\\\
By rounding all of the Gaussians $G_1, \dots , G_k$ to the nearest element of $\mcl{T}$ (this increases our $L^1$ error by at most $\eps$), we may assume that all of the components $G_1, \dots , G_k$ are actually in $\mcl{T}$.  Now note that since $\norm{f - w_1G_1 - \dots - w_kG_k}_1 \leq 2\eps$, we have for all $x$,
\begin{equation}\label{eq:L1bound}
|\wh{f}(x) - w_1\wh{G_1}(x) - \dots - w_k\wh{G_k}(x) | \leq 2\eps 
\end{equation}
where $\wh{G_j}$ denotes taking the Fourier transform of the pdf of the Gaussian $G_j$.  Let $l = \lceil \log 1/\eps \rceil $.  We now have, 
\[
\int_{-l}^l |\wh{f}(x) - w_1\wh{G_1}(x) - \dots - w_k\wh{G_k}(x) |^2  dx  \leq O(l \eps^2) \,.
\]
Now let all of the Gaussians in $\mcl{T}$ be $G_1, \dots , G_m$ where $m = |\mcl{T}|$.  By Lemma \ref{lem:linear-regression} (and splitting into real and imaginary parts), we can compute in $\poly(1/\eps)$ time, nonnegative weights $\wt{w_1}, \dots , \wt{w_{m}}$ with $\wt{w_1} +  \dots + \wt{w_{m}} \leq 1 $ such that 
\[
\int_{-l}^l |\wh{f}(x) - \wt{w_1}\wh{G_1}(x) - \dots - \wt{w_m}\wh{G_m}(x) |^2  dx  \leq O(l \eps^2) 
\]
which by Cauchy Schwarz implies that 
\[
\int_{-l}^l |\wh{f}(x) - \wt{w_1}\wh{G_1}(x) - \dots - \wt{w_m}\wh{G_m}(x) | dx  \leq O(l\eps) \,.
\]
Now note that since all of the Gaussians $G_1, \dots , G_m$ are $0.5$-well-conditioned, their Fourier transforms $\wh{G_j}$ also decay rapidly away from $[-l,l]$ so combining the above with (\ref{eq:L1bound}), we deduce that
\[
\int_{-\infty}^{\infty} |(\wt{w_1}\wh{G_1}(x) - \dots - \wt{w_m}\wh{G_m}(x) ) - w_1\wh{G_1}(x) - \dots - w_k\wh{G_k}(x) | \leq O(l\eps) \,.
\]
From the Fourier transform of the above we then get for all $x$
\[
|\wt{w_1}G_1(x) + \dots  +\wt{w_m}G_m(x) - w_1G_1(x) - \dots - w_mG_m(x)| \leq O(l\eps)
\]
and since all of the Gaussians involved are $0.5$-well-conditioned, they all decay rapidly outside the interval $[-l,l]$ and we conclude
\[
\int_{-\infty}^{\infty} |\wt{w_1}G_1(x) + \dots  +\wt{w_m}G_m(x) - w_1G_1(x) - \dots - w_mG_m(x)| dx \leq O(l^2 \eps) \,.
\]
Finally, note that by the above, we must have $1 - O(l^2 \eps) \leq \wt{w_1} + \dots + \wt{w_m} \leq 1 + O(l^2 \eps)$ so rescaling to an actual mixture i.e. so that the weights $\wt{w_1} + \dots  +\wt{w_m} = 1$, will affect the above error by at most $O(l^2 \eps)$. Thus, we can output this mixture and we are done.
\end{proof}

Next, as an immediate consequence of Lemma \ref{lem:Gaussian-taylorseries}, a $0.5$-well-conditioned Gaussian can be well approximated by its Taylor expansion.

\begin{corollary}\label{coro:gaussian-taylorseries}
Let $G = N(\mu, \sigma^2)$ be a $0.5$-well-conditioned Gaussian.  Let $\eps > 0$ be some parameter and let $l = \lceil \log 1/\eps \rceil$.  Then we can compute a polynomial $P_G(x)$ of degree $(10l)^2$ such that for all $x \in [-l, l]$, 
\[
|G(x) - P_G(x)| \leq O(\eps) \,. 
\]
\end{corollary}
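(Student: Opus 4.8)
The plan is to take $P_G$ to be a truncated Taylor expansion of the Gaussian density. Write $G(x) = \frac{1}{\sqrt{2\pi\sigma^2}}\exp\!\big(-\tfrac{(x-\mu)^2}{2\sigma^2}\big)$ and expand $e^{-y} = \sum_{j\ge 0}\frac{(-y)^j}{j!}$ with $y = \tfrac{(x-\mu)^2}{2\sigma^2}$. Setting $N = (10l)^2/2 = 50l^2$, I would define
\[
P_G(x) \;=\; \frac{1}{\sqrt{2\pi\sigma^2}}\sum_{j=0}^{N}\frac{1}{j!}\left(-\frac{(x-\mu)^2}{2\sigma^2}\right)^{j},
\]
which is a polynomial in $x$ of degree $2N = (10l)^2$, with coefficients computable in $\poly(l)$ time. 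The role of $0.5$-well-conditioning is precisely to control the relevant ranges: $|\mu|\le \tfrac12$ and $\tfrac12 \le \sigma^2 \le \tfrac32$, so the prefactor $\frac{1}{\sqrt{2\pi\sigma^2}}$ is $\Theta(1)$, and for $x\in[-l,l]$ we have $0 \le y \le \frac{(l+1/2)^2}{2\cdot(1/2)} = (l+\tfrac12)^2 \le 3l^2$ (using $l\ge 1$).

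The crux is bounding the Taylor remainder. On $[-l,l]$,
\[
|G(x)-P_G(x)| \;\le\; O(1)\cdot\sum_{j>N}\frac{y^{j}}{j!}\,.
\]
Using $j!\ge (j/e)^j$, each term with $j\ge N$ is at most $(ey/j)^j \le (ey/N)^j \le (3e l^2/(50l^2))^j < (1/5)^j$, so the remainder is bounded by a geometric tail of ratio $<1/5$, hence by $2\cdot 5^{-50l^2}$. Since $l\ge\lceil\log 1/\eps\rceil$ we have $2^{-l}\le\eps$, and $2\cdot 5^{-50l^2}\le 2^{1-100l^2}\le 2^{-l}\le\eps$ for every integer $l\ge 1$; thus $|G(x)-P_G(x)| \le O(\eps)$ uniformly on $[-l,l]$, which is the claim. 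Equivalently, this is exactly the pointwise Taylor bound furnished by Lemma \ref{lem:Gaussian-taylorseries}, applied with the parameter ranges forced by $0.5$-well-conditioning, so the corollary follows with essentially no extra work.

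I do not expect a genuine obstacle here; the only point requiring care is the bookkeeping that forces the geometric series to converge — one needs the truncation length $N\asymp l^2$ to exceed the maximal value $y\asymp l^2$ of the exponent on $[-l,l]$ by a sufficiently large constant factor, which is exactly why the degree is taken to be $(10l)^2$ rather than $O(l)$. It also remains to observe that forming and evaluating $P_G$ costs only $\poly(l)$ arithmetic operations and to verify the displayed inequalities in the small-$l$ corner cases (e.g. $l=1$), both of which are routine.
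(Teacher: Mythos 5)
Your proposal is correct and is essentially the same argument as the paper's: the paper proves the Taylor-series bound for $N(0,1)$ on $[-2l,2l]$ in Lemma~\ref{lem:Gaussian-taylorseries} (same truncation degree $(10l)^2$, same factorial-decay tail bound) and then obtains the corollary by applying a linear transformation, whereas you carry out the identical Taylor expansion directly for $N(\mu,\sigma^2)$ and fold the $0.5$-well-conditioning bounds on $\mu,\sigma$ into the estimate of the expansion variable $y$. The two routes differ only in whether the affine reparametrization is applied before or after truncating the series.
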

\begin{proof}
This follows immediately from using Lemma \ref{lem:Gaussian-taylorseries} and applying the appropriate linear transformation to the polynomial.
\end{proof}

We can now complete the proof of Lemma \ref{lem:learn-tight-mixture} by using Lemma \ref{lem:gaussian-learnmanycomponents} and then using Corollary \ref{coro:gaussian-taylorseries} and Caratheodory to reduce the number of components.

\begin{proof}[Proof of Lemma \ref{lem:learn-tight-mixture}]
By Lemma \ref{lem:gaussian-learnmanycomponents}, we can compute a mixture 
\[
\wt{\mcl{M}} = \wt{w_1} \wt{G_1} + \dots + \wt{w_m} \wt{G_m}
\]
such that $m = O(1/\eps^2)$ and
\[
\norm{\wt{\mcl{M}} - \mcl{M}}_1 \leq \wt{O}(\eps) \,.
\]
For each Gaussian $\wt{G_j}$, let $P_{\wt{G_j}}(x)$ be the polynomial computed in Lemma \ref{lem:Gaussian-taylorseries}.  Write
\[
P_{\wt{G_j}}(x) = a_{j,0} + a_{j,1}x + \dots + a_{j,(10l)^2} x^{(10l)^2} \,.
\]
Define the vector 
\[
v_j = ( a_{j,0} , a_{j,1},  \dots ,  a_{j,(10l)^2}) \,.
\]
Now the point $\wt{w_1}v_1 + \dots + \wt{w_m}v_m$ is in the convex hull of $v_1, \dots , v_m$.  By Caratheodory (since the space is $(10l)^2 + 1$-dimensional), it must be in the convex hull of some $(10l)^2 + 1$ of the vertices.  Thus, we can compute indices $i_0, \dots , i_{(10l)^2}$ and nonnegative weights $w_0', \dots , w_{(10l)^2}'$ summing to $1$ such that 
\[
\wt{w_1}v_1 + \dots + \wt{w_m}v_m = w_0'v_{i_0} + \dots +w_{(10l)^2}'v_{i_{(10l)^2}} \,.
\]
The above implies that for all $x$,
\[
\wt{w_1} P_{\wt{G_1}}(x) + \dots + \wt{w_m} P_{\wt{G_m}}(x) =  w_0' P_{\wt{G_{i_0}}}(x) + \dots + w_{(10l)^2}' P_{\wt{G_{i_{(10l)^2}}}}(x) \,.
\]
Now by Corollary \ref{coro:gaussian-taylorseries} and the fact that all of the Gaussians are $0.5$-well-conditioned, meaning that they decay rapidly outside of $[-l,l]$, we conclude that if we set 
\[
\mcl{M}' =  w_0'\wt{G_{i_0}} + \dots +  w_{(10l)^2}'\wt{G_{i_{(10l)^2}}}
\]
then 
\[
\norm{\wt{\mcl{M}} - \mcl{M}'}_1 \leq \wt{O}(\eps)
\]
and then we have
\[
\norm{\mcl{M} - \mcl{M}'}_1  \leq \norm{\wt{\mcl{M}} - \mcl{M}}_1 + \norm{\wt{\mcl{M}} - \mcl{M}'}_1 \leq \wt{O}(\eps)
\]
as desired.
\end{proof}

 We can slightly improve Lemma \ref{lem:learn-tight-mixture} to work even when we do not have a precise estimate of $d_{\TV}(f ,\mcl{M})$ since we can just repeatedly decrease our target accuracy until we cannot improve our accuracy further.  Recall that we can use Claim \ref{claim:testL1} to test the $L^1$ distance between two distributions.  We now have the following (slight) improvement of Lemma \ref{lem:learn-tight-mixture}.
\begin{corollary}\label{coro:learn-tight-mixture}
Let $\eps > 0$ be a parameter.  Let $\mcl{M} = w_1G_1 + \dots + w_kG_k$ be an unknown $0.5$-well-conditioned mixture of Gaussians.  Assume we are given access to a distribution $f$.  Then we can compute, in $\poly(1/\eps)$ time, a mixture $\wt{\mcl{M}}$ of at most  $O(\log 1/\eps)$ Gaussians such that with high probability, 
\[
d_{\TV}(f, \wt{\mcl{M}}) \leq \eps^2 + \poly(\log 1/\eps) d_{\TV}(f, \mcl{M}) \,. 
\]
\end{corollary}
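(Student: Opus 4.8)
The plan is to mimic the proof of Corollary~\ref{coro:fourier-simplecase}: run the algorithm of Lemma~\ref{lem:learn-tight-mixture} repeatedly with a geometrically decreasing internal target accuracy and use the $L^1$-distance tester of Claim~\ref{claim:testL1} to detect when the output has stopped improving. Concretely, start with $\gamma \leftarrow 1$, and at each step run the algorithm of Lemma~\ref{lem:learn-tight-mixture} with parameter $\gamma$ to obtain a candidate mixture $\wt{\mcl{M}}_\gamma$ of $O(\log 1/\gamma)$ Gaussians, then invoke Claim~\ref{claim:testL1} (with accuracy parameter $\gamma$ and failure probability $\tau = 1/\poly(\log 1/\eps)$, sampling from $f$ and from the explicit mixture $\wt{\mcl{M}}_\gamma$) to test whether $d_{\TV}(f,\wt{\mcl{M}}_\gamma) \le \beta(\gamma)$, where $\beta(\gamma) = \gamma\cdot \poly(\log 1/\gamma)$ is the explicit bound $\wt{O}(\gamma)$ coming from Lemma~\ref{lem:learn-tight-mixture}. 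If the test passes we set $\gamma \leftarrow \gamma/2$ and repeat; otherwise we halt. We also halt once $\gamma$ drops below a threshold $\gamma_0 = \eps^2/\poly(\log 1/\eps)$ chosen so that $\beta(\gamma_0) + \gamma_0 \le \eps^2$. We output the $\wt{\mcl{M}}_\gamma$ from the last iteration whose test passed (this set is nonempty since at $\gamma = 1$ we have $d_{\TV}(f,\wt{\mcl{M}}_1) \le 1 \le \beta(1)$, so that test passes up to the tester's slack).

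The correctness argument rests on a monotonicity observation: as long as $\gamma \ge d_{\TV}(f,\mcl{M})$, the hypothesis of Lemma~\ref{lem:learn-tight-mixture} with parameter $\gamma$ is satisfied (recall $\mcl{M}$ is $0.5$-well-conditioned, and the guarantee does not depend on $k$), so $d_{\TV}(\mcl{M},\wt{\mcl{M}}_\gamma)\le \wt{O}(\gamma)$ and hence $d_{\TV}(f,\wt{\mcl{M}}_\gamma)\le d_{\TV}(f,\mcl{M}) + \wt{O}(\gamma) \le \beta(\gamma)$; thus, up to the additive error $\gamma$ and failure probability $\tau$ of Claim~\ref{claim:testL1}, every such test passes. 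Consequently the search cannot stop before $\gamma$ first falls below $\max(\gamma_0, d_{\TV}(f,\mcl{M}))$. In every case the final output $\wt{\mcl{M}}$ passed its test, which certifies $d_{\TV}(f,\wt{\mcl{M}}) \le \beta(\gamma) + \gamma = \wt{O}(\gamma)$ where $\gamma$ is at most a constant factor above $\max(\gamma_0, d_{\TV}(f,\mcl{M}))$. Since every $\gamma$ we ever use satisfies $\gamma \ge \gamma_0$, we have $\log 1/\gamma = O(\log 1/\eps)$, so this bound is $\le \eps^2 + \poly(\log 1/\eps)\, d_{\TV}(f,\mcl{M})$; for the same reason every candidate, and in particular the output, is a mixture of $O(\log 1/\gamma_0) = O(\log 1/\eps)$ Gaussians.

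For the resource bounds, the number of iterations is $O(\log(1/\gamma_0)) = O(\log 1/\eps)$; each iteration runs the $\poly(1/\gamma) = \poly(1/\eps)$-time algorithm of Lemma~\ref{lem:learn-tight-mixture} and the $\poly(1/\gamma \cdot \log 1/\tau) = \poly(1/\eps)$-time and -sample tester of Claim~\ref{claim:testL1}, so the total running time and sample complexity are $\poly(1/\eps)$. Taking $\tau = 1/\poly(\log 1/\eps)$ and union-bounding over the $O(\log 1/\eps)$ iterations yields the claimed "with high probability" guarantee.

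The only real care required is bookkeeping: one must make sure the internal target accuracy $\gamma$, the polylog blowup hidden in the $\wt{O}$ of Lemma~\ref{lem:learn-tight-mixture}, and the additive slack of the tester in Claim~\ref{claim:testL1} compose so that (i) the certified error has the form $\eps^2 + \poly(\log 1/\eps)\,d_{\TV}(f,\mcl{M})$ rather than something with a worse $\eps$-dependence, and (ii) the component count never exceeds $O(\log 1/\eps)$ — which is exactly what dictates the choice of the stopping threshold $\gamma_0 \approx \eps^2/\poly(\log 1/\eps)$. Beyond this, the argument is a direct analogue of Corollary~\ref{coro:fourier-simplecase} and presents no substantive obstacle.
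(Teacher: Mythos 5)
Your proposal is correct and takes essentially the same route as the paper's own proof: repeatedly invoke Lemma~\ref{lem:learn-tight-mixture} with a geometrically decreasing target accuracy, use Claim~\ref{claim:testL1} to check whether the returned mixture meets the promised bound, and stop (and output the last accepted mixture) at the first failure or upon reaching a floor threshold on the order of $\eps^2/\poly(\log 1/\eps)$ (the paper uses $\eps^3$, which serves the same purpose). The only differences are cosmetic — decrement factor, exact floor, and your slightly more explicit bookkeeping of the tester's failure probability via a union bound.
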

\begin{proof}
We can simply start from $\eps' = 1$ and run the algorithm in Lemma \ref{lem:learn-tight-mixture} with parameter $\eps'$ and then estimate $d_{\TV}(f, \wt{\mcl{M}})$ using Claim \ref{claim:testL1}.  If $d_{\TV}(f, \wt{\mcl{M}}) \leq \eps' \poly(\log 1/\eps) $ then we can decrease $\eps'$ by a factor of $0.9$ and repeat.  Repeating this process and taking the smallest accuracy $\eps' \geq \eps^3$ for which the above check succeeds, we get (from the guarantee of Lemma \ref{lem:learn-tight-mixture}) that 
\[
d_{\TV}(f, \wt{\mcl{M}}) \leq \eps^2  + \poly(\log 1/\eps) d_{\TV}(f, \mcl{M})
\]
and we are done.
\end{proof}

\section{Function Approximations Using Gaussians}\label{sec:approx-with-Gaussians}

In this section, we present several results about approximating functions as a sum of Gaussians.  These results will be key building blocks in the localization steps of both of our algorithms.  The main result of this section, Theorem \ref{thm:approx-interval}, allows us to $\eps$-approximate the indicator function of an interval as a sum of $\poly(\log 1/\eps)$-Gaussians.


First, it will be convenient to renormalize Gaussians so that their maximum value is $1$.  After renormalization, we call them Gaussian multipliers.

\begin{definition}[Gaussian Multiplier]\label{def:gaussian-multiplier}
For parameters $\mu, \sigma$, we define 
\[
M_{\mu, \sigma^2}(x) = e^{-\frac{(x - \mu)^2}{2 \sigma^2}}
\]
i.e. it is a Gaussian scaled so that its maximum value is $1$.
\end{definition}

We also introduce the some additional terminology.

\begin{definition}[Significant Interval]\label{def:sig-interval}
For a Gaussian multiplier $M_{\mu, \sigma^2}$, we say the $C$-significant interval of $M$ is $[\mu - C \sigma, \mu +  C \sigma  ]$.  We will use the same terminology for a Gaussian $N(\mu, \sigma^2)$.
\end{definition}

It will be used repeatedly that for a Gaussian (or Gaussian multiplier), $1- \eps$-fraction of its mass is contained in its $O(\sqrt{\log 1/\eps})$-significant interval.  We now state the main result of this section about approximating the indicator function of an interval as a weighted sum of Gaussian multipliers.

\begin{theorem}\label{thm:approx-interval}
Let $l$ be a positive real number and $0 < \eps < 0.1 $ be a parameter.  There is a function $f$ with the following properties
\begin{enumerate}
    \item $f$ can be written a linear combination of Gaussian multipliers
    \[
    f(x) = w_1M_{\mu_1, \sigma_1^2}(x) + \dots  + w_nM_{\mu_n, \sigma_n^2}(x)
    \]
    where $n = O((\log 1/\eps)^2)$ and $0 \leq w_1, \dots , w_n \leq 1$ 
    \item The $10 \sqrt{\log 1/\eps}$-significant intervals of all of the $M_{\mu_i, \sigma_i^2}$ are contained in the interval $[-(1 + \eps)l, (1 + \eps)l]$
    \item $0 \leq f(x) \leq 1 + \eps$ for all $x$
    \item $1 - \eps \leq f(x) \leq 1 + \eps$ for all $x$ in the interval $[-l, l]$
    \item $0 \leq f(x) \leq \eps$ for $x \geq (1+ \eps)l $ and $x \leq -(1 + \eps)l$
    
\end{enumerate}
\end{theorem}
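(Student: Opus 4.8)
The plan is to fix a single smooth ``plateau'' that already meets properties 3--5 and then realise it, up to an additive $\eps$, as a sparse \emph{nonnegative} combination of Gaussian multipliers assembled at dyadically-many spatial scales, trimming each scale to $O(\log 1/\eps)$ terms by Carath\'eodory. By the rescaling $x \mapsto x/l$ (which multiplies every $\mu_i,\sigma_i$ by $l$ and preserves all five conclusions) I reduce to $l = 1$. Set $\rho = \Theta(\eps/\sqrt{\log 1/\eps})$ and $R = 1 + \Theta(\rho\sqrt{\log 1/\eps}) \le 1+\eps/2$, and take $\Psi = \mathbf{1}_{[-R,R]} * \phi_\rho$ with $\phi_\rho$ the density of $N(0,\rho^2)$, so $\Psi(x) = \frac{1}{\rho\sqrt{2\pi}}\int_{-R}^{R} M_{u,\rho^2}(x)\,du = \Pr_{Z\sim N(0,\rho^2)}[\,x-Z\in[-R,R]\,]$. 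Elementary Gaussian tail bounds give $0\le\Psi\le 1$, $\Psi\ge 1-\eps$ on $[-1,1]$, and $\Psi\le\eps$ outside $[-(1+\eps),1+\eps]$, so $\Psi$ satisfies properties 3--5 and is a continuum nonnegative combination of Gaussian multipliers; the whole task is to discretise it into $O((\log 1/\eps)^2)$ terms while keeping property 2.

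The multiscale structure is dictated by property 2: $M_{\mu,\sigma^2}$ has its $10\sqrt{\log 1/\eps}$-significant interval inside $[-(1+\eps),1+\eps]$ exactly when $\sigma \lesssim (1+\eps-|\mu|)/\sqrt{\log 1/\eps}$, so wide multipliers are allowed near the centre but near $\pm 1$ the width is forced down to $\Theta(\eps/\sqrt{\log 1/\eps})$. I would partition $[-(1+\eps),1+\eps]$ into $B=O(\log 1/\eps)$ dyadic ``buckets'' $J_1,\dots,J_B$ graded by distance to $\{\pm(1+\eps)\}$, so $J_m$ has length $\asymp 2^{-m}$ and lies at distance $\asymp 2^{-m}$ from the boundary, down to the finest scale $\Theta(\rho)$ (beyond which $\Psi\le\eps$ anyway), and attach to $J_m$ the multipliers $M_{u,\sigma_m^2}$, $u\in J_m$, of common width $\sigma_m = |J_m|/(C\sqrt{\log 1/\eps})$ for a large absolute constant $C$; this makes property 2 automatic. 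Discretised with spacing $\asymp \sigma_m/\sqrt{\log 1/\eps}$, each bucket contributes $O(\log 1/\eps)$ multipliers, and this $O(\log 1/\eps)$-scales $\times$ $O(\log 1/\eps)$-per-scale accounting is exactly the source of the target count $O((\log 1/\eps)^2)$.

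To trim a single bucket to $O(\log 1/\eps)$ multipliers I would pass to Taylor polynomials. On $J_m$, $\Psi$ is $\eps$-close in sup norm to a polynomial $P_m$ of degree $r = O(\log 1/\eps)$ --- on coarse buckets because $\Psi|_{J_m}$ is $\eps$-close to the constant $1$, on the few buckets near $\pm 1$ because there $\Psi$ varies only on scale $\rho\asymp|J_m|$ --- and each $M_{u,\sigma_m^2}$ likewise agrees on a slight enlargement of $J_m$ with its degree-$r$ Taylor polynomial up to $\eps$ (that enlargement, rescaled by $\sigma_m$, is an interval of length $O(C\sqrt{\log 1/\eps})$, on which $e^{-t^2/2}$ has a good degree-$O(\log 1/\eps)$ Taylor approximation). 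Sending these polynomials to their coefficient vectors in $\R^{r+1}$, one checks the vector of $P_m$ lies (within $O(\eps)$) in the closed convex cone spanned by the vectors of $\{M_{u,\sigma_m^2}\}_{u\in J_m}$: on a coarse bucket because a Gaussian partition of unity reproduces the constant, on a bucket near $\pm1$ because there $\sigma_m\lesssim\rho$ so $\Psi|_{J_m}$ is (the restriction of) a nonnegative integral of width-$\rho$ multipliers, each of which is in turn a nonnegative combination of width-$\sigma_m$ multipliers. Carath\'eodory for cones in $\R^{r+1}$ then matches $P_m$ with a nonnegative combination of only $r+1 = O(\log 1/\eps)$ of the multipliers; the total weight per bucket is $O(\sqrt{\log 1/\eps})$, so splitting any weight above $1$ into $\lceil w\rceil$ equal copies costs only $O(\sqrt{\log 1/\eps})$ further terms, leaving $O(\log 1/\eps)$ multipliers of weight in $[0,1]$ per bucket, with significant intervals inside $[-(1+\eps),1+\eps]$, whose sum reproduces $\Psi$ on $J_m$ up to $O(\eps)$.

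Summing the bucket solutions gives $f$ with $O((\log 1/\eps)^2)$ terms, weights in $[0,1]$, and significant intervals inside $[-(1+\eps),1+\eps]$; running everything at internal accuracy $\eps/\mathrm{poly}(\log 1/\eps)$ absorbs the $B=O(\log 1/\eps)$-fold loss from summing over buckets without changing the sparsity bound. The hard part will be controlling the interaction \emph{between} scales: a bucket-$m'$ multiplier is nonnegligible not only on $J_{m'}$ but, through the enlargement, a little into the neighbouring buckets, so the naive sum can overshoot near the seams and violate $1-\eps\le f\le 1+\eps$. I would handle this by not approximating $\Psi$ bucket-by-bucket but instead writing $\Psi=\sum_m\Psi_m$ for a smooth nonnegative partition of unity subordinate to the (slightly enlarged) buckets --- with each partition function itself taken to be a nonnegative combination of width-$\sigma_m$ multipliers --- and running the Taylor--Carath\'eodory step on each smooth, compactly supported $\Psi_m$, so that $f=\sum_m f_m$ with $f_m\approx\Psi_m$ and no seams. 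Making that partition genuinely sum to $1$ while every piece is a nonnegative Gaussian combination at its own scale --- i.e.\ reconciling ramps of geometrically different widths without overshoot --- is the single genuinely delicate point; the rest is routine Gaussian tail estimates together with Carath\'eodory.
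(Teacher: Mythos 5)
Your high-level picture is right — wide multipliers in the middle, geometrically narrower ones as you approach $\pm l$, $O(\log 1/\eps)$ per scale times $O(\log 1/\eps)$ scales — but the proposal has a genuine unresolved gap exactly at the point you flag yourself: reconciling adjacent scales. Matching the Taylor coefficient vector of each bucket's bump $\Psi_m$ by conic Carath\'eodory only controls $f_m - \Psi_m$ where the Taylor approximation is valid, i.e.\ on (an enlargement of) $J_m$; it says nothing about the leakage of bucket-$m$ multipliers into $J_{m\pm1}$, and it is precisely on those overlaps that you must simultaneously guarantee $0 \le \sum_m f_m \le 1+\eps$, $1-\eps \le \sum_m f_m$ on $[-l,l]$, and $w_i \in [0,1]$ for every individual multiplier. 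Your fix — a smooth nonnegative partition of unity whose pieces are themselves nonnegative combinations of width-$\sigma_m$ multipliers — is not obviously constructible with the required sparsity: a partition piece $\Psi_m$ that ramps from $0$ to $1$ over a width-$\sigma_m$ scale has, after rescaling $J_m$ to $[-1,1]$, a transition of width $\asymp 1/\sqrt{\log 1/\eps}$, so it is \emph{not} $\eps$-close to a degree-$O(\log 1/\eps)$ polynomial (you would need degree roughly $(\log 1/\eps)^{3/2}$), which breaks both the Carath\'eodory count and the ``coefficients-close implies functions-close'' step; and even granting that, you would still need the two ramps meeting at a seam — one built from width-$\sigma_m$ multipliers, the other from width-$\sigma_{m+1}=2\sigma_m$ multipliers — to sum to exactly $1$ up to $O(\eps)$ pointwise without overshoot.

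The paper takes a different and, at the seams, more economical route: it does no Taylor expansion and no Carath\'eodory at all. It starts from a single flat profile built from $\wt O(1/\eps^2)$ equally spaced narrow multipliers (Lemma~\ref{lem:approx-constant}), then iteratively applies a merging lemma (Lemma~\ref{lem:widergaussian}): at each step the interior is replaced by half as many twice-as-wide multipliers (via Corollary~\ref{coro:approx-wider-gaussian}), and \emph{the seam is handled by keeping $O(\log 1/\eps)$ of the original narrow multipliers at each boundary with explicitly corrected weights} $w_j = \max(0,\, 1 - (\sqrt{2}c/\sqrt{3\pi})\,a_j)$. Because the merged wide Gaussians over-cover the seam region in a computable way ($a_j$), these corrections are guaranteed to land in $[0,1]$ and the $L^\infty$ error per merge is $\eps^{10}$, so $O(\log 1/\eps)$ merges stay within $\eps$. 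That weight correction is precisely the ``ramp reconciliation'' your partition-of-unity idea would have to construct from scratch, and it is what makes the sparsity and the pointwise bounds $0 \le f \le 1+\eps$ hold simultaneously. If you want to push your approach through, you would essentially have to re-derive an analogue of Lemma~\ref{lem:widergaussian} to build the partition; at that point the Taylor--Carath\'eodory layer becomes unnecessary.
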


\subsection{Approximating a Constant Function}
Recall Lemma~\ref{lem:approx-constant-intro} (restated below) that allows us to approximate a constant function using an infinite sum of evenly spaced Gaussian multipliers.  

\begin{lemma}[Restated from Lemma~\ref{lem:approx-constant-intro}]\label{lem:approx-constant}
Let $0 < \eps < 0.1$ be a parameter.  Let $c$ be a real number such that $0 < c \leq (\log 1/\eps)^{-1/2}$.  Define
\[
f(x) = \sum_{j = - \infty}^{\infty} \frac{c}{\sqrt{2\pi}} M_{ c j \sigma , \sigma^2}(x) \,.
\]
Then $1 - \eps \leq f(x) \leq 1 + \eps$ for all $x$.
\end{lemma}

\subsection{Approximating an Interval}

The next step in the proof of Theorem \ref{thm:approx-interval} is to show how to approximate an interval using a finite number of Gaussian multipliers i.e. we need to show how to create the sharp transitions at the ends of the interval.  In light of Lemma \ref{lem:approx-constant}, we can create a function satisfying the last four properties by taking $\wt{O}((1/\eps)^2)$ evenly spaced Gaussians multipliers with standard deviation $\eps^2 l$.  However, this is too many components and we must reduce the number of components to $O(\log^2 1/\eps)$.  The way we do this is by merging most of these components (all but the ones on the ends) into fewer components with larger standard deviation.  We keep iterating this merging process and prove that we can eventually reduce the number of components to $O(\log^2 1/\eps)$.
\\\\
First, the following result is an immediate consequence of Lemma \ref{lem:approx-constant}.  It allows us to approximate a Gaussian with standard deviation $2\sigma$ as a weighted sum of Gaussians with standard deviation $\sigma$.
\begin{corollary}\label{coro:approx-wider-gaussian}
Let $\eps$ be a parameter.  Let $c$ be a real number such that $0 < c \leq 0.5 (\log 1/\eps)^{-1/2}$.  Let
\[
g(x) = \sum_{j = - \infty}^{\infty} \frac{\sqrt{2}c}{\sqrt{3\pi}} e^{-\frac{c^2j^2}{6}} M_{c j \sigma, \sigma^2}(x)  \,.
\]
Then for all $x$,
\[
(1-\eps) M_{0, 4\sigma^2}(x) \leq g(x) \leq (1  +\eps)M_{0, 4\sigma^2}(x) \,.
\]
\end{corollary}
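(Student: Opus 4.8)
The plan is to reduce \emph{directly} to Lemma \ref{lem:approx-constant} by dividing through by $M_{0,4\sigma^2}(x)$ and recognizing the resulting function of $x$ as a rescaled copy of the periodic function appearing there.

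By scaling (replace $x$ by $\sigma x$) we may assume $\sigma = 1$, so that $M_{cj,1}(x) = e^{-(x-cj)^2/2}$ and $M_{0,4}(x) = e^{-x^2/8}$. First I would form the ratio
\[
\frac{g(x)}{M_{0,4}(x)} = \sum_{j=-\infty}^\infty \frac{\sqrt2\, c}{\sqrt{3\pi}}\, e^{-\frac{c^2 j^2}{6}}\, \exp\!\left(-\frac{(x-cj)^2}{2} + \frac{x^2}{8}\right),
\]
and combine the exponents. Expanding and collecting the $x^2$, $jx$, and $j^2$ contributions, the total exponent becomes $-\frac{3x^2}{8} + cjx - \frac{2c^2j^2}{3}$, and completing the square in $j$ turns this into exactly $-\frac{2c^2}{3}\bigl(j - \frac{3x}{4c}\bigr)^2$ --- crucially, the leftover $x$-dependent constant $\frac{3x^2}{8}$ exactly cancels the $-\frac{3x^2}{8}$. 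The key step (and the one to get right) is this completing-the-square computation together with checking that the constant $\frac{\sqrt2\, c}{\sqrt{3\pi}}$ in front is precisely the normalization that Lemma \ref{lem:approx-constant} requires.

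Next I would set $c' = \tfrac{2c}{\sqrt3}$, so that $\tfrac{2c^2}{3} = \tfrac{(c')^2}{2}$, and $x' = \tfrac{3x}{4c}\cdot c' = \tfrac{\sqrt3}{2}x$. Then
\[
\frac{g(x)}{M_{0,4}(x)} = \sum_{j=-\infty}^\infty \frac{\sqrt2\, c}{\sqrt{3\pi}}\, e^{-\frac{(c'j - x')^2}{2}},
\]
and one checks the elementary identity $\frac{\sqrt2\, c}{\sqrt{3\pi}} = \frac{c'}{\sqrt{2\pi}}$, so the right-hand side is \emph{exactly} the function from Lemma \ref{lem:approx-constant} with spacing parameter $c'$, evaluated at $x'$. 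Finally I would verify that lemma's hypothesis: since $c \le \tfrac12(\log 1/\eps)^{-1/2}$ we get $c' = \tfrac{2c}{\sqrt3} \le \tfrac1{\sqrt3}(\log 1/\eps)^{-1/2} \le (\log 1/\eps)^{-1/2}$. Lemma \ref{lem:approx-constant} then yields $1-\eps \le g(x)/M_{0,4}(x) \le 1+\eps$ for all $x$, which is the claim after undoing the scaling by $\sigma$. There is no substantive obstacle here beyond bookkeeping; the only thing to be careful about is making the algebra of the exponents and the normalizing constant line up so that the reduction is an exact identity rather than merely an approximation.
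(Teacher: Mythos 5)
Your proof is correct and is essentially the paper's argument run in reverse: the paper starts from the rescaled Lemma~\ref{lem:approx-constant} function and multiplies by $M_{0,4\sigma^2}$, verifying the product is exactly $g$, while you start from $g$, divide by $M_{0,4\sigma^2}$, and verify the quotient is exactly the rescaled Lemma~\ref{lem:approx-constant} function. The underlying algebraic identity, the reparameterization $c' = 2c/\sqrt{3}$, and the hypothesis check are all the same, so this is the same proof.
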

\begin{proof}
Lemma \ref{lem:approx-constant} (with $c \leftarrow \frac{2}{\sqrt{3}}c, \sigma \leftarrow \frac{2}{\sqrt{3}} \sigma$) implies that the function 
\[
f(x) = \sum_{j = - \infty}^{\infty} \frac{\sqrt{2}c}{\sqrt{3\pi}} M_{ \frac{4}{3}c j \sigma , \frac{4}{3}\sigma^2}(x)
\]
is between $1 - \eps$ and $1+ \eps$ everywhere.  Now consider
\begin{align*}
f(x) \cdot M_{0, 4\sigma^2}(x) = \sum_{j = - \infty}^{\infty} \frac{\sqrt{2}c}{\sqrt{3\pi}} M_{ \frac{4}{3}c j \sigma , \frac{4}{3}\sigma^2}(x) M_{0, 4\sigma^2}(x) = \sum_{j = - \infty}^{\infty} \frac{\sqrt{2}c}{\sqrt{3\pi}} e^{ -\frac{x^2 + 3(x - \frac{4}{3}cj \sigma)^2}{8\sigma^2}} \\ = \sum_{j = - \infty}^{\infty} \frac{\sqrt{2}c}{\sqrt{3\pi}} e^{-\frac{c^2j^2}{6}} e^{-\frac{(x - cj \sigma)^2}{2\sigma^2}  } = \sum_{j = - \infty}^{\infty} \frac{\sqrt{2}c}{\sqrt{3\pi}} e^{-\frac{c^2j^2}{6}} M_{c j \sigma, \sigma^2}(x) \,.
\end{align*}

\end{proof}

In the next lemma, we show when given a sum of evenly spaced Gaussians with standard deviation $\sigma$, we can replace almost all of them (except for ones on the ends) with a sum of fewer evenly spaced Gaussians with standard deviation $2\sigma$.

\begin{lemma}\label{lem:widergaussian}
Let $\eps$ be a parameter.  Let $c$ be a real number such that $0 < c \leq 0.01 (\log 1/\eps)^{-1/2}$.  Let $b$ be a positive integer.  Consider the function
\[
f(x) = \sum_{j = 0}^{b}  \frac{c}{\sqrt{2 \pi}} M_{cj\sigma, \sigma^2}(x) \,.
\]
Let $C = \lceil 10^2 c^{-1} \log (1/\eps)^{1/2} \rceil$.  There is a function $g$ of the form
\[
g(x) = \sum_{j = 0}^{2C}  \frac{w_j c}{\sqrt{ 2\pi}} M_{cj\sigma, \sigma^2}(x)  + \sum_{j = b - 2C }^b \frac{w_j c}{\sqrt{2\pi}} M_{cj\sigma, \sigma^2}(x) + \sum_{j = \lfloor  C/2 \rfloor}^{\lceil (b-C) / 2 \rceil } \frac{c}{\sqrt{2\pi}} M_{2cj\sigma, 4\sigma^2}(x)
\]
where the $0 \leq w_0, \dots , w_{2C}, w_{b - 2C}, \dots , w_{k} \leq 1$ are weights and
\[
\norm{f - g}_{\infty} \leq \eps^{10} \,.
\]
\end{lemma}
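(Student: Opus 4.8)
The plan is to treat the two weighted fine-grid blocks of $g$ as free parameters and to account exactly for the fine-grid content of the middle block $\sum_{j=\lfloor C/2\rfloor}^{\lceil(b-C)/2\rceil}\frac{c}{\sqrt{2\pi}}M_{2cj\sigma,4\sigma^2}$ once each of its wide multipliers is expanded back onto the fine grid via Corollary~\ref{coro:approx-wider-gaussian}. Recall that $f=\sum_{n=0}^{b}\frac{c}{\sqrt{2\pi}}M_{cn\sigma,\sigma^2}$ is the uniform superposition of fine multipliers over the grid points $cn\sigma$ with $0\le n\le b$. The hypothesis $c\le 0.01(\log 1/\eps)^{-1/2}$ comfortably satisfies the hypotheses of both Corollary~\ref{coro:approx-wider-gaussian} and Lemma~\ref{lem:approx-constant} when these are invoked with accuracy parameter $\eps^{100}$; with that choice, shifting Corollary~\ref{coro:approx-wider-gaussian} to center $2cj\sigma$ gives, for each integer $j$,
\[
\Bigl| M_{2cj\sigma,4\sigma^2}(x)-\sum_{m\in\Z}\tfrac{\sqrt 2 c}{\sqrt{3\pi}}e^{-c^2m^2/6}M_{c(2j+m)\sigma,\sigma^2}(x)\Bigr|\le \eps^{100}\,M_{2cj\sigma,4\sigma^2}(x).
\]
Summing over $j$ in the middle range, multiplying by $\frac{c}{\sqrt{2\pi}}$, and reindexing by $n=2j+m$, the middle block equals $\sum_{n\in\Z}\wt w_n\,\frac{c}{\sqrt{2\pi}}M_{cn\sigma,\sigma^2}(x)$ up to an $L^\infty$ error of $\eps^{100}\sum_j\frac{c}{\sqrt{2\pi}}M_{2cj\sigma,4\sigma^2}(x)\le 2\eps^{100}$ (the last step is Lemma~\ref{lem:approx-constant} on the coarse grid, whose spacing-to-standard-deviation ratio is again $c$), where $\wt w_n:=\sum_{j=\lfloor C/2\rfloor}^{\lceil(b-C)/2\rceil}\frac{\sqrt 2 c}{\sqrt{3\pi}}e^{-c^2(n-2j)^2/6}$. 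It then remains to control the $\wt w_n$ and to pick the block weights to cancel the discrepancy with $f$.

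Three quantitative ingredients do this. (a) The parity sums $S_r:=\sum_{m\equiv r\ (\mathrm{mod}\ 2)}\frac{\sqrt 2 c}{\sqrt{3\pi}}e^{-c^2m^2/6}$ satisfy $|S_0-1|,|S_1-1|\le\eps^{100}$: setting $c_0:=2c/\sqrt 3$ one checks the identities $\frac{\sqrt 2 c}{\sqrt{3\pi}}=\frac{c_0}{\sqrt{2\pi}}$, $\frac{c^2(2k)^2}{6}=\frac{(c_0k)^2}{2}$ and $\frac{c^2(2k+1)^2}{6}=\frac{(c_0/2-c_0k)^2}{2}$, so that $S_0=\sum_k\frac{c_0}{\sqrt{2\pi}}M_{c_0k,1}(0)$ and $S_1=\sum_k\frac{c_0}{\sqrt{2\pi}}M_{c_0k,1}(c_0/2)$, both $\eps^{100}$-close to $1$ by Lemma~\ref{lem:approx-constant}. (b) The Gaussian tail bound $\sum_{|m|\ge C-1}\frac{\sqrt 2 c}{\sqrt{3\pi}}e^{-c^2m^2/6}\le\eps^{100}$, which follows from $C\ge 10^2 c^{-1}(\log 1/\eps)^{1/2}$ by a crude geometric estimate (using $m^2\ge (C-1)m$ on the tail and noting $c(C-1)\ge 10^2-1$, so the prefactor $c$ absorbs the $1/(c^2(C-1))$ produced by the geometric sum). (c) The uniform bound $\sum_{n\in\Z}\frac{c}{\sqrt{2\pi}}M_{cn\sigma,\sigma^2}(x)\le 2$ for all $x$, once more by Lemma~\ref{lem:approx-constant} on the fine grid; this is exactly what keeps the final error from growing with $b$.

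Now I track $\wt w_n$. As $j$ runs over $[\lfloor C/2\rfloor,\lceil(b-C)/2\rceil]$, the integer $m=n-2j$ runs over an arithmetic progression of common difference $2$ lying in the parity class of $n$, whose largest element is $\ge n-C$ and whose smallest is $\le n-b+C$. Hence: (i) if $2C\le n\le b-2C$, this progression contains every $m\equiv n\ (\mathrm{mod}\ 2)$ with $|m|\le C-1$, so $\wt w_n$ differs from $S_{n\bmod 2}$ only by omitted terms with $|m|\ge C$, whence $|\wt w_n-1|\le 2\eps^{100}$ by (a),(b); (ii) if $n\le 0$ every term of $\wt w_n$ has $|m|\ge C-1$, so $\wt w_n\le\eps^{100}$ by (b), and symmetrically $\wt w_n\le\eps^{100}$ if $n\ge b$; (iii) in all cases $0\le\wt w_n\le S_{n\bmod 2}\le 1+\eps^{100}$. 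When $b\le 4C$ the two windows overlap and already cover $[0,b]$; the same bookkeeping goes through after assigning each index to a single window, so assume $b>4C$. Then $\{0,\dots,2C\}$ and $\{b-2C,\dots,b\}$ are disjoint and contained in $\{0,\dots,b\}$; set $w_j:=\max(0,\,1-\wt w_j)$ for $j$ in either window, which lies in $[0,1]$ by (iii). Let $\beta_n$ be the coefficient of $\frac{c}{\sqrt{2\pi}}M_{cn\sigma,\sigma^2}(x)$ in $g-f$; by the expansion above, $g-f=\sum_{n\in\Z}\beta_n\frac{c}{\sqrt{2\pi}}M_{cn\sigma,\sigma^2}+E$ with $\|E\|_\infty\le 2\eps^{100}$ and $\beta_n=w_n\mathbf 1[n\in\text{a window}]+\wt w_n-\mathbf 1[0\le n\le b]$. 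A short case check — on a window $\beta_n=\max(0,1-\wt w_n)+\wt w_n-1\in[0,\eps^{100}]$; in the bulk $2C<n<b-2C$, $\beta_n=\wt w_n-1$; for $n<0$ or $n>b$, $\beta_n=\wt w_n$ — gives $|\beta_n|\le 2\eps^{100}$ for every $n$ using (i)--(iii). Therefore, by (c),
\[
\norm{f-g}_{\infty}\le\Bigl\|\sum_{n\in\Z}\beta_n\tfrac{c}{\sqrt{2\pi}}M_{cn\sigma,\sigma^2}\Bigr\|_{\infty}+2\eps^{100}\le 2\max_n|\beta_n|+2\eps^{100}\le 6\eps^{100}\le\eps^{10},
\]
which is the claim.

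The one genuinely delicate point — and the reason the statement features exactly these numbers — is the alignment: one must verify that the entire ``transition range'' of $\wt w_n$ (where it is neither $\approx 1$ nor $\approx 0$) lies inside the two index windows $\{0,\dots,2C\}$ and $\{b-2C,\dots,b\}$ that $g$ permits us to reweight, which is precisely what ties the window width $2C$ to the coarse index range $[\lfloor C/2\rfloor,\lceil(b-C)/2\rceil]$ and to the tail length $C$. Two lesser subtleties worth flagging: since the coarse grid has twice the spacing of the fine grid, both parity sums $S_0,S_1$ must be handled separately, and it is the identity for $S_1$ — evaluating Lemma~\ref{lem:approx-constant} at the off-lattice point $c_0/2$ — that shows the ``odd'' fine multipliers are reconstituted with the same weight $1$ as the ``even'' ones; and the accuracy fed to Lemma~\ref{lem:approx-constant} and Corollary~\ref{coro:approx-wider-gaussian} must be polynomially small in $\eps$ (we used $\eps^{100}$), which is affordable only because the hypothesis $c\le 0.01(\log 1/\eps)^{-1/2}$ has constant-factor slack in the exponent. (The exponent $10$ in $\eps^{10}$ is very generous; this construction in fact yields $\eps^{\Omega(1)}$ with a large hidden constant.)
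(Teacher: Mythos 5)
Your proof is correct and follows essentially the same route as the paper: expand each coarse multiplier onto the fine grid via Corollary~\ref{coro:approx-wider-gaussian}, collect the resulting fine-grid coefficients (your $\wt w_n$ is the paper's $\frac{\sqrt 2 c}{\sqrt{3\pi}}a_n$), observe via Lemma~\ref{lem:approx-constant} and Gaussian tail decay that they are $\approx 1$ in the bulk, $\approx 0$ beyond $[0,b]$, and $\in[0,1+o(1)]$ in the two transition windows of width $2C$, then set $w_j=\max(0,1-\wt w_n)$ to cancel the discrepancy. The only cosmetic difference is that you establish the $\wt w_n\approx 1$ step by splitting into two parity sums $S_0,S_1$ and invoking Lemma~\ref{lem:approx-constant} at $0$ and at the off-lattice point $c_0/2$, whereas the paper applies Lemma~\ref{lem:approx-constant} once, directly at the real evaluation point $l$, which handles both parities at the same time.
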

\begin{proof}
Let $\eps' = \eps^{100}$.  By Corollary \ref{coro:approx-wider-gaussian}, for any real numbers $j, x$,
\[
\left \lvert M_{c j \sigma, 4\sigma^2}(x) -  \sum_{k = -\infty}^{\infty} \frac{\sqrt{2}c}{\sqrt{3\pi}} e^{-\frac{c^2k^2}{6}} M_{c(k + j) \sigma , \sigma^2}(x) \right \rvert \leq \eps'  M_{c j \sigma, 4\sigma^2}(x) \,.
\]
Now we use the above inequality on each term of the last sum in the expression for $g(x)$.
\begin{align*}
\left \lvert \sum_{j = \lfloor  C/2 \rfloor}^{\lceil (b-C) / 2 \rceil } \frac{c M_{2cj\sigma, 4\sigma^2}(x)}{\sqrt{2\pi}}  -  \sum_{j = \lfloor  C/2 \rfloor}^{\lceil (b-C) / 2 \rceil }\sum_{k = -\infty}^{\infty} \frac{c^2}{\pi \sqrt{3}} e^{-\frac{c^2k^2}{6}} M_{c(k + 2j) \sigma , \sigma^2}(x) \right \rvert \\ \leq \eps' \sum_{j = \lfloor  C/2 \rfloor}^{\lceil (b-C) / 2 \rceil }  \frac{c M_{2c j \sigma, 4\sigma^2}(x)}{\sqrt{2\pi}}  \leq 2\eps'
\end{align*}
where the last step follows from Lemma \ref{lem:approx-constant}.  Now we rewrite the second sum in the LHS above.  Let 
\begin{align*}
S(x) &= \sum_{j = \lfloor  C/2 \rfloor}^{\lceil (b-C) / 2 \rceil }\sum_{k = -\infty}^{\infty} \frac{c^2}{\pi \sqrt{3}} e^{-\frac{c^2k^2}{6}} M_{c(k + 2j) \sigma , \sigma^2}(x) \\&= \sum_{l = -\infty}^{\infty} \frac{c^2}{\pi \sqrt{3}}M_{c l\sigma, \sigma^2}(x) \sum_{j = \lfloor  C/2 \rfloor}^{\lceil (b-C) / 2 \rceil } e^{-\frac{c^2(l - 2j)^2}{6}} \,.
\end{align*}
Define 
\[
a_l = \sum_{j = \lfloor  C/2 \rfloor}^{\lceil (b-C) / 2 \rceil } e^{-\frac{c^2(l - 2j)^2}{6}} \,.
\]
First, by applying Lemma \ref{lem:approx-constant} with parameters $c \leftarrow \frac{2}{\sqrt{3}}c, \sigma \leftarrow \sqrt{3}c^{-1}$, we have that for all real numbers $l$,
\[
 \left \lvert \frac{\sqrt{3\pi}}{\sqrt{2}c} - \sum_{-\infty}^{\infty} e^{-\frac{c^2(l - 2j)^2}{6}} \right \rvert \leq \frac{\eps' \sqrt{3\pi}}{\sqrt{2}c} \,.
\]
By the way we chose $C$, we deduce that for all  integers $l$ with $2C \leq  l \leq b - 2C $,
\begin{equation}\label{eq:coeffbound1}
\left \lvert \frac{\sqrt{3\pi}}{\sqrt{2}c}  - a_l \right \rvert \leq (2\eps') \frac{\sqrt{3\pi}}{\sqrt{2}c} 
\end{equation}
for all integers $0 \leq l \leq 2C$, or $b - 2C  \leq l \leq b$,
\begin{equation}\label{eq:coeffbound2}
a_l \leq \frac{\sqrt{3\pi}}{\sqrt{2}c} (1 + 2\eps')
\end{equation}
and finally for all integers $l < 0$ or $l > b$,
\begin{equation}\label{eq:coeffbound3}
a_l \leq \frac{\sqrt{3\pi}}{\sqrt{2}c} (2\eps') \,.
\end{equation}
To obtain these inequalities, we simply use the fact that the terms in the sum
\[
\sum_{-\infty}^{\infty} e^{-\frac{c^2(l - 2j)^2}{6}} 
\]
decay exponentially when $j$ is far from $l/2$ so their total contribution is small.
\\\\
Now we can set $w_0, \dots , w_{2C}, w_{b - 2C}, \dots , w_{k}$ in the expresion for $g(x)$ as follows:  
\[
w_j = \max\left(0,  1 - \frac{\sqrt{2}c}{\sqrt{3\pi}}a_j  \right) \,.
\]
It is clear that all of these weights are between $0$ and $1$.  We now have that
\begin{align*}
\norm{f - g}_{\infty} &\leq 2\eps' + \norm{f(x) - \left(S(x) + \sum_{j = 0}^{2C}  \frac{w_j c}{\sqrt{ 2\pi}} M_{cj\sigma, \sigma^2}(x)  + \sum_{j = b - 2C }^b \frac{w_j c}{\sqrt{2\pi}} M_{cj\sigma, \sigma^2}(x) \right) }_{\infty} 
\end{align*}
The expression inside the norm on the RHS can be rewritten as 
\begin{align*}
&\sum_{l = 2C + 1}^{b - 2C - 1} \left( \frac{c}{\sqrt{2\pi}} - \frac{c^2}{\pi\sqrt{3}} a_l \right)M_{c l\sigma, \sigma^2}(x) +  \sum_{l = 0}^{2C} \left( \frac{c}{\sqrt{2\pi}} (1 - w_l) - \frac{c^2}{\pi\sqrt{3}} a_l \right)M_{c l\sigma, \sigma^2}(x) \\ &+  \sum_{l = b - 2C}^{b} \left( \frac{c}{\sqrt{2\pi}} (1 - w_l) - \frac{c^2}{\pi\sqrt{3}} a_l \right)M_{c l\sigma, \sigma^2}(x) + \sum_{l = -\infty}^{-1} - \frac{c^2}{\pi\sqrt{3}} a_l M_{c l\sigma, \sigma^2}(x)  \\ &+ \sum_{l = b + 1}^{\infty} - \frac{c^2}{\pi\sqrt{3}} a_l M_{c l\sigma, \sigma^2}(x)
\end{align*} 
and combining (\ref{eq:coeffbound1},\ref{eq:coeffbound2}, \ref{eq:coeffbound3}), we deduce that the above has $L^{\infty}$ norm at most
\[
\norm{ (10\eps') \sum_{l = -\infty}^{\infty} \frac{c}{\sqrt{2\pi}} M_{c l\sigma, \sigma^2}(x) }_{\infty} \leq 20\eps' \,.
\]
where we used Lemma \ref{lem:approx-constant}.  Thus, $\norm{f - g}_{\infty} \leq 22\eps'$ and we are done.
\end{proof}

We can now prove Theorem \ref{thm:approx-interval} by repeatedly applying Lemma \ref{lem:widergaussian}.

\begin{proof}[Proof of Theorem \ref{thm:approx-interval}]
Let $c = 0.01 (\log 1/\eps)^{-1/2}$.  Let $K = \lceil \frac{1 + 0.5\eps }{c\eps^2} \rceil$
\[
f_0(x) = \sum_{j = - K}^{K} \frac{c}{\sqrt{2\pi}} M_{c j \eps^2 l, \eps^4 l^2}(x) \,.
\]
Let $\eps' = \eps^{10}$.  Using Lemma \ref{lem:approx-constant}, (and basic tail decay properties of a Gaussian) we get that
\begin{itemize}
\item $0 \leq f_0(x) \leq 1 + \eps'$ for all $x$
    \item $1 - \eps' \leq f_0(x) \leq 1 + \eps'$ for all $x$ in the interval $[-l, l]$
    \item $0 \leq f_0(x) \leq \eps'$ for $x \geq (1+ \eps)l $ and $x \leq -(1 + \eps)l$
\end{itemize}
Now we can apply Lemma \ref{lem:widergaussian} to $f_0(x)$ to obtain 
\begin{align*}
f_1(x) = \sum_{j =  -K}^{-K + 2C}  \frac{w_j c}{\sqrt{ 2\pi}} M_{cj\eps^2l, \eps^4l^2}(x)  + \sum_{j = K - 2C }^{K } \frac{w_j c}{\sqrt{2\pi}} M_{cj\eps^2l, \eps^4l^2}(x) \\ + \sum_{j = -\lceil  (K - C)/2 \rceil}^{\lceil  (K - C)/2 \rceil} \frac{c}{\sqrt{2\pi}}M_{2cj\eps^2l, 4\eps^4l^2}(x)
\end{align*}
where $C = \lceil 10^2 c^{-1} \log (1/\eps)^{1/2} \rceil$, the $w_j$ are weights between $0$ and $1$, and 
\[
\norm{f_1 - f_0} \leq \eps' \,.
\]
Now we can apply Lemma \ref{lem:widergaussian} again on the last sum in the expression for $f_1$.  We have to do this at most $10 \log 1/\eps$ times before there are at most $O((\log 1/\eps)^2)$ components remaining.  It is clear that in this procedure, the $10 \sqrt{\log 1/\eps}$-significant intervals of all of the Gaussian multipliers always remains in $[-(1 + \eps)l, (1 + \eps)l]$.  Also, the total $L^{\infty}$ error incurred over all of the applications of Lemma \ref{lem:widergaussian} is at most $10\eps' \log 1/\eps \leq \eps^9$.  It is clear that all of the weights are always nonnegative and in the interval $[0,1]$.  Thus, the final function $f$ satisfies
\begin{itemize}
    \item $0 \leq f(x) \leq 1 + \eps$ for all $x$
    \item $1 - \eps \leq f(x) \leq 1 + \eps$ for all $x$ in the interval $[-l, l]$
    \item $0 \leq f(x) \leq \eps$ for $x \geq (1+ \eps)l $ and $x \leq -(1 + \eps)l$
\end{itemize}
and we are done.
\end{proof}

In light of Theorem \ref{thm:approx-interval}, we may make the following definition.
\begin{definition}\label{def:interval-function}
For parameters $\eps, l$, let $\mcl{I}_{\eps, l}$ denote the function computed in Theorem \ref{thm:approx-interval} for parameters $\eps, l/(1 + \eps)$ .  We will also use $\mcl{I}_{\eps, l}^{(a)}$ to denote the function  $\mcl{I}_{\eps, l}(x - a)$.
\end{definition}
\begin{remark}
We define $\mcl{I}_{\eps, l}$ as above because it will be convenient later to be able to say that the significant part of $\mcl{I}_{\eps, l}$ is contained in the interval $[-l , l]$.
\end{remark}

\section{Nearly-Properly Learning GMMs: Full Version}\label{sec:GMM-full}

In this section, we complete the proof of our main result for learning GMMs, Theorem \ref{thm:main-gmm}.  We localize the distribution by multiplying by a Gaussian multiplier $M_{\mu, \sigma^2}$.  Note that the product of two Gaussians is still a Gaussian so multiplying a GMM by a Gaussian multiplier results in a re-weighted mixture of Gaussians.  Roughly, we argue that the new weights on components of the mixture that are far away from the multiplier $M_{\mu, \sigma^2}$ are negligible so the resulting mixture is well-conditioned and we can then use Corollary \ref{coro:learn-tight-mixture} to reconstruct the localized distribution.   To reconstruct the entire distribution, we show that it suffices to sum together $\wt{O}(k)$ different localized reconstructions.

\subsection{Localizing with Gaussian Multipliers}

 Recall Claim~\ref{claim:mult-by-gaussian-intro} (restated below) which gives an explicit formula for what happens when we have a Gaussian $G_1 = N(\mu_1, \sigma_1^2)$ and we multiply it by a Gaussian multiplier $M_{\mu, \sigma^2}(x) $.
 
 \begin{claim}[Restated from Claim~\ref{claim:mult-by-gaussian-intro}] \label{claim:mult-by-gaussian}
We have the identity
 \[
 M_{\mu, \sigma^2}(x)N(\mu_1, \sigma_1^2) =   \frac{1}{\sqrt{1 + \frac{\sigma_1^2}{\sigma^2}}}e^{-\frac{(\mu_1 - \mu)^2}{2(\sigma_1^2 + \sigma^2)}} N\left( \frac{\mu \sigma_1^2 + \mu_1 \sigma^2}{\sigma_1^2 + \sigma^2} , \frac{\sigma_1^2 \sigma^2}{\sigma_1^2 + \sigma^2} \right) \,.
 \]
 \end{claim}
 
 \subsection{ Building Blocks}

We first consider reconstructing a GMM $\mcl{M} = w_1G_1 + \dots w_kG_k$ after multiplying by a Gaussian multiplier $M_{\mu, \sigma^2}$.  As a corollary of Claim \ref{claim:mult-by-gaussian}, we know that when the $C$-significant intervals (recall Definition \ref{def:sig-interval}) of a Gaussian $G_j$ and the multiplier $M_{\mu, \sigma^2}(x) $ are disjoint for large $C$, then the $L^1$ norm of their product is $e^{-\Omega(C^2)}$.  In particular this means that after multiplying by $M_{\mu, \sigma^2}$, the only components that remain relevant are those that have nontrivial overlap with the multiplier $M_{\mu, \sigma^2}$.  The only way these components will not form a well-conditioned mixture is if there is some $G_j$ that is very thin (i.e. $\sigma_j << \sigma$) and overlaps with $M_{\mu, \sigma^2}$.  As long as this doesn't happen, we can apply Corollary \ref{coro:learn-tight-mixture}.  We formalize this below. 
 
 
 \begin{corollary}\label{coro:reconstruct-multiplier}
 Let $\mcl{M} = w_1G_1 + \dots + w_kG_k$ be an arbitrary mixture of Gaussians where $G_i = N(\mu_i, \sigma_i^2)$.  Let $\eps > 0$ be some parameter and let $l = \lceil \sqrt{\log (1/\eps)} \rceil $.  Assume we are given access to a distribution $f$.  Let $M_{\mu, \sigma^2}$ be a Gaussian multiplier.  Assume that for all $i \in [k]$, either $\sigma_i \geq 4l\sigma$ or the $10 l$-significant intervals of $G_i$ and $M_{\mu, \sigma^2}$ do not intersect.  Then in $\poly(1/\eps)$ time and with high probability, we can compute a weighted sum $\wt{M}$ of at most $O(\log (1/\eps)) $ Gaussians such that 
 \[
 \norm{\wt{M} - M_{\mu, \sigma^2}f }_1 \leq \eps + \poly(\log (1/\eps)) \norm{ M_{\mu, \sigma^2} (\mcl{M} - f)}_1 \,.
 \]
 \end{corollary}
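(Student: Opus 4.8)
The plan is to localize $f$ by multiplying it by the Gaussian multiplier $M_{\mu,\sigma^2}$, show that the localized function is $L^1$-close to an $O(1)$-well-conditioned mixture (after an affine change of variables), and then reconstruct it with Corollary \ref{coro:learn-tight-mixture}.

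First I would use Claim \ref{claim:mult-by-gaussian} to write $M_{\mu,\sigma^2}\mcl{M} = \sum_{j=1}^k w_j\alpha_j N(\nu_j,\tau_j^2)$, where $\alpha_j = (1+\sigma_j^2/\sigma^2)^{-1/2}e^{-(\mu_j-\mu)^2/(2(\sigma_j^2+\sigma^2))}$, $\tau_j^2 = \sigma_j^2\sigma^2/(\sigma_j^2+\sigma^2)$, and $\nu_j = \mu + (\mu_j-\mu)\sigma^2/(\sigma_j^2+\sigma^2)$. I then split the components according to whether the $10l$-significant interval of $G_j$ meets that of $M_{\mu,\sigma^2}$. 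If it does not, then $|\mu_j-\mu| > 10l(\sigma_j+\sigma)$, which forces $(\mu_j-\mu)^2/(2(\sigma_j^2+\sigma^2)) > 50l^2 \ge 50\log(1/\eps)$ and hence $\alpha_j \le \eps^{50}$; summing over all such components (using $\sum_j w_j = 1$) they contribute at most $\eps^{50}$ in $L^1$. If it does, then by the hypothesis of the corollary we must have $\sigma_j \ge 4l\sigma$, so $\tau_j^2 = \sigma^2/(1+\sigma^2/\sigma_j^2) \in [(1-1/(16l^2))\sigma^2,\,\sigma^2]$, and combining $|\mu_j-\mu| \le 10l(\sigma_j+\sigma)$ with $\sigma/\sigma_j \le 1/(4l)$ gives $|\nu_j-\mu| \le |\mu_j-\mu|\sigma^2/\sigma_j^2 = O(\sigma)$. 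Hence, letting $\mcl{M}'$ denote $\sum_j w_j\alpha_j N(\nu_j,\tau_j^2)$ restricted to the non-disjoint components, $\norm{M_{\mu,\sigma^2}\mcl{M} - \mcl{M}'}_1 \le \eps^{50}$, and after the affine substitution $y = (x-\mu)/\sigma$ every component of $\mcl{M}'$ has mean in a fixed bounded interval and variance in $[1-1/(16l^2),\,1]$; that is, $\mcl{M}'$ is a scalar multiple of an $O(1)$-well-conditioned mixture of Gaussians.

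It follows that $\norm{M_{\mu,\sigma^2}f - \mcl{M}'}_1 \le \delta := \eps^{50} + \norm{M_{\mu,\sigma^2}(\mcl{M}-f)}_1$. Since $M_{\mu,\sigma^2}f$ is nonnegative with total mass $m := \int M_{\mu,\sigma^2}f \le 1$, computable from the explicit form of $f$ (if $m=0$, output $\wt{M}=0$), the function $\frac1m M_{\mu,\sigma^2}f$ is a distribution, and a short triangle-inequality computation shows that after the substitution $y=(x-\mu)/\sigma$ it lies within TV distance $\delta/m$ of the $O(1)$-well-conditioned GMM obtained by normalizing $\mcl{M}'$. I would then apply Corollary \ref{coro:learn-tight-mixture} with target accuracy $\eps_0 = \eps^{10}$ to obtain, in $\poly(1/\eps)$ time and with high probability, a mixture $\wt{\mcl{N}}$ of $O(\log(1/\eps_0)) = O(\log(1/\eps))$ Gaussians whose TV distance from $\frac1m M_{\mu,\sigma^2}f$ (in the transformed coordinates) is at most $\eps_0^2 + \poly(\log 1/\eps)\cdot\delta/m$. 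Undoing the substitution and multiplying by $m$ produces the desired $\wt{M}$, a weighted sum of $O(\log 1/\eps)$ Gaussians with $\norm{\wt{M} - M_{\mu,\sigma^2}f}_1 \le 2m\eps_0^2 + 2\poly(\log 1/\eps)\,\delta \le \eps + \poly(\log 1/\eps)\norm{M_{\mu,\sigma^2}(\mcl{M}-f)}_1$, using $m\le 1$ and $\eps$ small.

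The main obstacle I anticipate is that the surviving components are only $O(1)$-well-conditioned rather than literally $0.5$-well-conditioned: multiplying by $M_{\mu,\sigma^2}$ collapses their variances to essentially $\sigma^2$ but only confines their means to an $O(\sigma)$-window, and no single affine map can simultaneously normalize the variances to $\approx 1$ and the means to within $0.5$. So the crux is to observe that Corollary \ref{coro:learn-tight-mixture} is robust to the well-conditioning constant — its proof uses it only through the radius of the interval on which the Taylor expansion of Corollary \ref{coro:gaussian-taylorseries} must be accurate and through the size of the grid in Lemma \ref{lem:gaussian-learnmanycomponents}, both of which remain $O(\log 1/\eps)$ and $\poly(1/\eps)$ respectively for any constant well-conditioning parameter. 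The remaining ingredients — controlling the $L^1$ error from normalizing by $m$ (harmless since $m\le 1$ and $M_{\mu,\sigma^2}f \ge 0$), checking that the explicit computations on $M_{\mu,\sigma^2}f$ can be carried out efficiently, and picking $\eps_0$ small enough that the coefficient of $\eps$ in the final bound is $1$ — are routine bookkeeping.
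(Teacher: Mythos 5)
Your proposal follows essentially the same route as the paper: apply Claim \ref{claim:mult-by-gaussian} to write $M_{\mu,\sigma^2}\mcl{M}$ as a reweighted mixture, discard the components whose $10l$-significant intervals miss $M_{\mu,\sigma^2}$ (bounding their total $L^1$ mass by a power of $\eps$), observe that the surviving components become well-conditioned after the affine change of variables $y=(x-\mu)/\sigma$, normalize $M_{\mu,\sigma^2}f$ to a distribution, and hand it to Corollary \ref{coro:learn-tight-mixture}. The one place you are more careful than the paper is worth noting: the paper asserts that after the affine map the surviving mixture is literally $0.5$-well-conditioned, but its bound $\lvert\nu_i-\mu\rvert\le\sigma/2$ uses $l(\sigma+\sigma_i)$ where the intersection of $10l$-significant intervals actually gives $10l(\sigma+\sigma_i)$, which only yields $\lvert\nu_i-\mu\rvert=O(\sigma)$ under the stated hypothesis $\sigma_i\ge 4l\sigma$. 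Your remark that the mixture is therefore only $O(1)$-well-conditioned, together with the observation that Corollary \ref{coro:learn-tight-mixture} (and Lemma \ref{lem:gaussian-learnmanycomponents}, Corollary \ref{coro:gaussian-taylorseries}) tolerate any constant well-conditioning parameter at the cost of absolute constants in the $O(\log 1/\eps)$ and $\poly(1/\eps)$ bounds, is the correct way to close this gap; alternatively one could tighten the hypothesis to, say, $\sigma_i\ge 40l\sigma$ to recover the literal $0.5$ bound. Everything else in your argument matches the paper's, modulo harmless differences in the exponents chosen (the paper uses $\eps^{10}$ where you use $\eps^{50}$).
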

 \begin{proof}
 We compute $M_{\mu, \sigma^2}f $ and let $C  = \norm{M_{\mu, \sigma^2}f}_1$.  If $C \leq \eps$ then we may simple output $0$.  Otherwise, we will apply Corollary \ref{coro:learn-tight-mixture} on $M_{\mu, \sigma^2}f /C$ and multiply the result by $C$.  We must first verify the conditions of Corollary \ref{coro:learn-tight-mixture}.  Let $S \subset [k]$ be the indices such that the $10l$-significant intervals of $G_i$ and $M_{\mu, \sigma^2}$ intersect.  First for $i \notin S$, by Claim \ref{claim:mult-by-gaussian},
 \[
 \norm{G_i M_{\mu, \sigma^2}}_1 \leq e^{-\frac{(\mu_i - \mu)^2}{2(\sigma_i^2 + \sigma^2)}} \leq e^{-10l^2} \leq \eps^{10} \,.
 \]
  Let 
 \[
 \mcl{M}' = \sum_{i \in S} w_iG_i \,.
 \]
 Then we know
 \[
 \norm{ \frac{M_{\mu, \sigma^2}f}{C}  - \frac{M_{\mu, \sigma^2} \mcl{M}'}{C} }_1 \leq \eps^{9} + \frac{\norm{M_{\mu, \sigma^2}(f - \mcl{M} )}_1}{C}
 \]
 Next, for $i \in S$,
 \[
G_i M_{\mu, \sigma^2} = w_i' N\left( \frac{\mu \sigma_i^2 + \mu_i \sigma^2}{\sigma_i^2 + \sigma^2} , \frac{\sigma_i^2 \sigma^2}{\sigma_i^2 + \sigma^2} \right)
 \]
 for some weight $w_i'$ and since we must have $\sigma_i \geq 4l\sigma$, then
 \begin{align*}
 &\frac{\sigma^2}{2} \leq \frac{\sigma_i^2 \sigma^2}{\sigma_i^2 + \sigma^2} \leq \sigma^2 \\
 &\left\lvert \frac{\mu \sigma_i^2 + \mu_i \sigma^2}{\sigma_i^2 + \sigma^2} - \mu \right \rvert = \left \lvert  \frac{(\mu_i - \mu) \sigma^2}{\sigma_i^2 + \sigma^2} \right \rvert \leq \frac{l(\sigma + \sigma_i) \sigma^2}{\sigma_i^2 + \sigma^2} \leq \frac{\sigma}{2}
 \end{align*}
Let 
\[
\mcl{M}'' = \sum_{i \in S}\frac{w_i'}{\sum_{i \in S} w_i'} N\left( \frac{\mu \sigma_i^2 + \mu_i \sigma^2}{\sigma_i^2 + \sigma^2} , \frac{\sigma_i^2 \sigma^2}{\sigma_i^2 + \sigma^2} \right) \,.
\]
Then we deduce, since $\norm{M_{\mu, \sigma^2}f/C}_1 = 1$, that
\begin{align*}
\norm{ \frac{M_{\mu, \sigma^2}f}{C}  - \mcl{M}'' }_1 &\leq \norm{\mcl{M}'' - \frac{M_{\mu, \sigma^2} \mcl{M}'}{C} }_1  + \norm{ \frac{M_{\mu, \sigma^2}f}{C}  - \frac{M_{\mu, \sigma^2} \mcl{M}'}{C} }_1 \\& \leq  2 \norm{ \frac{M_{\mu, \sigma^2}f}{C}  - \frac{M_{\mu, \sigma^2} \mcl{M}'}{C} }_1 \\ &\leq \eps^{8} + 2\frac{\norm{M_{\mu, \sigma^2}(f - \mcl{M} )}_1}{C} 
\end{align*}
and further, after applying a suitable linear transformation (taking $(\mu, \sigma^2) \rightarrow (0,1)$) that the mixture $\mcl{M}''$ is $0.5$-well-conditioned.  Thus, we can apply Corollary \ref{coro:learn-tight-mixture} and compute a weighted sum of $O(\log (1/\eps))$ Gaussians, $\wt{\mcl{M}}$ such that 
\[
\norm{\frac{M_{\mu, \sigma^2}f}{C}  - \wt{\mcl{M}}}_1 \leq \poly(\log (1/\eps))\left( \eps^{8} +  \frac{\norm{M_{\mu, \sigma^2}(f - \mcl{M} )}_1}{C} \right) \,.
\]
Now we can simply output $C \wt{\mcl{M}}$ (which is still a weighted sum of $O(\log (1/\eps))$ Gaussians) and we are done.
\end{proof}

 Recall that Theorem \ref{thm:approx-interval} shows how to express an interval as a sum of Gaussian multipliers.  Combining Theorem \ref{thm:approx-interval} with Corollary \ref{coro:reconstruct-multiplier}, we show that we can approximate a GMM over an interval as long as the interval does not overlap with a component that is much thinner than it.
 
 \begin{lemma}\label{lem:reconstruct-interval}
 Let $\mcl{M} = w_1G_1 + \dots + w_kG_k$ be an arbitrary mixture of Gaussians where $G_i = N(\mu_i, \sigma_i^2)$.  Let $\eps > 0$ be some parameter and let $l = \lceil \sqrt{\log (1/\eps)} \rceil $.  Assume we are given access to a distribution $f$.  Let $I = [a,b]$ be an interval.  Assume that for all $i \in [k]$, either $\sigma_i \geq (b - a)$ or the $10 l$-significant interval of $G_i$ does not intersect $I$.  Then in $\poly(1/\eps)$ time and with high probability, we can compute a weighted sum $\wt{\mcl{M}}$ of at most $\poly(\log (1/\eps)) $ Gaussians such that 
 \[
 \norm{\wt{\mcl{M}} - f \cdot 1_{I} }_1 \leq  \poly(\log (1/\eps)) \left( \eps +  \norm{ 1_{I} (\mcl{M} - f)}_1 \right) 
 \]
 where $1_{I}$ denotes the indicator function of $I$.
 \end{lemma}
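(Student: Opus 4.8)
The plan is to express $1_I$ as a short nonnegative combination of Gaussian multipliers via Theorem \ref{thm:approx-interval}, reconstruct each localized piece $M_{\mu_j,\sigma_j^2} f$ using Corollary \ref{coro:reconstruct-multiplier}, and sum the results. First I would apply Theorem \ref{thm:approx-interval} in the normalization of Definition \ref{def:interval-function}, namely take $\Phi = \mcl{I}_{\eps', (b-a)/2}^{((a+b)/2)}$ where $\eps' = \poly(\eps)$ is chosen slightly smaller than $\eps$, so that
\[
\Phi(x) = w_1 M_{\mu_1, \sigma_1^2}(x) + \dots + w_n M_{\mu_n, \sigma_n^2}(x), \qquad n = O\!\left((\log 1/\eps)^2\right),
\]
with $0 \le w_j \le 1$ and such that: (i) $0 \le \Phi \le 1 + \eps$ everywhere; (ii) $\Phi \le \eps$ outside $I$; (iii) $\Phi \ge 1 - \eps$ on a sub-interval $I_0 \subseteq I$ with $|I \setminus I_0| = O(\eps |I|)$; and (iv) the $10l$-significant interval of every $M_{\mu_j, \sigma_j^2}$ is contained in $I$. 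Property (iv) is exactly why one must pass an accuracy parameter $\eps'$ slightly below $\eps$ to Theorem \ref{thm:approx-interval}: the theorem controls the $10\sqrt{\log 1/\eps'}$-significant intervals, and taking $\sqrt{\log 1/\eps'} \ge l = \lceil \sqrt{\log 1/\eps}\rceil$ makes those contain the $10l$-significant intervals. In particular each $\mu_j \in I$ and $\sigma_j \le |I|/(20l)$.

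The crucial point is that (iv), together with the hypothesis of the lemma, makes every multiplier $M_{\mu_j,\sigma_j^2}$ satisfy the hypotheses of Corollary \ref{coro:reconstruct-multiplier}. Fix $j$ and $i \in [k]$: if $\sigma_i \ge b-a$ then $\sigma_i \ge (b-a) \ge 4l\sigma_j$, so the first alternative holds; and if $\sigma_i < b-a$ then by assumption the $10l$-significant interval of $G_i$ is disjoint from $I$, which by (iv) contains the $10l$-significant interval of $M_{\mu_j,\sigma_j^2}$, so the second alternative holds. I would therefore invoke Corollary \ref{coro:reconstruct-multiplier} once for each $j$, obtaining a weighted sum $\wt{M}_j$ of $O(\log 1/\eps)$ Gaussians with
\[
\norm{\wt{M}_j - M_{\mu_j,\sigma_j^2} f}_1 \le \eps + \poly(\log 1/\eps)\,\norm{M_{\mu_j,\sigma_j^2}(\mcl{M} - f)}_1 \,,
\]
and output $\wt{\mcl{M}} = \sum_{j=1}^{n} w_j \wt{M}_j$, a weighted sum of $O((\log 1/\eps)^3) = \poly(\log 1/\eps)$ Gaussians. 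The running time is $n$ calls to Corollary \ref{coro:reconstruct-multiplier}, hence $\poly(1/\eps)$, and a union bound over the $n$ calls gives the stated high-probability guarantee.

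It remains to bound the $L^1$ error via $\norm{\wt{\mcl{M}} - f 1_I}_1 \le \norm{\wt{\mcl{M}} - \Phi f}_1 + \norm{\Phi f - f 1_I}_1$. Since $\sum_j w_j M_{\mu_j,\sigma_j^2} = \Phi$, the first term equals $\norm{\sum_j w_j(\wt{M}_j - M_{\mu_j,\sigma_j^2} f)}_1 \le n\eps + \poly(\log 1/\eps)\int \Phi|\mcl{M}-f|$, and by (i) on $I$ and (ii) off $I$ together with $\norm{\mcl{M}-f}_1 \le 2$ we get $\int \Phi|\mcl{M}-f| \le (1+\eps)\norm{1_I(\mcl{M}-f)}_1 + 2\eps$. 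For the second term I would split over $I_0$, over $I\setminus I_0$, and over the complement of $I$: on $I_0$, $|\Phi-1|\le\eps$ by (iii); off $I$, $\Phi\le\eps$ by (ii); on $I\setminus I_0$, $|\Phi-1|\le 2$, so the contribution there is at most $2\int_{I\setminus I_0} f \le 2\bigl(\int_{I\setminus I_0}\mcl{M} + \norm{1_I(\mcl{M}-f)}_1\bigr)$.

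The one genuine computation — and the main obstacle — is controlling $\int_{I\setminus I_0}\mcl{M}$, i.e. the mass that $\mcl{M}$ places in the thin transition zone. Here I would use that components $G_i$ with $\sigma_i < b-a$ have $10l$-significant intervals disjoint from $I$ and hence contribute at most $\poly(\eps)$ in total, while components with $\sigma_i \ge b-a$ have density at most $1/((b-a)\sqrt{2\pi})$ and $|I\setminus I_0| = O(\eps(b-a))$, so their total contribution is $O(\eps)$; thus $\int_{I\setminus I_0}\mcl{M} = O(\eps)$. Combining the two terms gives $\norm{\wt{\mcl{M}} - f 1_I}_1 \le \poly(\log 1/\eps)\bigl(\eps + \norm{1_I(\mcl{M}-f)}_1\bigr)$, as desired. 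I expect the delicate part to be arranging the approximation of $1_I$ so that the significant intervals of all constituent multipliers remain inside $I$ (forcing the slightly smaller accuracy parameter, which is precisely what activates Corollary \ref{coro:reconstruct-multiplier}), and, relatedly, the transition-zone mass estimate just described.
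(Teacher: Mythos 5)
Your proposal is correct and takes essentially the same route as the paper: expand an approximate interval indicator $\mcl{I}_{\eps,(b-a)/2}^{((a+b)/2)}$ via Theorem \ref{thm:approx-interval}, reconstruct each localized piece with Corollary \ref{coro:reconstruct-multiplier} (after verifying its two alternatives exactly as you do), sum with the weights, and control $\norm{\Phi f - f\,1_I}_1$ by bounding the mass of $\mcl{M}$ in the thin transition zone using the structural hypothesis. Your one extra bit of care --- running Theorem \ref{thm:approx-interval} at a slightly finer accuracy $\eps' < \eps$ so that the $10l$-significant intervals (rather than only the $10\sqrt{\log 1/\eps}$-significant ones the theorem literally controls) stay inside $I$ and the multiplier widths satisfy $\sigma_j \le (b-a)/(4l)$ --- fills in a detail the paper asserts without comment, and is worth keeping.
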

 \begin{proof}
 Consider the function $\mcl{I} = \mcl{I}_{\eps , (b-a)/2}^{(a + b)/2}$ (recall Definition \ref{def:interval-function}).  Now note that by Theorem \ref{thm:approx-interval}, $\mcl{I}$ can be written in the form
 \[
 \mcl{I}  = \wt{w_1} M_{\wt{\mu_1}, \wt{\sigma_1}^2} + \dots + \wt{w_n} M_{\wt{\mu_n}, \wt{\sigma_n}^2}
 \]
 where $n = O(\log^2 1/\eps)$.  Furthermore, for all $i \in [n]$, we have $0 \leq \wt{w_i} \leq 1$ and $\wt{\sigma_i} \leq (b - a)/(4l)$ and the $10l$-significant intervals of $M_{\wt{\mu_i}, \wt{\sigma_i}^2}$ are all contained in the interval $[a,b]$.  Thus we can apply Corollary \ref{coro:reconstruct-multiplier} on $M_{\wt{\mu_i}, \wt{\sigma_i}^2} f$ for all $i \in [n]$.  Adding the results with the corresponding weights $\wt{w_1}, \dots , \wt{w_n}$, we obtain a function $\wt{\mcl{M}}$ that is a weighted sum of at most $\poly(\log (1/\eps)) $ Gaussians such that
 \begin{align*}
 \norm{\wt{\mcl{M}} - f \mcl{I}}_1 &\leq \poly(\log (1/\eps)) \left( \eps +  \sum_{i = 1}^n \wt{w_i}\norm{ M_{\wt{\mu_i}, \wt{\sigma_i}^2} (\mcl{M} - f)}_1 \right) \\ & = \poly(\log (1/\eps)) \left( \eps +  \norm{ \mcl{I} (\mcl{M} - f)}_1 \right)  \,.
\end{align*}
 Thus,
 \begin{equation}\label{eq:l1bound1}
 \norm{\wt{\mcl{M}} - \mcl{M} \mcl{I}}_1 \leq \poly(\log (1/\eps)) \left( \eps +  \norm{ \mcl{I} (\mcl{M} - f)}_1 \right) \,.
 \end{equation}
 Next, by the properties in Theorem \ref{thm:approx-interval},
 \begin{align*}
 \norm{\mcl{M}(\mcl{I} - 1_I )}_1 &\leq \eps \int_{-\infty}^{\infty} \mcl{M} + \int_{a}^{a + \eps(b- a)} \mcl{M} + \int_{b - \eps(b - a)}^b \mcl{M} \\ &\leq \eps +  \left( \sum_{j=1}^k w_j  \int_{a}^{a + \eps(b- a)} G_j  + w_j\int_{b - \eps(b - a)}^b G_j \right)\,.
\end{align*}
Consider one of the component Gaussians $G_j$ where $j \in [k]$.  If the $10l$-significant interval of $G_j$ does not intersect $[a,b]$ then it is clear that the total mass of $G_j$ on the interval $[a,b]$ is at most $\eps$.  Otherwise, we know that the standard deviation of $G_j$ is at least $b-a$ which means that its mass on the set $[a, a+ \eps(b - a)] \cup [b - \eps(b - a), b]$ is at most $O(\eps)$.  Thus we conclude that 
\begin{equation}\label{eq:l1bound2}
 \norm{\mcl{M}(\mcl{I} - 1_I )}_1 \leq  O(\eps) \,.
 \end{equation}
Also note that by the properties in Theorem \ref{thm:approx-interval}
\begin{equation}\label{eq:l1bound3}
\norm{ \mcl{I} (\mcl{M} - f)}_1 \leq \eps \int_{-\infty}^{\infty} |\mcl{M} - f| + 2\int_{a}^b |\mcl{M} - f| \leq 2(\eps + \norm{1_I(\mcl{M} - f)}_1) \,.
\end{equation}
Putting together (\ref{eq:l1bound1}, \ref{eq:l1bound2}, \ref{eq:l1bound3}) , we conclude 
\[
\norm{\wt{\mcl{M}} - f \cdot 1_I}_1 \leq \norm{ 1_{I} (\mcl{M} - f)}_1  +  \norm{\wt{\mcl{M}} - \mcl{M}1_I}_1 \leq \poly(\log (1/\eps)) \left( \eps +  \norm{ 1_{I} (\mcl{M} - f)}_1 \right) 
\]
and we are done.
 \end{proof}

\subsection{Structural Properties}
Lemma \ref{lem:reconstruct-interval} allows us to reconstruct the unknown GMM $\mcl{M}$ over certain intervals.  However, it cannot be applied to an arbitrary interval (because an interval may overlap with a component that is too thin).  We will now prove several structural results that will imply that there exist $\wt{O}(k)$ intervals for which the conditions of Lemma \ref{lem:reconstruct-interval} are satisfied (i.e. these intervals do not overlap with components that are much thinner than themselves) and such that the union of these intervals contains most of the mass of $\mcl{M}$.  Then, to complete the proof of Theorem \ref{thm:main-gmm}, we show how to find such a set of $\wt{O}(k)$ intervals using a dynamic program.

First, we define a modified density function for a GMM $\mcl{M} = w_1G_1 + \dots + w_kG_k$ where we modify each component Gaussian by restricting it to its $10l$-significant interval (and making it $0$ outside).  It is clear that this modified function is close to $\mcl{M}$ in $L^1$-distance but it will be convenient to use in the analysis later on.
\begin{definition}
For a mixture of Gaussians $\mcl{M} = w_1G_1 + \dots + w_kG_k$ where $G_j = N(\mu_j, \sigma_j^2)$ and a parameter $l$, define the function $\mcl{M}_{\text{sig},l}(x)$ to be, at each point $x \in \R$, equal to the weighted sum of the components $G_j$ of $\mcl{M}$ such that $x$ is in the $10l$-significant interval of $G_j$.  Formally, 
\[
\mcl{M}_{\text{sig},l}(x) = \sum_{\substack{j \text{ such that}\\  |x  - \mu_j| \leq 10l\sigma_j}} w_j G_j(x) \,.
\]
\end{definition}
The following claim is immediate from the definition.
\begin{claim}\label{claim:significant-function}
Let $\eps > 0$ be some parameter and let $l = \lceil \sqrt{\log 1/\eps} \rceil$.  Then
\[
\norm{\mcl{M} - \mcl{M}_{\text{sig}, l}}_1 \leq \eps
\]
\end{claim}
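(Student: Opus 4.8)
The plan is to show that the pointwise difference between $\mcl{M}$ and $\mcl{M}_{\text{sig},l}$ is exactly the sum of the ``tail parts'' of the component Gaussians, and then to control each tail by the standard Gaussian tail bound. Concretely, for each $j$ the component $w_j G_j$ contributes to $\mcl{M}_{\text{sig},l}(x)$ precisely when $|x - \mu_j| \le 10l\sigma_j$, so
\[
\mcl{M}(x) - \mcl{M}_{\text{sig},l}(x) = \sum_{j=1}^k w_j G_j(x)\, \mathbf{1}\!\left[|x - \mu_j| > 10l\sigma_j\right],
\]
where the right-hand side is a sum of nonnegative functions (here I will just write $1[\cdot]$ for the indicator rather than a new macro). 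Hence by the triangle inequality $\norm{\mcl{M} - \mcl{M}_{\text{sig},l}}_1 \le \sum_j w_j \int_{|x-\mu_j|>10l\sigma_j} G_j(x)\,dx = \sum_j w_j \Pr_{X \sim N(\mu_j,\sigma_j^2)}\!\left[|X-\mu_j| > 10l\sigma_j\right]$.

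Next I would invoke the standard Gaussian concentration fact (the same one already cited in the text, that a $1-\eps$ fraction of a Gaussian's mass lies in its $O(\sqrt{\log 1/\eps})$-significant interval): for any $t \ge 1$, $\Pr_{X\sim N(\mu,\sigma^2)}[|X-\mu| > t\sigma] \le 2e^{-t^2/2}$. Applying this with $t = 10l$ gives a tail probability of at most $2e^{-50 l^2}$. Since $l \ge \lceil \sqrt{\log 1/\eps}\rceil \ge \sqrt{\log 1/\eps}$, we have $l^2 \ge \log 1/\eps$, so each tail is at most $2e^{-50\log 1/\eps} = 2\eps^{50}$, which is at most $\eps$ for $\eps < 0.1$ (indeed with enormous room to spare).

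Finally, summing over the components and using that $w_1 + \dots + w_k = 1$ because $\mcl{M}$ is a mixture, we get $\norm{\mcl{M} - \mcl{M}_{\text{sig},l}}_1 \le \big(\sum_j w_j\big)\cdot 2\eps^{50} = 2\eps^{50} \le \eps$, as claimed. There is essentially no genuine obstacle here; the only points requiring a moment of care are (i) observing that the components' restrictions to their significant intervals are disjoint contributions so the difference is a clean sum of tails (no cancellation to worry about, since everything is nonnegative), and (ii) that the weights sum to one so the per-component bound aggregates without loss. The constant $10$ in the definition of the significant interval makes the exponent $50l^2$, so the bound holds with a large polynomial margin, which is why the claim is stated as ``immediate.''
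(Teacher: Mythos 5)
Your proof is correct and takes essentially the same approach as the paper, which gives the same argument in a single sentence (the mass of a Gaussian outside its $10l$-significant interval is at most $\eps$); you just spell out the tail bound $2e^{-(10l)^2/2}\le 2\eps^{50}\le\eps$ and the aggregation over weights summing to one.
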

\begin{proof}
The inequality holds because the total mass of a Gaussian outside of its $10l$-significant interval is at most $\eps$. 
\end{proof}

 We now present our first structural result.

\begin{claim}\label{claim:structural-characterization}
Let $\mcl{M} = w_1G_1 + \dots + w_kG_k$ be an arbitrary mixture of Gaussians where $G_i = N(\mu_i, \sigma_i^2)$.  Let $\eps > 0$ be some parameter and let $l = \lceil \sqrt{\log (1/\eps)} \rceil $.  There exist disjoint intervals $I_1, \dots , I_n$ with lengths, say $t_1, \dots , t_n$, where $n \leq 50kl$ with the following property:
\begin{itemize}
    \item For each interval $I_i$, for all $j \in [k]$ either the the $10l$-significant interval of $G_j$ is disjoint from $I_i$ or $\sigma_j \geq t_i$
    \item We have
    \[
    \norm{\mcl{M} - (1_{I_1} + \dots + 1_{I_n})\mcl{M}}_1 \leq \eps
    \]
\end{itemize}
\end{claim}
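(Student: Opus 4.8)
The plan is to reduce the claim to a combinatorial statement about tiling the union of the significant intervals by short intervals. For each $j$ write $J_j = [\mu_j - 10l\sigma_j,\ \mu_j + 10l\sigma_j]$ for the $10l$-significant interval of $G_j$, so $|J_j| = 20l\sigma_j$, with $l = \lceil \sqrt{\log 1/\eps}\rceil$ as in the statement. I claim it suffices to produce pairwise disjoint intervals $I_1,\dots,I_n$ with $n \le 50kl$, with $I_1\cup\cdots\cup I_n = J_1\cup\cdots\cup J_k$, and with the property that each $I_i$ meets only those $J_j$ having $\sigma_j \ge |I_i|$. The last property is exactly the first bullet (taking $t_i=|I_i|$); for the second bullet, observe that $\mcl{M} - \one_{\cup_i I_i}\mcl{M} = \mcl{M} - \one_{\cup_j J_j}\mcl{M}$ is pointwise between $0$ and $\mcl{M} - \mcl{M}_{\text{sig},l}$ (since everywhere $\one_{\cup_j J_j}\mcl{M} \ge \mcl{M}_{\text{sig},l} \ge 0$, both being sums of nonnegative terms), so Claim~\ref{claim:significant-function} gives $\norm{\mcl{M} - \one_{\cup_i I_i}\mcl{M}}_1 \le \norm{\mcl{M} - \mcl{M}_{\text{sig},l}}_1 \le \eps$.

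To build the $I_i$ I would sweep through the components in increasing order of standard deviation. Relabel so $\sigma_1 \le \sigma_2 \le \cdots \le \sigma_k$ and let $R_j = J_1\cup\cdots\cup J_{j-1}$ be the part already handled (so $R_1=\emptyset$). At step $j$, the leftover set $J_j \setminus R_j$ is a finite disjoint union of subintervals of $J_j$; tile each of them, from left to right, by intervals of length at most $\sigma_j$ that are contained in it, and add all of these to the collection. Then the new intervals are contained in and cover $J_j \setminus R_j$, so the whole collection has union $\bigcup_j (J_j\setminus R_j) = \bigcup_j J_j$, and all intervals are pairwise disjoint (the ones from step $j$ lie in $J_j \subseteq R_{j'}$ for every $j' > j$, which is avoided at step $j'$). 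The length condition follows from the ordering: a new interval $I$ at step $j$ has $|I|\le \sigma_j$ and $I\subseteq J_j$; it is disjoint from $J_{j'}$ for $j'<j$ since $J_{j'}\subseteq R_j$ is avoided; and any $J_{j'}$ with $j'>j$ meeting $I$ has $\sigma_{j'}\ge\sigma_j\ge|I|$.

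The remaining task — and the only nonroutine one — is bounding $n$. Classify the intervals created at step $j$ as \emph{interior} (left endpoint is the right endpoint of the previous tile of the same subinterval of $J_j\setminus R_j$) or \emph{boundary} (one per subinterval of $J_j\setminus R_j$). The number of interior intervals at step $j$ is at most $|J_j\setminus R_j|/\sigma_j \le |J_j|/\sigma_j = 20l$, for a total of at most $20kl$. For the boundary intervals, the number of subintervals of $J_j\setminus R_j$ is at most $c_j + 3$, where $c_j$ is the number of connected components of $R_j$ lying in the interior of $J_j$. The key observation is that adjoining $J_j$ to $R_j$ merges all $m_j \ge c_j$ components of $R_j$ that meet $J_j$ (at most two more than $c_j$, namely those straddling $\partial J_j$) into a single component, so $\#\mathrm{comp}(R_{j+1}) = \#\mathrm{comp}(R_j) - m_j + 1 \le \#\mathrm{comp}(R_j) - c_j + 1$; telescoping over $j$ and using $\#\mathrm{comp}(R_1) = 0$ gives $\sum_j c_j \le k$, hence at most $\sum_j (c_j+3) \le 4k$ boundary intervals in all. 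Therefore $n \le 20kl + 4k \le 50kl$. I expect this telescoping bound on how many new "gaps" the sweep can open up to be the crux; choosing the sweep order so that later (wider) components impose only weaker length constraints is what makes everything else go through routinely.
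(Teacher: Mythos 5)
Your proof is correct and is essentially the same argument as the paper's: both sort the components by standard deviation, sweep over the $10l$-significant intervals $J_j$, tile each leftover region $J_j \setminus R_j$ by pieces of length at most $\sigma_j$, and control the "border effect" via a component-counting argument. The paper bundles the interval count and connected-component count into a single potential that rises by at most $2$ per step (giving $\le 2k$ "parent" intervals, each subdivided into $\le 20l$ pieces); you split the tiles into interior and boundary and telescope $\sum_j c_j \le k-1$ directly, which is the same bookkeeping recast. Your second-bullet argument (dominating $\mcl{M}\cdot\one_{\cup I_i} \ge \mcl{M}_{\text{sig},l}$ pointwise and invoking Claim~\ref{claim:significant-function}) likewise matches the paper's.
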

\begin{proof}
Sort the Gaussians by their standard deviations, WLOG $\sigma_1 \leq \dots \leq \sigma_k$.  Now we will create several intervals $A_1, A_2, \dots $ and we will also associate each interval with one of the Gaussians $G_1, \dots , G_k$ which we will call its parent.

First, set $A_1$ to be the $10l$-significant interval of $G_1$.  Next, we will process the Gaussians $G_2, \dots , G_k$ in order.  For $G_j$, assume that the intervals we have created so far are $A_1, \dots , A_m$ (which will be disjoint by construction).  Now consider the $10l$-significant interval of $G_j$, say $L_j$.   Note that removing the union of the intervals $A_1, \dots , A_m$ from $L_j$ divides $L_j$ into several (at most $m+1$) disjoint intervals.  We label these intervals $A_{m+1}, A_{m+2}, \dots $ and set all of their parents to be $G_j$.  We then move onto $G_{j+1}$ and repeat the above process.  The following properties are immediate from the construction:
\begin{enumerate}
    \item If the parent of $A_i$ is $G_j$ then the length of $A_i$ is at most $20l\sigma_j$
    \item The union of all of the $A_i$ whose parent is among $G_1, \dots , G_j$ contains the $10l$-significant intervals of all of $G_1, \dots , G_j$
    \item If the parent of $A_i$ is $G_j$, then $A_i$ is disjoint from the $10l$-significant intervals of $G_1, \dots , G_{j-1}$ 
\end{enumerate}
Now we claim that at the end of the algorithm, the total number of intervals is at most $2k$.  To see this, say that after processing $G_{j-1}$, the intervals we have created are $A_1, \dots A_m$.  Now consider the potential that is the number of intervals $m$ plus the number of connected components in $A_1 \cup \dots \cup A_m$.  This potential can increase by at most $2$ when processing $G_j$ so thus the total number of intervals at the end of the execution is at most $2k$.  We will now assume that the intervals at the end of the execution are $A_1, \dots , A_{2k}$ (if there are less than $2k$ intervals then add a bunch of dummy intervals of length $0$).

We now describe a post-processing step.  For each of $A_1, \dots , A_{2k}$, if its parent is $G_j$ then divide it into intervals of length at most $\sigma_j$ and assign $G_j$ as the parent of all of these intervals.  By property 1, we can ensure that this creates a total of at most $50kl$ intervals, say $I_1, \dots , I_{n}$ where $n \leq 50kl$.  We use $t_1, \dots , t_n$ to denote their lengths.  We now prove that this set of intervals satisfies the desired properties.  First, note that the following properties are immediate from the construction:
\begin{enumerate}
    \item If the parent of $I_i$ is $G_j$ then $I_i$ is contained in the $10l$-significant interval of $G_j$ and $t_i \leq \sigma_j$
    \item  The union of all of $I_1, \dots , I_{n}$  contains the $10l$-significant intervals of all of $G_1, \dots , G_k$
    \item If the parent of $I_i$ is $G_j$, then $I_i$ is disjoint from the $10l$-significant intervals of $G_1, \dots , G_{j-1}$
\end{enumerate}

The first of the desired properties is clear since by construction if the parent of $I_i$ is $G_j$, then $t_i \leq \sigma_j$ and it must be disjoint from the $10l$-significant intervals of all of $G_1, \dots , G_{j-1}$ (where recall $G_1, \dots , G_k$ are sorted in increasing order of their standard deviation).  Now it remains to verify the second property.  Consider the function $\mcl{M}_{\text{sig},l}(x)$.  Recall by Claim \ref{claim:significant-function},
\begin{equation}\label{eq:approxbound1}
\norm{\mcl{M} - \mcl{M}_{\text{sig}, l}}_1 \leq \eps
\end{equation}
Next observe that
\[
\mcl{M}_{\text{sig}, l} =(1_{I_1} + \dots + 1_{I_{n}}) \mcl{M}_{\text{sig}, l} \,.
\]
Combining the above, we have
\[
\norm{\mcl{M} - (1_{I_1} + \dots + 1_{I_{n}}) \mcl{M}_{\text{sig},l}}_1 \leq \eps \,.
\]
However, note that we have 
\[
(1_{I_1} + \dots + 1_{I_{n}}) \mcl{M}_{\text{sig},l} \leq (1_{I_1} + \dots + 1_{I_{n}}) \mcl{M} \leq \mcl{M}
\]
everywhere along the real line.  Thus, we immediately get the desired inequality.
\end{proof}

The next structural result shows that the intervals $I_1, \dots , I_n$ obtained in Claim \ref{claim:structural-characterization} are ``findable" in the sense that if we draw many samples from $\mcl{M}$ (or from a distribution $f$ that is close to $\mcl{M}$), then with high probability, there will be samples close to the endpoints of each of $I_1, \dots , I_n$.  This will mean that algorithmically, it suffices to draw sufficiently many samples and then only consider intervals whose endpoints are given by a pair of samples.
 
\begin{claim}\label{claim:structural-part2}
Let $\mcl{M} = w_1G_1 + \dots + w_kG_k$ be an arbitrary mixture of Gaussians where $G_i = N(\mu_i, \sigma_i^2)$.  Let $\eps > 0$ be some parameter and let $l = \lceil \sqrt{\log (1/\eps)} \rceil $.  Let $f$ be a distribution.  Assume we are given samples from $f$, say $x_1, \dots , x_Q$ for some sufficiently large $Q = \poly(k/\eps)$.  Then with high probability, there exists pairs $\{x_{a_1}, x_{b_1} \}  ,\dots , \{x_{a_n}, x_{b_n} \} $ such that 
\begin{itemize}
     \item The intervals $ J_1 = [x_{a_1}, x_{b_1}], \dots , J_n = [x_{a_n}, x_{b_n}] $ are disjoint  \item $n \leq 50kl$
     \item For each interval $J_i$, for all $j \in [k]$ either the the $10l$-significant interval of $G_j$ is disjoint from $J_i$ or $\sigma_j \geq |x_{b_i} - x_{a_i}|$
     \item 
      \[
        \norm{\mcl{M} -(1_{J_1} + \dots + 1_{J_n}) \mcl{M}  }_1 \leq  4(\eps + \norm{\mcl{M} - f}_1) \,.
        \]
 \end{itemize}
 \end{claim}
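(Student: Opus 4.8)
The plan is to take the intervals $I_1 = [a_1,b_1], \dots, I_n = [a_n,b_n]$ produced by Claim \ref{claim:structural-characterization} (so $n \le 50kl$, they are disjoint, each avoids Gaussians thinner than its own length, and $\norm{\mcl{M} - (1_{I_1}+\dots+1_{I_n})\mcl{M}}_1 \le \eps$) and to ``snap'' the endpoints of each $I_i$ inward to nearby samples. Concretely, for each $i$ I would let $x_{a_i}, x_{b_i}$ be the leftmost and rightmost samples among $x_1,\dots,x_Q$ that lie inside $I_i$, and set $J_i = [x_{a_i}, x_{b_i}] \subseteq I_i$ (if $I_i$ contains no sample, replace it by a degenerate interval at an arbitrary sample, which contributes nothing). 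Disjointness of the $J_i$ and the bound $n \le 50kl$ are then immediate, and the thinness condition transfers for free: if the $10l$-significant interval of some $G_j$ meets $J_i \subseteq I_i$ it also meets $I_i$, hence $\sigma_j \ge |I_i| \ge |J_i| = |x_{b_i} - x_{a_i}|$. So everything reduces to bounding $\norm{\mcl{M} - (1_{J_1}+\dots+1_{J_n})\mcl{M}}_1$.

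For that, since $(1_{J_1}+\dots+1_{J_n})\mcl{M} \le (1_{I_1}+\dots+1_{I_n})\mcl{M} \le \mcl{M}$ pointwise and the $I_i$ are disjoint, I would split the error as $\norm{\mcl{M}-(1_{I_1}+\dots+1_{I_n})\mcl{M}}_1 + \sum_i \int_{I_i\setminus J_i}\mcl{M}$; the first term is $\le \eps$ by Claim \ref{claim:structural-characterization}, so I need the total ``fringe mass'' $\sum_i \int_{I_i\setminus J_i}\mcl{M}$ to be $O(\eps) + \norm{\mcl{M}-f}_1$. Passing from $\mcl{M}$ to $f$ costs at most $\sum_i \int_{I_i}|\mcl{M}-f| \le \norm{\mcl{M}-f}_1$ (disjointness again), so it suffices to show that with high probability each fringe $[a_i, x_{a_i}]$ and $[x_{b_i}, b_i]$ has $f$-mass at most some $\eta$ with $2n\eta \le \eps$.

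The key probabilistic step — and the part I would be most careful about — is this fringe estimate. Since $x_{a_i}$ is itself random one cannot bound $\Pr[\int_{a_i}^{x_{a_i}} f > \eta]$ directly, so I would introduce the deterministic point $z_i$ with $\int_{a_i}^{z_i} f = \min(\eta, \int_{I_i} f)$ and observe that the event $\{\int_{a_i}^{x_{a_i}} f > \eta\}$ forces all $Q$ i.i.d.\ samples to miss the fixed interval $[a_i, z_i]$, an event of probability at most $(1-\eta)^Q \le e^{-\eta Q}$ (and symmetrically on the right). Choosing $\eta = \eps/(100kl)$ and $Q = \poly(k/\eps)$ large enough that $e^{-\eta Q} \le \eps^{10}$, a union bound over the at most $100kl$ fringes gives the desired event with failure probability at most $\poly(k/\eps)\cdot\eps^{10}$.

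On this good event $\int_{I_i\setminus J_i} f \le 2\eta$ for every $i$, hence $\int_{I_i\setminus J_i}\mcl{M} \le 2\eta + \int_{I_i}|\mcl{M}-f|$, and summing over $i$ yields $\sum_i \int_{I_i\setminus J_i}\mcl{M} \le 2n\eta + \norm{\mcl{M}-f}_1 \le \eps + \norm{\mcl{M}-f}_1$. Combining with the split above gives $\norm{\mcl{M}-(1_{J_1}+\dots+1_{J_n})\mcl{M}}_1 \le 2\eps + \norm{\mcl{M}-f}_1 \le 4(\eps+\norm{\mcl{M}-f}_1)$, which is exactly the fourth bullet. The remaining bookkeeping (the degenerate placeholder intervals, and choosing the exponent in $Q = \poly(k/\eps)$ so as to swallow the $\poly(kl)$ union-bound factor and the factor $1/\eta = 100kl/\eps$) is routine.
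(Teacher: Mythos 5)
Your proof is correct, and it follows the same overall plan as the paper (start from the intervals $I_1,\dots,I_n$ of Claim~\ref{claim:structural-characterization}, snap their endpoints inward to observed samples, then control the ``fringe'' mass $\sum_i \int_{I_i\setminus J_i}\mcl{M}$), but the way you control the fringe is a genuinely different and cleaner route. The paper subdivides each $I_i$ into $\lceil (k/\eps)^2\rceil$ equal-length subintervals, labels a subinterval ``good'' if its $f$-mass exceeds $(\eps/k)^{10}$, picks samples from the extreme good subintervals, and then bounds the fringe by separating (i) the bad subintervals, whose $f$-mass is small by definition, from (ii) the two extreme good subintervals, whose $\mcl{M}_{\mathrm{sig},l}$-mass is small because every Gaussian meeting $I_i$ has $\sigma_j \ge t_i$ while the subintervals have length $\le (\eps/k)^2 t_i$. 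Your argument instead chooses $J_i$ to be the leftmost-to-rightmost samples landing in $I_i$ and bounds the $f$-mass of each fringe directly via the observation that $\{\int_{a_i}^{x_{a_i}}f > \eta\}$ forces all $Q$ i.i.d.\ samples to miss a \emph{deterministic} interval of $f$-mass $\eta$; this avoids both the subdivision and any further use of the wideness of the overlapping Gaussians, and it also sidesteps the auxiliary density $\mcl{M}_{\mathrm{sig},l}$ entirely, passing to $\mcl{M}$ only through the single term $\norm{\mcl{M}-f}_1$. Your route is a bit shorter and yields the slightly sharper constant $2\eps + \norm{\mcl{M}-f}_1$; the paper's route is fully deterministic in its case analysis (conditioning on which subintervals are good) at the cost of the extra discretization machinery. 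One thing to state explicitly rather than leave as bookkeeping: when $I_i$ contains no sample, the cleanest fix is to drop $I_i$ from the list entirely (this only lowers $n$), and then bound the uncovered mass $\int_{I_i}\mcl{M}$ by splitting on whether $\int_{I_i}f \ge 2\eta$ (low probability of no sample) or $\int_{I_i}f < 2\eta$ (small mass directly); the ``degenerate interval at an arbitrary sample'' phrasing risks breaking disjointness of the $J_i$ if that arbitrary sample happens to lie in some other $J_{i'}$.
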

 \begin{proof}
 Let $I_1, \dots , I_n$ be the intervals computed in Claim \ref{claim:structural-characterization} applied to the mixture $\mcl{M}$ and assume that their lengths are $t_1, \dots , t_n$.  Let $C = \lceil (k/\eps)^2 \rceil$.  For each interval $I_i$, divide it into  $C$ subintervals $I_i^1, \dots , I_i^C$ of length $t_i/C$ and assume that these subintervals are sorted in order.  We say one of these subintervals is good if 
 \[
 \int_{I_i^j} f \geq (\eps/k)^{10} \,.
 \]
 For an index $i$, let $c_i, d_i$ be the smallest and largest index such that $I_i^{c_i}, I_i^{d_i}$ are good respectively.  Then with high probability for all $i$, there will be samples, say $x_{a_i}, x_{b_i}$ in $I_i^{c_i}$ and  $I_i^{d_i}$.  Now we will form the intervals $J_i = [x_{a_i}, x_{b_i}]$.  The first two of the desired properties are clear.  The third follows from the statement of Claim \ref{claim:structural-characterization}. It remains to verify the last.  Similar to the proof of Claim \ref{claim:structural-characterization}, we consider the function $\mcl{M}_{\text{sig},l}$.  Note that
 \begin{align*}
 &\int_{I_i \backslash J_i} \mcl{M}_{\text{sig},l} \leq \int_{I_i^1} \mcl{M}_{\text{sig},l} + \dots + \int_{I_i^{c_i}} \mcl{M}_{\text{sig},l} + \int_{I_i^{d_i}} \mcl{M}_{\text{sig},l} + \dots + \int_{I_i^C} \mcl{M}_{\text{sig},l} \\ &\leq \norm{1_{I_i}(\mcl{M}_{\text{sig},l} - f)}_1 + \int_{I_i^1} f + \dots + \int_{I_i^{c_i - 1}}f + \int_{I_i^{d_i + 1}} f + \dots + \int_{I_i^C} f  \\ &\quad + \int_{I_i^{c_i}} \mcl{M}_{\text{sig},l} + \int_{I_i^{d_i}} \mcl{M}_{\text{sig},l} \\ & \leq \norm{1_{I_i}(\mcl{M}_{\text{sig},l} - f)}_1  +(\eps/k)^2 + \int_{I_i^{c_i}} \mcl{M}_{\text{sig},l} + \int_{I_i^{d_i}} \mcl{M}_{\text{sig},l}
 \end{align*}
 where the last step follows by the minimality and maximality of $c_i, d_i$.  Now by construction, the only Gaussians among $G_1, \dots , G_k$ whose $10l$-significant intervals intersect $I_i$ must have standard deviation at least $t_i$.  Since  $I_i^{c_i}, I_i^{d_i}$ each have length $(\eps/k)^2t_i$, we conclude
 \[
  \int_{I_i \backslash J_i} \mcl{M}_{\text{sig},l}  \leq 3(\eps/k)^2 + \norm{1_{I_i}(\mcl{M}_{\text{sig},l} - f)}_1\,.
 \]
 Thus, we have
 \[
 \norm{(1_{I_1} + \dots + 1_{I_n}) \mcl{M}_{\text{sig},l} - (1_{J_1} + \dots + 1_{J_n}) \mcl{M}_{\text{sig},l}}_1 \leq \eps + \norm{(\mcl{M}_{\text{sig},l} - f)}_1  \leq 2\eps + \norm{(\mcl{M} - f)}_1\,.
 \]
 where we used Claim \ref{claim:significant-function} in the last step. Now, using the statement of Claim \ref{claim:structural-characterization}, we deduce
 \[
 \norm{ \mcl{M} - (1_{J_1} + \dots + 1_{J_n}) \mcl{M}_{\text{sig},l}}_1 \leq 4\eps + \norm{(\mcl{M} - f)}_1 \,.
 \]
 Since 
\[
(1_{J_1} + \dots + 1_{J_{n}}) \mcl{M}_{\text{sig},l} \leq (1_{J_1} + \dots + 1_{J_{n}}) \mcl{M} \leq \mcl{M}
\]
everywhere along the real line, we immediately get the desired inequality.
 \end{proof}
 
\subsection{Finishing the Proof} 
We can now prove the key lemma and then Theorem \ref{thm:main-gmm} will follow as an immediate consequence since we can use the improper learner in Corollary \ref{coro:improper-learner}.  The lemma states that given explicit access to a distribution $f$ that is $\eps$-close to a GMM, $\mcl{M}$, with $k$ components, we can output a GMM, $\wt{\mcl{M}}$, with $\wt{O}(k)$ components that is $\wt{O}(\eps)$-close to $f$.  At a high-level, the proof involves attempting to reconstruct $f$ over various intervals using Lemma \ref{lem:reconstruct-interval} and then using a dynamic program to find a union of $\wt{O}(k)$ such intervals that approximates the entire function.  We use Claim \ref{claim:structural-part2} to argue that such a solution exists so our dynamic program must find it.

\begin{lemma}\label{lem:reconstruct-Gaussian}
Let $\mcl{M} = w_1G_1 + \dots + w_kG_k$ be an arbitrary mixture of Gaussians where $G_i = N(\mu_i, \sigma_i^2)$.  Let $\eps > 0$ be some parameter.  Assume we are given access to a distribution $f$.  Then in $\poly(k/\eps)$ time and with high probability, we can compute a mixture $\wt{\mcl{M}}$ of at most $k\poly(\log (k/\eps)) $ Gaussians such that 
\[
\norm{\wt{\mcl{M}} - f  }_1 \leq  \poly(\log (k/\eps)) \left( \eps +  \norm{ \mcl{M} - f}_1 \right)  \,.
\]
 \end{lemma}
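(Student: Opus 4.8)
The plan is to reduce everything to Lemma \ref{lem:reconstruct-interval} via sampling, exhaustive reconstruction over candidate intervals, and a dynamic program, as sketched before the statement. Fix an internal accuracy $\eps' = \eps/k$ and set $l = \lceil \sqrt{\log 1/\eps'}\rceil$. First I would draw $Q = \poly(k/\eps)$ samples $x_1,\dots,x_Q$ from $f$. By Claim \ref{claim:structural-part2} (applied with accuracy $\eps'$), with high probability there is a disjoint collection of intervals $J_1,\dots,J_n$ with $n \le 50kl$, all endpoints among the $x_i$, such that each $J_i$ satisfies the thinness hypothesis of Lemma \ref{lem:reconstruct-interval} (for every $j$, either the $10l$-significant interval of $G_j$ misses $J_i$, or $\sigma_j \ge |J_i|$), and $\norm{\mcl{M} - (1_{J_1}+\dots+1_{J_n})\mcl{M}}_1 \le 4(\eps' + \norm{\mcl{M}-f}_1)$. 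This family will serve as the ``witness'' that our algorithm has to compete with, even though the algorithm itself never sees it.

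Next, for every pair $x_a < x_b$ of samples, set $I = [x_a,x_b]$ and run the algorithm of Lemma \ref{lem:reconstruct-interval} (equivalently Corollary \ref{coro:reconstruct-multiplier} together with Theorem \ref{thm:approx-interval}) with accuracy $\eps'$ on $I$; this always terminates in $\poly(1/\eps)$ time and returns some $\wt{\mcl{M}}_I$ that is a weighted sum of $\poly(\log 1/\eps')$ Gaussians, regardless of whether the thinness hypothesis holds for $I$. Since we have explicit access to $f$, we can compute (to negligible error, exactly as in Section \ref{sec:implementation}) the quantity $e_I = \norm{\wt{\mcl{M}}_I - f\cdot 1_I}_1$. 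I then run a dynamic program over the sorted samples to find a disjoint subfamily $I_{j_1},\dots,I_{j_m}$ with $m \le 50kl$ minimizing
\[
\Phi \;=\; \sum_{s=1}^m e_{I_{j_s}} \;+\; \int_{\R \setminus (I_{j_1}\cup\dots\cup I_{j_m})} f \,,
\]
and output $\wt{\mcl{M}} = \wt{\mcl{M}}_{I_{j_1}} + \dots + \wt{\mcl{M}}_{I_{j_m}}$. The DP is a routine interval-selection recursion over the $\binom{Q}{2}$ candidate intervals, keyed by the current sample point and the number of intervals used so far, runs in $\poly(k/\eps)$ time, and since each chosen interval contributes $\poly(\log 1/\eps')$ Gaussians with $m \le 50kl$, the output has $k\poly(\log(k/\eps))$ components.

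For correctness, since the $I_{j_s}$ are disjoint we have $\sum_s f\cdot 1_{I_{j_s}} = f\cdot 1_{\bigcup_s I_{j_s}}$, so by the triangle inequality $\norm{\wt{\mcl{M}} - f}_1 \le \sum_s e_{I_{j_s}} + \int_{\R\setminus\bigcup_s I_{j_s}} f = \Phi$. To bound the optimum $\Phi$, note the witness family $\{J_1,\dots,J_n\}$ is feasible for the DP, and for it Lemma \ref{lem:reconstruct-interval} gives $e_{J_i} \le \poly(\log 1/\eps')(\eps' + \norm{1_{J_i}(\mcl{M}-f)}_1)$; summing over $i$ and using disjointness, $\sum_i e_{J_i} \le \poly(\log 1/\eps')(n\eps' + \norm{\mcl{M}-f}_1) \le \poly(\log 1/\eps')(k\eps' + \norm{\mcl{M}-f}_1)$. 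Also $\int_{\R\setminus\bigcup_i J_i} f \le \norm{f-\mcl{M}}_1 + \norm{\mcl{M} - (1_{J_1}+\dots+1_{J_n})\mcl{M}}_1 \le \norm{f-\mcl{M}}_1 + 4(\eps' + \norm{\mcl{M}-f}_1)$ by Claim \ref{claim:structural-part2}. Hence $\Phi \le \poly(\log 1/\eps')(k\eps' + \norm{\mcl{M}-f}_1)$, and substituting $\eps' = \eps/k$ yields $\norm{\wt{\mcl{M}} - f}_1 \le \poly(\log(k/\eps))(\eps + \norm{\mcl{M}-f}_1)$. Theorem \ref{thm:main-gmm} then follows by first replacing $f$ with the explicit piecewise-polynomial distribution from Corollary \ref{coro:improper-learner}.

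The main obstacle is not any single calculation but the organization: verifying that the algorithm genuinely needs no knowledge of $\mcl{M}$ or of which intervals are ``good'' (it reconstructs every candidate interval and lets the DP pick the cheapest feasible family, with the existence of a cheap witness family guaranteed by Claims \ref{claim:structural-characterization}–\ref{claim:structural-part2}); setting up the DP so that both the uncovered mass of $f$ and the bound $m \le 50kl$ on the number of intervals are tracked correctly; and choosing the internal accuracy $\eps' = \eps/k$ so that the $n\eps'$ loss coming from up to $n = O(k\sqrt{\log(k/\eps)})$ localizations is absorbed into the target error.
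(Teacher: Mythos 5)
Your proposal is correct and follows essentially the same route as the paper: same internal accuracy $\eps' = \eps/k$, same sampling, the same per-interval reconstruction via Lemma \ref{lem:reconstruct-interval}, and the same interval-selection dynamic program with the witness family from Claim \ref{claim:structural-part2} certifying a cheap feasible solution; your objective $\Phi = \sum_s e_{I_{j_s}} + \int_{\R\setminus\bigcup I_{j_s}} f$ with $e_I = \norm{\wt{\mcl{M}}_I - f\cdot 1_I}_1$ is in fact algebraically identical to the paper's two-term DP objective, just presented in a slightly more transparent form. One small misattribution: the computability of $e_I$ in the GMM setting is handled in the paper via the sampling-based Claim \ref{claim:testL1}, not Section \ref{sec:implementation} (which covers the Fourier case), but this does not affect the argument.
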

\begin{proof}
Note that it suffices to compute a weighted sum of Gaussians that satisfies the desired inequality since rescaling such a weighted sum to a mixture will at most increase the $L^1$ error by a factor of $2$.  Thus, from now on, we will not worry about ensuring the mixing weights sum to $1$.

Let $\gamma = \eps/k$ and $l = \lceil \sqrt{\log 1/\gamma}\rceil $. First draw $Q = \poly(k/\eps)$ samples $x_1, \dots , x_Q$ from $f$ for sufficiently large $Q$ that we can apply Claim \ref{claim:structural-part2} with $\eps \leftarrow \gamma$.   While we do not know what the intervals $J_1, \dots , J_n$ are, we will set up a dynamic program to find a set of at most $50kl $ intervals that we can reconstruct $f$ over each one using Lemma \ref{lem:reconstruct-interval} and such that these intervals contain essentially all of the mass of $f$.
\\\\
For each pair $x_a, x_b $ with $a,b \in \{1,2, \dots , Q \}$, apply Lemma \ref{lem:reconstruct-interval} with parameter $\eps \leftarrow \gamma$ to attempt to approximate $f$ on the interval $[x_a,x_b]$.  Let the output obtained be $\wt{\mcl{M}}_{x_a,x_b}$.  Note that sometimes the algorithm will fail to output a good approximation to $f$ (because the assumptions of Lemma \ref{lem:reconstruct-interval} fail) but we can ensure that the output is a weighted sum of at most $\poly(\log 1/\gamma)$ Gaussians.  In the proof we will only use the fact that when the assumptions of Lemma \ref{lem:reconstruct-interval} hold, then our approximation to $f$ restricted to the interval will be accurate.  
\\\\
Now we show how to set up the dynamic program.  WLOG the points $x_1, \dots , x_Q$ are sorted in nondecreasing order.  We also use the convention that $x_0 = -\infty, x_{Q+1} = \infty$.  Now, we maintain the following state for each index $0 \leq j \leq Q + 1$, and integer $c \leq 50kl$: the best approximation to $f$ from $(-\infty, x_j]$ using  a sum of $\wt{\mcl{M}}_{x_a,x_b}$ over at most $c$ intervals.  Formally,
\paragraph{Dynamic Program:} Let $DP_{j,c}$ be the minimum over all sets $S$ of $c$ pairs $(a^{(1)},b^{(1)}), \dots , (a^{(c)},b^{(c)})  \in [j] \times [j]$ such that $a^{(1)} < b^{(1)} \leq a^{(2)} < \dots < b^{(c)}$ of
\begin{align*}
&\norm{f\cdot 1_{(-\infty, x_j]} - \left(1_{[x_{a^{(1)}},x_{b^{(1)}}]}\wt{\mcl{M}}_{x_{a^{(1)}},x_{b^{(1)}}}  + \dots +   1_{[x_{a^{(c)}},x_{b^{(c)}}]} \wt{\mcl{M}}_{x_{a^{(c)}},x_{b^{(c)}}} \right) }_1 \\ &+ \norm{\wt{\mcl{M}}_{x_{a^{(1)}},x_{b^{(1)}}} - 1_{[x_{a^{(1)}},x_{b^{(1)}}]}\wt{\mcl{M}}_{x_{a^{(1)}},x_{b^{(1)}}}}_1 + \dots + \norm{\wt{\mcl{M}}_{x_{a^{(c)}},x_{b^{(c)}}} - 1_{[x_{a^{(c)}},x_{b^{(c)}}]}\wt{\mcl{M}}_{x_{a^{(c)}},x_{b^{(c)}}}}_1 \,.
\end{align*}
Note that the first term represents the approximation error compared to $f$.  We must truncate each function $\wt{\mcl{M}}_{x_{a^{(1)}},x_{b^{(1)}}}$ to its corresponding interval $[x_{a^{(1)}},x_{b^{(1)}}]$ in order for the problem to be solvable via dynamic programming because otherwise previous choices would affect later ones.  Thus, we also need to add the additonal terms that represent the error from truncation.  Note that the $L^1$ distance can be estimated using Claim \ref{claim:testL1}.  We can solve the dynamic program in polynomial time because from each state $DP_{j,c}$, we simply consider adding all possible intervals among $x_j, x_{j+1}, \dots , x_Q$ as the next one.  
\\\\
Now we prove that there is a good solution to the dynamic program for which the objective (for $j = Q+1$) is small.  Let $a_1,b_1, \dots , a_n, b_n$ be the indices obtained in Claim \ref{claim:structural-part2}.  We claim that setting  
\[
(a^{(1)},b^{(1)}) = (a_1, b_1) ,\dots ,  (a^{(n)},b^{(n)}) = (a_n, b_n) 
\]
results in the objective function being small.  Let $J_i = [a_i,b_i]$ for all $i$.   Let 
\[
\wt{\mcl{M}}^{\text{good}} = 1_{[x_{a_1}, x_{b_1}]} \wt{\mcl{M}}_{x_{a_1},x_{b_1}} + \dots + 1_{[x_{a_n}, x_{b_n}]} \wt{\mcl{M}}_{x_{a_n},x_{b_n}} \,.
\]
The guarantee from Lemma \ref{lem:reconstruct-interval} implies that for all $i$,
\begin{equation}\label{eq:trunc-bound}
\norm{ \wt{\mcl{M}}_{x_{a_i},x_{b_i}} -  1_{[x_{a_i}, x_{b_i}]} \wt{\mcl{M}}_{x_{a_i},x_{b_i}}}_1 \leq  \poly(\log 1/\gamma)(\gamma  + \norm{1_{J_i}(\mcl{M} - f)}_1)
\end{equation}
and thus, using the guarantee from Lemma \ref{lem:reconstruct-interval} again, we have
\begin{align*}
\norm{\wt{\mcl{M}}^{\text{good}} - (1_{J_1} + \dots + 1_{J_n} )f }  \leq \norm{\wt{\mcl{M}}_{x_{a_1},x_{b_1}} - 1_{J_1} \cdot f}_1 + \dots +  \norm{\wt{\mcl{M}}_{x_{a_n},x_{b_n}} - 1_{J_n} \cdot f}_1 \\ \quad + \norm{ \wt{\mcl{M}}_{x_{a_1},x_{b_1}} -  1_{[x_{a_1}, x_{b_1}]} \wt{\mcl{M}}_{x_{a_1},x_{b_1}}}_1 + \dots + \norm{ \wt{\mcl{M}}_{x_{a_n},x_{b_n}} -  1_{[x_{a_n}, x_{b_n}]} \wt{\mcl{M}}_{x_{a_n},x_{b_n}}}_1 \\ \leq n \gamma \poly(\log 1/\gamma) + \poly(\log 1/\gamma) \left( \norm{1_{J_1}(\mcl{M} - f)}_1 + \dots + \norm{1_{J_n}(\mcl{M} - f)}_1\right) \\ \leq  \poly(\log 1/\gamma)\left( \eps + \norm{\mcl{M} - f }_1 \right) \,.
\end{align*}
However also recall that by Claim \ref{claim:structural-part2},
\[
 \norm{\mcl{M} -(1_{J_1} + \dots + 1_{J_n}) \mcl{M}  }_1 \leq  4(\gamma +\norm{\mcl{M} - f }_1)  \,.
\]
Combining the above two inequalities, we get
\[
\norm{\wt{\mcl{M}}^{\text{good}} - f}_1 \leq \poly(\log 1/\gamma)\left( \eps + \norm{\mcl{M} - f }_1 \right) \,.
\]
Finally, combining the above with (\ref{eq:trunc-bound}) implies that the objective value of the dynamic program is at most $ \poly(\log 1/\gamma)\left( \eps + \norm{\mcl{M} - f }_1 \right)$.  Finally it remains to note that the objective of the dynamic program (for $j = Q+1$) is an upper bound on 
\[
\norm{f - \left(\wt{\mcl{M}}_{x_{a^{(1)}},x_{b^{(1)}}}  + \dots +   \wt{\mcl{M}}_{x_{a^{(c)}},x_{b^{(c)}}} \right) }_1
\]
so thus, we can simply output the solution $\wt{\mcl{M}} = \wt{\mcl{M}}_{x_{a^{(1)}},x_{b^{(1)}}}  + \dots +   \wt{\mcl{M}}_{x_{a^{(c)}},x_{b^{(c)}}} $ and we are guaranteed to have
\[
\norm{f -\wt{\mcl{M}} }_1 \leq  \poly(\log 1/\gamma)\left( \eps + \norm{\mcl{M} - f }_1 \right) \,.
\]
It is clear that $\wt{\mcl{M}}$ is a weighted sum of at most $k \poly(\log 1/\gamma)$ Gaussians (since for each interval there are $\poly(\log 1/\gamma)$ Gaussians and there are at most $50kl$ total intervals).  It is clear that all of the steps run in $\poly(k/\eps)$ time and we are done.
\end{proof}

Now we can complete the proof of our main theorem, Theorem \ref{thm:main-gmm}.

\begin{proof}[Proof of Theorem \ref{thm:main-gmm}]
We can apply Corollary \ref{coro:improper-learner} to learn a distribution $f$ such that $d_{\TV}(\mcl{M} , f) \leq O(\eps)$.  We can then apply Lemma \ref{lem:reconstruct-Gaussian} using $f$.  Note that since $f$ is a piecewise polynomial, we can perform all of the explicit computations with the density function that are used in the proof of Lemma \ref{lem:reconstruct-Gaussian}.  It is immediate that the output of Lemma \ref{lem:reconstruct-Gaussian} must satisfy 
\[
d_{\TV}(\wt{\mcl{M}}, \mcl{M} ) \leq \wt{O}(\eps) 
\]
so we are done.
\end{proof}

\section{Well-Conditioned Case: Sparse Fourier Reconstruction}\label{sec:fourier-well-conditioned}

We now move onto our results on sparse Fourier reconstruction.  As with GMMs, we will first consider the well-conditioned case.  Here, this means a function that has its Fourier support contained in one interval that is not too long i.e., all of its Fourier mass is not too spread out.  Note that WLOG, we may assume that this interval is centered at $0$ since otherwise we can multiply by a suitable exponential to shift the Fourier support to be around $0$.  We prove the following statement:

\begin{lemma}\label{lem:Fourier-simplecase}
Let $0 < \eps < 0.1$ be a parameter and let $l \geq \lceil \log 1/\eps \rceil$ be some parameter.  Let $\mcl{M}$ be a function such that $\wh{\mcl{M}}$ is supported on $[-l, l]$ and such that $\norm{\wh{\mcl{M}}}_1 \leq 1$.  Also assume that we have access to a function $f$ such that 
\[
\int_{-1}^1 |f(x) - \mcl{M}(x)|^2 dx \leq \eps^2 \,.
\]
There is an algorithm that runs in $\poly(l)$ time and outputs a function $\wt{\mcl{M}}$ such that $\wt{\mcl{M}}$ is $(O(l), O(l))$-simple, has Fourier support contained in $[-l, l]$, and
\[
\int_{-1}^1 |\wt{\mcl{M}}(x) - f(x)|^2 dx \leq 16\eps^2 \,. 
\]
\end{lemma}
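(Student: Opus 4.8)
The plan is to exploit the structure that follows from $\wh{\mcl{M}}$ being supported on the short interval $[-l,l]$: after multiplying by the exponential $e^{-2\pi i (0) x}$ (here the support is already centered at $0$), $\mcl{M}$ is close, on $[-1,1]$, to a low-degree polynomial. Concretely, writing $\mcl{M}(x) = \int_{-l}^{l} \wh{\mcl{M}}(\theta) e^{2\pi i \theta x}\, d\theta$ and Taylor-expanding each $e^{2\pi i \theta x}$, the degree-$\ge D$ tail is bounded in absolute value (for $|x|\le 1$) by roughly $\sum_{j\ge D} (2\pi l)^j/j!$, which is $\le \eps$ once $D = O(l + \log 1/\eps) = O(l)$ since $l \ge \log 1/\eps$. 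So $\mcl{M}$ is $\eps$-close in sup-norm (hence in $L^2[-1,1]$) to a polynomial $P$ of degree $D = O(l)$ whose coefficients are controlled by $\norm{\wh{\mcl{M}}}_1 \le 1$. Combined with the hypothesis $\norm{f - \mcl{M}}_{L^2[-1,1]} \le \eps$, we get that $f$ is within $O(\eps)$ in $L^2[-1,1]$ of the degree-$D$ polynomial space.

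The next step is the algorithm: since $f$ is explicitly accessible, compute the best $L^2[-1,1]$ approximation to $f$ within the span of $\{1, x, \dots, x^D\}$ — i.e. project $f$ onto the Legendre polynomial basis up to degree $D$. This is a $\poly(D) = \poly(l)$-time linear-algebra computation. Call the result $P^\star$; by Pythagoras and the previous paragraph, $\norm{f - P^\star}_{L^2[-1,1]} \le \norm{f - P}_{L^2[-1,1]} \le 2\eps$, say. Now $P^\star$ is a degree-$D$ polynomial, but I need to output something that is $(O(l), O(l))$-simple with Fourier support in $[-l,l]$. The trick is that a degree-$D$ polynomial on $[-1,1]$ can itself be written as a short sum of exponentials $e^{2\pi i \theta_j x}$ with $\theta_j \in [-l, l]$: evaluate/interpolate at $D+1$ equally spaced (or Chebyshev) frequency nodes $\theta_0, \dots, \theta_D \in [-l,l]$ — the map from coefficients of $\sum_j a_j e^{2\pi i\theta_j x}$ to its degree-$D$ Taylor coefficients is a generalized Vandermonde system in the $\theta_j$, which is invertible, so we can solve for the $a_j$ realizing the same degree-$D$ truncation as $P^\star$. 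The resulting exponential sum agrees with $P^\star$ up to its own Taylor tail, which is again $\le \eps$ in sup-norm on $[-1,1]$ by the same estimate provided the $\theta_j$ and the coefficients are bounded; one then needs to check the coefficient bound $\sum_j |a_j| = O(l)$, which follows from bounding the inverse Vandermonde (this is where choosing Chebyshev-spaced nodes helps to keep conditioning $\poly(l)$ rather than exponential).

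Putting it together: $\wt{\mcl{M}} := \sum_j a_j e^{2\pi i \theta_j x}$ satisfies, on $[-1,1]$, $\norm{\wt{\mcl{M}} - P^\star}_2 \le \eps$, hence $\norm{\wt{\mcl{M}} - f}_2 \le \norm{\wt{\mcl{M}} - P^\star}_2 + \norm{P^\star - f}_2 \le \eps + 2\eps < 4\eps$, so $\int_{-1}^1 |\wt{\mcl{M}} - f|^2 \le 16\eps^2$ as required; it has $O(l)$ frequencies all in $[-l,l]$, and coefficient $\ell^1$-norm $O(l)$. I expect the main obstacle to be controlling the coefficient blow-up in the last step: naively, inverting a Vandermonde (even a well-chosen one) over an interval of length $2l$ with $D = O(l)$ nodes could introduce exponentially large coefficients, which would violate the $(O(l), O(l))$-simple requirement. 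The right fix is probably to not go through a monomial basis at all but instead directly exhibit the exponential representation — e.g. choose the $\theta_j$ as (scaled) zeros of a Chebyshev polynomial of degree $O(l)$ on $[-l,l]$ and argue, via the near-orthogonality / good Lebesgue-constant properties of Chebyshev interpolation, that the interpolation operator reproducing $\mcl{M}$'s low-degree part has operator norm $\poly(l)$, so that the coefficients stay $O(\poly(l))$ and, after absorbing logs into the definition of $l$, actually $O(l)$. The rest is routine: Taylor tail estimates, Pythagoras for the projection, and triangle inequality.
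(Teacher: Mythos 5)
Your proposal shares the two key ingredients with the paper's proof — Taylor-expanding $\mcl{M}(x) = \int_{-l}^l\wh{\mcl{M}}(\zeta)e^{2\pi i\zeta x}d\zeta$ to degree $O(l)$ (using $l\geq\log 1/\eps$), and placing the output frequencies at rescaled Chebyshev points in $[-l,l]$. But there is a real gap in the algorithmic step, and you have correctly identified where it sits without actually closing it.

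The gap: you propose to (i) project $f$ onto the degree-$D$ polynomial space to get $P^\star$, and (ii) convert $P^\star$ into a sum of exponentials by matching Taylor coefficients via a (generalized) Vandermonde at the Chebyshev frequencies. The conversion inherits whatever size the Taylor coefficients of $P^\star$ have, and for a degree-$D$ polynomial on $[-1,1]$ the map from $L^2[-1,1]$ norm to individual monomial coefficients is $\exp(\Theta(D))$-ill-conditioned, so you cannot bound the resulting exponential coefficients by $\poly(l)$ from $\norm{P^\star-P}_2\le O(\eps)$ alone. Appealing to the Lebesgue constant of Chebyshev interpolation does not resolve this: that constant controls how a \emph{function} on an interval is reproduced by interpolation at Chebyshev points, not how a monomial-coefficient vector is preimaged through a Vandermonde in the frequencies. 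Your ``fix'' (exhibit a bounded exponential representation of $\mcl{M}$ directly) addresses the \emph{existence} side — this is exactly what the paper's Lemma~\ref{lem:Fourier-existence-simplecase} does, using Corollary~\ref{coro:convexhull}, which is a moment-curve/Caratheodory duality statement rather than a Lebesgue-constant bound — but it does not make your \emph{algorithm} produce coefficients of size $O(l)$: an unconstrained projection (whether onto monomials or onto exponentials at the Chebyshev frequencies) can output arbitrarily large coefficients even when a small-coefficient near-optimal solution exists, because the Gram matrix of those exponentials on $[-1,1]$ is itself poorly conditioned.

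The paper closes this by never passing through an unconstrained polynomial projection. After proving existence of an $(O(l),O(l))$-simple approximation with Chebyshev frequencies, it solves a \emph{constrained} least-squares problem (Lemma~\ref{lem:linear-regression}) over that fixed exponential basis, with the constraint that the (real, nonnegative, sign-split) coefficients sum to at most $1$ after rescaling by $O(l)$. The existence result guarantees the constrained optimum has error $O(\eps)$, and the constraint itself guarantees the output is $(O(l),O(l))$-simple. To repair your proposal you would need to replace steps (i)–(ii) with such a constrained fit; as written, the coefficient bound in the conclusion is not established.
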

\begin{remark}
Note that in this case, we do not need any constraint on the Fourier sparsity of $\mcl{M}$ to guarantee that the output of our algorithm is $O(\log 1/\eps)$-Fourier sparse.  Also, unlike our full result, Theorem \ref{thm:fourier-main}, our output in this case is guaranteed to be a good approximation over the entire interval (instead of a subinterval).
\end{remark}

Our proof will be separated into two parts.  The first step will be proving the existence of a function $\wt{\mcl{M}}$ of the desired form.  The second step will be developing an algorithm to actually compute it.
\subsubsection{Existence of a Sparse Approximation}

First, we will prove that under the assumptions of Lemma \ref{lem:Fourier-simplecase}, an approximation $\wt{\mcl{M}}$ satisfying the desired properties exists.  We will also prove that independent of the problem instance, it suffices to only consider a fixed set of $O(l)$ distinct frequencies given by the Chebyshev points (with suitable rescaling).  

The proof relies on first taking the Taylor series of an exponential $e^{2\pi i \zeta x}$ and arguing that we only need to keep the first $O(\log 1/\eps)$ terms.  This essentially lets us represent such an exponential with the coefficients of its Taylor series, which are (up to rescaling) $(1, \zeta, \zeta^2 , \dots )$.  We then use Corollary \ref{coro:convexhull} to argue that an arbitary linear combination of such vectors can be replaced with a sparse combination with similarly sized coefficients.   

\begin{lemma}\label{lem:Fourier-existence-simplecase}
Let $0 < \eps < 0.1$ be a parameter and let $l \geq \lceil \log 1/\eps \rceil$ be some parameter.  Let $t_0, \dots , t_{10^2l}$ be the degree-$10^2l$ Chebyshev points.  Let $\mcl{M}$ be a function such that $\wh{\mcl{M}}$ is supported on $[-l, l]$ and such that $\norm{\wh{\mcl{M}}}_1 \leq 1$.  Then there is a function 
\[
h(x) = \sum_{j = 0}^{10^2l} c_j e^{2\pi i (lt_j) x}
\]
where $c_0, \dots , c_{10^2l}$ are complex numbers such that $\sum_{j = 0}^{10^2l} |c_j| \leq 200l$ and 
\[
\int_{-1}^1 (h(x) - \mcl{M}(x))^2 dx \leq \eps^2 \,.
\]
\end{lemma}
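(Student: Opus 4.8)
The plan is to expand $\mcl{M}$ into its Taylor series about $0$, truncate it after $O(l)$ terms, and then recognize the truncation as a low-degree polynomial that can be realized as a sparse combination of exponentials at the prescribed Chebyshev frequencies. First I would write $\mcl{M}(x) = \int_{-l}^l e^{2\pi i \zeta x}\, d\widehat{\mcl{M}}(\zeta)$ (thinking of $\widehat{\mcl{M}}$ as a measure of total mass at most $1$), and Taylor-expand each exponential: $e^{2\pi i \zeta x} = \sum_{n\ge 0} \frac{(2\pi i \zeta x)^n}{n!}$. Since $|\zeta| \le l$ and $|x|\le 1$, the tail beyond degree $N := 10^2 l$ is bounded in sup-norm on $[-1,1]$ by $\sum_{n > N} \frac{(2\pi l)^n}{n!}$, which for $N$ a sufficiently large constant multiple of $l$ is at most, say, $e^{-10l} \le \eps / 10$ (using $l \ge \log 1/\eps$ and Stirling, $\frac{(2\pi l)^N}{N!} \lesssim (2\pi e l / N)^N$ with $N/l$ a large constant). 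Integrating against $d\widehat{\mcl{M}}$ and using $\|\widehat{\mcl{M}}\|_1 \le 1$, the function $p(x) := \sum_{n=0}^N b_n x^n$, where $b_n = \frac{(2\pi i)^n}{n!}\int \zeta^n \, d\widehat{\mcl{M}}(\zeta)$, satisfies $\sup_{[-1,1]}|p(x) - \mcl{M}(x)| \le \eps/10$, hence $\int_{-1}^1 |p - \mcl{M}|^2 \le \eps^2/4$. Note $p$ is a polynomial of degree at most $N = 10^2 l$.

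Next I would use the fact that a degree-$N$ polynomial is determined by its values at $N+1$ distinct points, and that evaluating $\sum_j c_j e^{2\pi i (l t_j) x}$ at a fixed $x$ is a linear functional of the $c_j$. More precisely, I want $h(x) = \sum_{j=0}^N c_j e^{2\pi i (l t_j) x}$ to equal $p(x)$ identically on $[-1,1]$; writing both sides' Taylor coefficients, this is the linear system $\sum_j c_j \frac{(2\pi i l t_j)^n}{n!} = b_n$ for $n = 0, \dots, N$, i.e. a Vandermonde system $V c = b'$ in the nodes $2\pi i l t_j$ with $b'_n = n!\, b_n / (2\pi i l)^n = \binom{?}{}\cdots$ — after cleaning up, $V_{n,j} = (t_j)^n$ is the standard Vandermonde matrix in the Chebyshev points $t_0, \dots, t_N$, which is invertible. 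So a solution $c$ exists and makes $h \equiv p$. The remaining task is the coefficient bound $\sum_j |c_j| \le 200 l$: here I would use that the inverse of the Chebyshev-point Vandermonde matrix has well-controlled entries — equivalently, that the Lagrange interpolation operator at Chebyshev points has Lebesgue constant $O(\log N)$ — together with a bound on $p$ itself. Since $\sup_{[-1,1]}|p| \le \sup|\mcl{M}| + \eps/10 \le \|\widehat{\mcl{M}}\|_1 + 1 \le 2$, and $c_j$ is essentially $p$ evaluated at (or interpolated from) the nodes weighted by Lagrange factors, one gets $\sum_j |c_j| \lesssim N^{O(1)}$ — I would need to check the constant works out to $\le 200 l$, possibly by tightening $N$ or invoking that the relevant quadrature/interpolation weights at Chebyshev points are $O(1/N)$ each so they sum to $O(1)$, giving room to spare. (Alternatively, appeal directly to Corollary \ref{coro:convexhull} as the excerpt suggests: the vectors $(1, t_j, t_j^2, \dots)$ span the relevant space and a Caratheodory-type argument keeps the coefficients comparable in size to those of the target combination.)

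The main obstacle I anticipate is the quantitative coefficient bound $\sum_j |c_j| \le 200 l$ rather than merely $\poly(l)$: controlling $\|c\|_1$ requires a genuinely good bound on $\|V^{-1}\|$ for the Chebyshev-point Vandermonde matrix (or on the Lagrange weights), and naive Vandermonde inverse bounds are exponentially large. The reason Chebyshev points are chosen — rather than arbitrary frequencies — is exactly that they make this conditioning polynomial (indeed nearly optimal, with Lebesgue constant $O(\log N)$). I would want to either cite a clean statement of this (e.g. the discrete orthogonality of Chebyshev polynomials at Chebyshev points, which gives an essentially explicit inverse with entries $O(1/N)$) or derive it via the Clenshaw–Curtis quadrature weights. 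Modulo that estimate, everything else is the routine Taylor-tail bound plus linear algebra, and the factor-of-$200l$ slack suggests the intended argument is not tight and should go through comfortably once the right conditioning fact is in hand.
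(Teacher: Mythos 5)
Your overall strategy --- Taylor-expand, truncate at degree $N = 10^2 l$, and reduce to linear algebra at the Chebyshev nodes --- matches the paper's, and your instinct that the Chebyshev points are chosen precisely for their interpolation conditioning is correct. However, there is a concrete error in the middle of the argument: after setting up the $(N+1)\times(N+1)$ Vandermonde system $\sum_j c_j (lt_j)^n = n!\,b_n/(2\pi i)^n$ for $n=0,\dots,N$, you assert that ``a solution $c$ exists and makes $h \equiv p$.'' This is false. Each exponential $e^{2\pi i (lt_j)x}$ has nonzero Taylor coefficients of \emph{every} order, so $h$ has a nonvanishing tail $\sum_{n>N} \bigl(\sum_j c_j (2\pi i lt_j)^n/n!\bigr)x^n$ while $p$ is a degree-$N$ polynomial; matching only the first $N+1$ coefficients does not make the functions equal. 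You must separately bound $\sup_{[-1,1]}|h-p| \le \bigl(\sum_j |c_j|\bigr)\sum_{n>N}(2\pi l)^n/n!$, which requires the coefficient bound --- the very thing you defer to last. The argument therefore needs to be reordered: establish $\sum_j|c_j|\le 200l$ first, then use it to control the Taylor tail of $h$, then combine with the tail bound on $p-\mcl{M}$. (The paper handles this explicitly: it replaces each exponential by its truncation $g_{\zeta}$, does the Chebyshev argument on those, and then pays the $\bigl(\sum_j|c_j|\bigr)\cdot(\eps/l)^3$ penalty to switch back to true exponentials at the end.)

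On the coefficient bound itself, you correctly identify it as the crux but the route you sketch is not quite right as stated: $c_j$ is not ``$p$ evaluated at the nodes weighted by Lagrange factors.'' What is true is that the Vandermonde constraints force $c_j = \int_{-l}^l L_j(\zeta/l)\,\wh{\mcl{M}}(\zeta)\,d\zeta$ where $L_j$ is the Lagrange basis polynomial at $t_0,\dots,t_N$, and then $\sum_j|c_j|\le \max_{u\in[-1,1]}\sum_j|L_j(u)|\cdot\norm{\wh{\mcl{M}}}_1 = \Lambda_N$, the Lebesgue constant, which is $O(\log N)$ for Chebyshev points --- comfortably below $200l$, so your concern about the constant was unfounded. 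The paper instead avoids the Vandermonde inverse altogether: it applies Corollary~\ref{coro:convexhull} pointwise to write each moment vector $\mcl{V}_N(\zeta)$ (for $\zeta\in[-l,l]$, after rescaling) as $\sum_j w_j(\zeta)\mcl{V}_N(t_j l)$ with $\sum_j|w_j(\zeta)|\le 2N = 200l$, and then sets $c_j = \int\wh{\mcl{M}}(\zeta)w_j(\zeta)\,d\zeta$, getting $\sum_j|c_j|\le 200l$ directly by Fubini. These are dual views of the same interpolation fact; the paper's gives the cruder constant $2N$ but requires no Lebesgue-constant citation, while yours would give $O(\log N)$ once the Lagrange-basis characterization of $c_j$ is stated correctly.
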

\begin{proof}
Note that $\mcl{M}$ can be written as $\mcl{M}(x) = \int_{-l}^l \wh{\mcl{M}}(\zeta) e^{2 \pi i x \zeta}  d\zeta $.  Now consider the Taylor expansion of 
\[
e^{2 \pi i x \zeta}  = \sum_{j = 0}^{\infty} \frac{(2 \pi i x \zeta)^j}{j!}
\]
Note that since $ -l \leq \zeta \leq l$, we have
\[
\sum_{j = 10^2l + 1}^{\infty} \left \lvert \frac{(2 \pi i \zeta)^j}{j!} \right \rvert  \leq (\eps/l)^3 \,.
\]
In particular, if we define 
\[
g_{\zeta}(x) = \sum_{j = 0}^{10^2l} \frac{(2 \pi i x \zeta)^j}{j!}
\]
then over the interval $x \in [-1,1]$
\begin{equation}\label{eq:taylor-bound}
|e^{2 \pi i x \zeta} - g_{\zeta}(x)| \leq (\eps/l)^3 \,.
\end{equation}
Next, for each $\zeta \in [-l,l]$, by Corollary \ref{coro:convexhull}, we can write the vector
\[
\mcl{V}_{10^2l}(\zeta) = w_0(\zeta) \mcl{V}_{10^2l}(t_0 l) + \dots + w_{10^2l}(\zeta) \mcl{V}_{10^2l}(t_{10^2l} l)
\]
for some real numbers (depending on $\zeta$) $w_0(\zeta), \dots , w_{10l}(\zeta)$ with $\sum |w_j(\zeta)| \leq 200l$.  Thus,
\[
g_{\zeta}(x) = w_0(\zeta) g_{t_0 l}(x) + \dots + w_{10^2l}(\zeta) g_{t_{10^2l} l}(x)
\]
for the same weights.  Now note that by (\ref{eq:taylor-bound}), for all $x \in [-1,1]$,
\[
\left \lvert \mcl{M}(x) - \sum_{j = 0}^{10^2l} g_{t_j l}(x) \left(\int_{-l}^l \wh{\mcl{M}}(\zeta) w_j(\zeta)  d\zeta \right) \right \rvert = \left \lvert \mcl{M}(x) - \int_{-l}^l \wh{\mcl{M}}(\zeta) g_{\zeta}(x)  d\zeta \right \rvert  \leq 2l \cdot (\eps/l)^3 \,.
\]

Note that 
\[
\sum_{j = 0}^{10^2l}\left \lvert \int_{-l}^l \wh{\mcl{M}}(\zeta) w_j(\zeta)  d\zeta \right \rvert \leq  200l
\]
so by (\ref{eq:taylor-bound}), for all $x \in [-1,1]$,
\[
\left \lvert \sum_{j = 0}^{10^2l} \left(g_{t_j l}(x) - e^{2\pi i x (t_j l)} \right) \left(\int_{-l}^l \wh{\mcl{M}}(\zeta) w_j(\zeta)  d\zeta \right) \right \rvert \leq 200l (\eps/l)^3
\]
so therefore for all $x \in [-1,1]$, we have 
\[
\left \lvert \mcl{M}(x) - \sum_{j = 0}^{10^2l}  e^{2\pi i x (t_j l)} \left(\int_{-l}^l \wh{\mcl{M}}(\zeta) w_j(\zeta)  d\zeta \right) \right \rvert \leq 202l \cdot (\eps/l)^3
\]
and setting
\[
h(x) = \sum_{j = 0}^{10^2l}  e^{2\pi i x (t_j l)} \left(\int_{-l}^l \wh{\mcl{M}}(\zeta) w_j(\zeta)  d\zeta \right)
\]
immediately leads to the desired conclusion.
\end{proof}

\subsubsection{Completing the Proof of Lemma \ref{lem:Fourier-simplecase}}

By combining Lemma \ref{lem:Fourier-existence-simplecase} and Lemma \ref{lem:linear-regression}, we can complete the proof of Lemma \ref{lem:Fourier-simplecase}.

\begin{proof}[Proof of Lemma \ref{lem:Fourier-simplecase}]
We can separate $f$ into its real and imaginary parts, say $f_{\textsf{re}},f_{\textsf{im}}$ and we can separate $\mcl{M}$ into its real and imaginary parts $\mcl{M}_{\textsf{re}},\mcl{M}_{\textsf{im}}$.  Now consider the Chebyshev points of degree $10^2l$, say $t_0, \dots , t_{10^2l}$.  We will now apply Lemma \ref{lem:linear-regression} where we consider the set of functions 
\[
\{f_1, \dots , f_n \} = \{ \pm \cos(2\pi t_0 x), \pm \sin(2\pi t_0 x), \dots ,  \pm \cos(2\pi t_{10^2l} x), \pm \sin(2\pi t_{10^2l} x) \} \,.
\]
The distribution $\mcl{D}$ is the uniform distribution on $[-1,1]$ and by Lemma \ref{lem:Fourier-existence-simplecase}, there are coefficients $a_1, \dots , a_n \geq 0$ with $a_1 + \dots + a_n \leq O(l)$ (note we can split the complex coefficients $c_j$ into their real and imaginary parts and then split into positive and negative parts) such that 
\begin{align*}
\int_{-1}^1 (f_{\textsf{re}}(x) - (a_1f_1(x) + \dots + a_n f_n(x)))^2 dx  \leq  2\int_{-1}^1 (\mcl{M}_{\textsf{re}}(x) - (a_1f_1(x) + \dots + a_n f_n(x)))^2 dx \\  + 2\int_{-1}^1 (f_{\textsf{re}}(x) - \mcl{M}_{\textsf{re}}(x))^2 dx \leq 4\eps^2
\end{align*}
and similar for the imaginary part of $f$.  Applying Lemma \ref{lem:linear-regression} to both the real and imaginary part (after rescaling by $1/(O(l))$), adding the results, and rewriting the trigonometric functions using complex exponentials (note the set $\{t_0, \dots , t_{10^2l} \}$ is symmetric around $0$ so we can do this) completes the proof.
\end{proof}

We can slightly extend Lemma \ref{lem:Fourier-simplecase} to work even if we do not know the desired accuracy $\eps$ but only a lower bound on it.  It suffices to run the algorithm for Lemma \ref{lem:Fourier-simplecase} and repeatedly decrease the target accuracy until our algorithm fails to find the optimal accuracy within a constant factor. 

\begin{corollary}\label{coro:fourier-simplecase}
Let $l, \eps$ be parameters given to us such that $l \geq \lceil \log 1/\eps \rceil$.  Let $\mcl{M}$ be a function such that $\wh{\mcl{M}}$ is supported on $[-l, l]$ and $\norm{\wh{\mcl{M}}}_1 \leq 1$.  Assume that we have access to a function $g$ defined on $[-1,1]$.  There is an algorithm that runs in $\poly(l)$ time and outputs a function $\wt{\mcl{M}}$ such that $\wt{\mcl{M}}$ is $(O(l), O(l))$-simple, has Fourier support contained in $[-l, l]$, and
\begin{align*}
\int_{-1}^1 |\wt{\mcl{M}}(x) - f(x)|^2 dx \leq  20\left( \eps^2 + \int_{-1}^1 |f(x) - \mcl{M}(x)|^2 dx \right)  \,. 
\end{align*}
\end{corollary}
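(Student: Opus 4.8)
The corollary differs from Lemma~\ref{lem:Fourier-simplecase} only in that we are no longer handed a valid upper bound on the residual $\int_{-1}^1 |f - \mcl{M}|^2$: the lemma's accuracy parameter must dominate that residual, but here we know only the lower bound $\eps$ on the target accuracy, and we cannot evaluate $\int_{-1}^1 |f - \mcl{M}|^2$ directly because $\mcl{M}$ is unknown. The plan is to guess the correct accuracy by a geometric search and keep the best of the resulting candidates. Write $E^2 := \eps^2 + \int_{-1}^1 |f - \mcl{M}|^2$ for the (unknown) quantity we want to beat; since $E \ge \eps$ we have $\log 1/E \le \log 1/\eps \le l$, which is what lets Lemma~\ref{lem:Fourier-simplecase} be invoked at any accuracy $\ge \eps$.

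Concretely, I would fix absolute constants $c_0 < 1$ and $C_0 < 20$ such that the conclusion of Lemma~\ref{lem:Fourier-simplecase} still holds for accuracies up to $c_0$ with the factor $16$ replaced by $C_0$ --- the ``$<0.1$'' and ``$16$'' in the lemma are there only to keep its Taylor-tail bookkeeping clean, and extending the accuracy range while keeping the error-blowup factor below $20$ is a routine tightening of its proof. I would then pick a ratio $\rho > 1$ with $C_0 \rho^2 \le 20$. For each accuracy $\eps_i$ in the geometric progression $\eps, \rho\eps, \rho^2\eps, \dots$ lying below $c_0$ --- there are $O(\log(c_0/\eps)) = O(l)$ of them since $E \ge \eps$ --- I would run the $\poly(l)$-time algorithm of Lemma~\ref{lem:Fourier-simplecase} with accuracy $\eps_i$, obtaining $\wt{\mcl{M}}_i$. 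The key point is that this algorithm is constrained regression over the \emph{fixed} set of $O(l)$ rescaled Chebyshev frequencies with an $\ell_1$ budget $O(l)$ on the coefficients, so regardless of whether the accuracy hypothesis held for $\eps_i$, the output $\wt{\mcl{M}}_i$ is always $(O(l), O(l))$-simple with Fourier support in $[-l, l]$; only its error guarantee may be meaningless. Using explicit access to $f$, I would compute $\int_{-1}^1 |\wt{\mcl{M}}_i - f|^2$ for each $i$ and return the minimizer over these $O(l)$ candidates together with the trivial candidate $\wt{\mcl{M}} = 0$; note the algorithm itself needs no case analysis.

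For correctness, whenever $E \le c_0/\rho$ there is an index $i^\star$ with $\eps_{i^\star} \in [E, \rho E]$, because consecutive grid points differ by exactly the factor $\rho$; for that index Lemma~\ref{lem:Fourier-simplecase} applies honestly ($\int_{-1}^1 |f - \mcl{M}|^2 \le E^2 \le \eps_{i^\star}^2$ and $l \ge \lceil \log 1/\eps_{i^\star}\rceil$ since $\eps_{i^\star}\ge\eps$), so $\int_{-1}^1 |\wt{\mcl{M}}_{i^\star} - f|^2 \le C_0\,\eps_{i^\star}^2 \le C_0\rho^2 E^2 \le 20 E^2$, and the returned minimizer does at least as well while having the required form. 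When the residual is so large that $E > c_0/\rho$, the candidate $\wt{\mcl{M}} = 0$ already suffices: $\int_{-1}^1 |f|^2 \le 2\int_{-1}^1 |f - \mcl{M}|^2 + 2\int_{-1}^1 |\mcl{M}|^2 \le 2\int_{-1}^1 |f - \mcl{M}|^2 + 2$, using $|\mcl{M}(x)| \le \norm{\wh{\mcl{M}}}_1 \le 1$ on $[-1,1]$, and with $c_0$ a large enough absolute constant the right-hand side is at most $20(\eps^2 + \int_{-1}^1 |f - \mcl{M}|^2)$ in that regime. The total cost is $O(l)$ calls to a $\poly(l)$ routine plus $O(l)$ integral evaluations, i.e.\ $\poly(l)$. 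I do not expect a genuine obstacle here; the one delicate point is the constant bookkeeping --- choosing $c_0$, $C_0$, $\rho$ so that the grid's reach, the error-blowup factor, and the trivial fallback all fit together under the factor $20$.
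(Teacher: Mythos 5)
Your proposal takes essentially the same route as the paper: a geometric search over accuracy parameters, invoking Lemma~\ref{lem:Fourier-simplecase} at each level and using the explicitly computable residual $\int_{-1}^1 |\wt{\mcl{M}}_\gamma - f|^2$ to select the output (the paper decreases the accuracy by a constant factor until the check $\int_{-1}^1 |\wt{\mcl{M}}_\gamma - f|^2 \leq 16\gamma^2$ fails and takes the last passing level, while you enumerate all grid levels plus the zero fallback and take the minimizer, which is a harmless reorganization of the same idea). One small arithmetic slip worth fixing: since $|\mcl{M}| \leq 1$ on an interval of length $2$, you get $\int_{-1}^1 |\mcl{M}|^2 \leq 2$ rather than $\leq 1$, so the fallback bound reads $\int_{-1}^1 |f|^2 \leq 2\int_{-1}^1 |f - \mcl{M}|^2 + 4$; this only nudges the choice of the threshold $c_0$ and does not affect the conclusion.
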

\begin{proof}
For a target accuracy $\gamma > \eps$ we run the algorithm in Lemma \ref{lem:Fourier-simplecase} to get a function $\wt{\mcl{M}}_{\gamma}(x)$.  We then check whether 
\[
\int_{-1}^1 |\wt{\mcl{M}}_{\gamma}(x) - f(x)|^2 dx \leq 16\gamma^2 \,.
\]
Note that the above can be explicitly computed.  If the above check passes, we then take $\gamma \leftarrow 0.99 \gamma $.  Taking the smallest $\gamma$ for which the above succeeds, the guarantee from Lemma \ref{lem:Fourier-simplecase} ensures that we have a function $\wt{\mcl{M}}$ such that
\[
\int_{-1}^1 |\wt{\mcl{M}}(x) - f(x)|^2 dx \leq 20 \left(\eps^2 +  \int_{-1}^1 |f(x) - \mcl{M}(x)|^2 dx \right) \,.
\]
It is clear that we run the routine from Lemma \ref{lem:Fourier-simplecase} at most $O(l)$ times so we are done.
\end{proof}

\section{Sparse Fourier Reconstruction: Full Version}\label{sec:sparse-fourier-full}

In this section, we complete the proof of our main result on sparse Fourier reconstruction, Theorem \ref{thm:fourier-main}.  The high-level outline of the proof is similar to the proof of Theorem \ref{thm:main-gmm}.  The key lemma that goes into the proof is stated below.  At a high level, the lemma states that if we know roughly where the Fourier support of the unknown Fourier-sparse signal $\mcl{M}$ is located, then we can successfully reconstruct it.

\begin{lemma}\label{lem:fourier-manycluster}
Assume we are given $N, k, \eps , c$ with $0 < \eps < 0.1$.  Let $l = \lceil \log kN / (\eps c)  \rceil$ be some parameter.  Let $\mcl{M}$ be a function that is $(k,1)$-simple.  Also, assume that we are given a set $T \subset \R$  of size $N$ such that all of the support of $\wh{\mcl{M}}$ is within distance $1$ of $N$.  Further, assume we are given access to a function $f$ such that 
\[
\int_{-1}^1 |f(x) - \mcl{M}(x)|^2 dx \leq \eps^2 \,.
\]
There is an algorithm that runs in $\poly(N,k,l , 1/c)$ time and outputs a function $\wt{\mcl{M}}$ that is $(k\poly(l/c), k\poly(l/c) )$-simple and
\[
\int_{-1 + c}^{1 - c} |\wt{\mcl{M}}(x) - f(x)|^2 dx \leq \eps^2 \poly(l) \,. 
\]
\end{lemma}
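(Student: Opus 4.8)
The plan is to (i) establish, via the well-conditioned existence result applied cluster by cluster, that $f$ admits an $O(kl)$-sparse approximation with bounded coefficients; (ii) actually compute a good fit by a single linear regression over a dictionary of size $\poly(N,l)$; and (iii) thin that fit down to $O(k\,\poly(l/c))$ terms. Everything in steps (i)--(ii) will be controlled on all of $[-1,1]$ with error $O(\eps)$; the restriction to $[-1+c,1-c]$ and the $\poly(l)$ loss are incurred only in step (iii).

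For existence: since the $\le N$ intervals $[\tau_j-1,\tau_j+1]$ cover $\mathrm{supp}(\wh{\mcl M})$ and have total length $\le 2N$, subdivide their union into $O(N)$ intervals $J_1,\dots,J_{O(N)}$ each of width at most $l$. Every frequency of $\mcl M$ lies in some $J_m$; grouping accordingly gives $\mcl M=\sum_m \mcl M_m$ with at most $k$ of the $\mcl M_m$ nonzero and $\sum_m \norm{\wh{\mcl M_m}}_1\le 1$. Modulating $\mcl M_m$ by $e^{-2\pi i\mu_m x}$ ($\mu_m$ the center of $J_m$) moves $\wh{\mcl M_m}$ into $[-l/2,l/2]\subseteq[-l,l]$, so after scaling by $1/\norm{\wh{\mcl M_m}}_1$ Lemma~\ref{lem:Fourier-existence-simplecase} produces $h_m$: a combination of the $O(l)$ exponentials at the rescaled Chebyshev frequencies of $J_m$, with coefficient $\ell_1$-norm $O(l)\norm{\wh{\mcl M_m}}_1$ and $\norm{h_m-\mcl M_m}_{L^2[-1,1]}\le\eps\norm{\wh{\mcl M_m}}_1$. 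By the triangle inequality $H:=\sum_m h_m$ is supported on $O(kl)$ frequencies from a fixed set $S$ of size $O(Nl)$, has coefficient $\ell_1$-norm $O(l)$, and $\norm{H-f}_{L^2[-1,1]}\le 2\eps$. Feeding the dictionary of the $O(Nl)$ exponentials indexed by $S$ (split into real/imaginary and $\pm$ parts) into the regression routine Lemma~\ref{lem:linear-regression}, with the uniform measure on $[-1,1]$ and coefficient budget $O(l)$, then returns in $\poly(N,l)$ time an explicit $g=\sum_{\tau\in S} b_\tau e^{2\pi i\tau x}$ with $\sum_\tau|b_\tau|\le O(l)$ and $\norm{g-f}_{L^2[-1,1]}^2\le O(\eps^2)$, possibly with $\Theta(Nl)$ terms.

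To sparsify $g$, use that it is $O(\eps)$-close on $[-1,1]$ to the genuinely $k$-sparse $\mcl M$. First multiply $g$ by the interval approximant $\mcl I_{\eps,1-c/2}$ of Definition~\ref{def:interval-function} --- a sum of $\poly(l/c)$ Gaussian multipliers, equal to $1\pm\eps$ on $[-1+c,1-c]$ and negligible outside $[-1,1]$ --- passing to a function confined to $[-1,1]$ and tamed near its endpoints, at the cost of $\eps\,\poly(l)$ in the quantities we track. Decompose this function by pieces $J_m$, Taylor-expand each around $\mu_m$ (exact up to $\eps^{\Omega(1)}$ at degree $O(l)$ since $J_m$ has width $\le l$ and $x\in[-1,1]$), embed each piece in $\C^{O(l)}$, and apply Corollary~\ref{coro:convexhull}/Caratheodory to replace each piece's contribution by $O(l)$ of the dictionary exponentials with comparable coefficient $\ell_1$-norm. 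Finally, since $\mcl M$ --- and hence the part of $g$ that carries non-negligible $L^2[-1+c,1-c]$ mass --- lives on only $\le k$ of the pieces, discard all but $O(k\,\poly(l/c))$ pieces, and convert the surviving Taylor polynomials back to exponentials (another $\poly(l)$ factor in $L^2$ error). The result $\wt{\mcl M}$ is $(k\,\poly(l/c),k\,\poly(l/c))$-simple with $\int_{-1+c}^{1-c}|\wt{\mcl M}-f|^2\le\eps^2\poly(l)$, computed in $\poly(N,k,l,1/c)$ time.

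The real obstacle is the last pruning: exponentials from distinct but adjacent pieces $J_m$ are correlated on the finite interval $[-1,1]$, so the piecewise decomposition of the computed $g$ need not be concentrated where $\mcl M$'s is, and one cannot simply drop the pieces where $g_m$ is small. I expect this to be handled either by choosing the pieces to be well separated --- merging $T$-points within $O(l)$ of each other and recursing into any over-wide merged cluster, so that distinct pieces are nearly orthogonal on $[-1,1]$ via sinc decay and pruning becomes safe --- or by replacing the single regression with a greedy procedure that appends one well-conditioned piece fit at a time and bounds the number of rounds through a residual-$L^2$ potential argument. In either route the $\poly(l)$ slack in the error is what absorbs the losses from the $\mcl I$-truncation, the near-orthogonality estimates, and the Taylor-to-exponential conversion.
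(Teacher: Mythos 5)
The last step of your plan---pruning a single global regression down to the pieces that matter---is exactly where the argument breaks, and you flag this yourself, so let me explain why neither of your proposed repairs closes the gap and how the paper's proof is structured to avoid it entirely.

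Your regression step (ii) produces $g$ with small $L^2[-1,1]$ error and bounded coefficient $\ell_1$-norm, but it has no mechanism forcing $g$'s coefficients to concentrate on the $\leq k$ pieces $J_m$ actually occupied by $\wh{\mcl M}$. The minimizer of the regularized least-squares problem can legitimately spread mass across $\Theta(Nl)$ frequencies, and since frequencies in adjacent $J_m$'s are \emph{not} nearly orthogonal on $[-1,1]$---their inner products decay only like $\mathrm{sinc}$, i.e.\ inverse-linearly in frequency gap---dropping a piece where $g$ restricted to $J_m$ looks small can still wreck the $L^2$ fit by an amount comparable to the whole error budget, not by a per-piece $\eps$. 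Your ``well-separated pieces $+$ sinc decay'' repair runs into precisely this issue: sinc decay is far too slow to make hard-cutoff pieces act independent; you would need a number of pieces growing with $1/\eps$ rather than $\poly(\log 1/\eps)$ to get negligible cross-terms, which destroys the sparsity bound. The greedy-fit alternative you sketch would need its own version of a decoupling inequality to control the cumulative error over $O(k\poly(l))$ rounds, and that is exactly the nontrivial lemma you would be inventing from scratch.

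The paper takes the opposite order: localize \emph{first}, then fit each localization independently, then select. For each candidate center $\mu$ (drawn from a $\poly(N,l,1/c)$-sized grid near $T$), it convolves $f$ in real space with a compactly supported kernel $\mcl K_{\mu,l',c/l'}$ that in Fourier space looks like a Gaussian multiplier $M_{0,(2\pi l'/c)^2}(\cdot-\mu)$ (Claims~\ref{claim:easyL2bound} and \ref{claim:fourier-localization}). Because the kernel has real-space support of width $\sim c$, the convolved function is computable on $[-1+c,1-c]$ from query access on $[-1,1]$---this is the source of the interval shrinkage. Because the Gaussian multiplier has \emph{exponential} tails (not sinc), the convolved function $f^{(\mu)}$ is, up to exponentially small error, supported on $[\mu - O(l'^2/c),\mu + O(l'^2/c)]$ in frequency and so is well-conditioned; Corollary~\ref{coro:fourier-simplecase} then gives a $O(\poly(l))$-sparse fit $h^{(\mu)}$ whose error is controlled by the localized residual $\int|M_{0,(2\pi l'/c)^2}(\cdot-\mu)\wh r|^2$ (Claim~\ref{claim:contraction-bound}). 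The crucial point is that if $\mu$ is far from $\mathrm{supp}(\wh{\mcl M})$ then $\mcl T_{\mu,l',2\pi l'/c}*\mcl M\equiv 0$, so $f^{(\mu)}$ is pure localized residual. Selecting the $k\poly(l)$ best localizations and ignoring the rest is therefore safe, and the errors recombine quadratically rather than linearly because of the Gaussian-window decoupling inequality Claim~\ref{claim:decoupling-pre}, which is proved by multiplying by a sharp Gaussian window in real space and using the exponential Fourier tail of that window---again Gaussian, not hard-cutoff. The final reconstruction $\sum_{\mu\in U}\frac{1}{l'\sqrt{2\pi}}h^{(\mu)}\approx\mcl M$ is justified by the partition-of-unity Claim~\ref{claim:sum-approximation}.

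Also a smaller mismatch: you invoke the interval approximant $\mcl I_{\eps,\cdot}$ (Theorem~\ref{thm:approx-interval}) in your sparsification step, but that construction is used by the paper only in the GMM argument (Section~\ref{sec:GMM-full}), where localization is in real space. In the Fourier argument, localization is in frequency space and is implemented by the $\mcl K$/$\mcl T$ kernels; multiplying $g$ by $\mcl I_{\eps,1-c/2}$ in real space convolves $\wh g$ with a sum of Gaussians and so \emph{destroys} the finite frequency support that your subsequent piecewise Taylor argument relies on.

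In short: your existence and regression steps are fine, but the fit-then-prune order has a genuine gap that the paper avoids by doing smooth Gaussian localization \emph{before} fitting, and by proving a decoupling inequality that exploits Gaussian (exponential) rather than sinc (polynomial) decay.
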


The proof of Lemma \ref{lem:fourier-manycluster} will involve localizing the frequencies and then using Corollary \ref{coro:fourier-simplecase} to reconstruct after localizing.  We will do this for $\poly(N,k, \log 1/\eps)$ different localizations (based on the set $T$ that we are given).  We will then select at most $\wt{O}(k)$ of these localized reconstructions to add together and output.  The intuition behind why we can find such a set of $\wt{O}(k)$ localized reconstructions and ignore the rest is that $\mcl{M}$ is $k$-Fourier sparse so localizations that are far away from the frequencies of $\mcl{M}$ can essentially be ignored.

The localization procedure will involve convolving $f$ by a Gaussian times an exponential (technically we will convolve by a function that approximates a Gaussian times an exponential).  Note that this is equivalent to multiplying the Fourier transform by a Gaussian multiplier.  This will ensure that frequencies too far away from a certain target frequency will only contribute negligibly and we only need to worry about reconstructing the frequencies that are close to the target frequency.

\subsection{Properties of Localization}\label{sec:local-fourier-properties}
In this section, we formalize the localization step and prove several inequalities that will be used in the proof of Lemma \ref{lem:fourier-manycluster}.  The way we would like to localize the frequencies is by multiplying by a Gaussian multiplier in Fourier space since afterwards, we would be able to essentially neglect any frequencies that are far away from the center of the Gaussian multiplier.  This is equivalent to convolving by a Gaussian times an exponential, i.e. a function of the form $G(x)e^{2 \pi i \theta x}$, in real space.  For technical reasons, we will actually define two types of functions, which we call kernels, that approximate functions of the form $G(x)e^{2 \pi i \theta x}$.  The reason that we will need to work with both is that the first type can be computed efficiently while the second is easier to use in the analysis of our algorithm.   
\\\\
We begin with a few definitions.
\begin{definition}
For a function $f: \R \rightarrow \C$ and any $l > 0$, define $f^{\trunc(l)}$ to be the function that is equal to $f$ on $[-l,l]$ and $0$ otherwise.
\end{definition}

\begin{definition}\label{def:localization-type2}
For parameters $\mu, l , c$. we define the function 
\[
\mcl{K}_{\mu, l ,c} = (1/c)\mcl{P}_{l}^{\trunc(l)}(x/c) \cdot e^{2 \pi i \mu x}
\]
where $\mcl{P}_l$ is as defined in Definition \ref{def:poly-approx-Gaussian}. 
\end{definition}
\begin{remark}
We call functions of the above form truncated polynomial kernels.  The fact that such functions are truncated polynomials will make it easy to explicitly compute convolutions.
\end{remark}

\begin{definition}\label{def:localization-type1}
For parameters $\mu, l,a$, we define the function $\mcl{T}_{\mu,l ,a}$ as follows.  First define   
\[
\mcl{S}_{ l , a}  = M_{0,1}^{\trunc(l)} ( x /a )  
\]
(recall Definition \ref{def:gaussian-multiplier}) and then define
\[
\mcl{T}_{\mu,l ,a}(x) = \wh{\mcl{S}_{l, a}}(x) e^{2 \pi i \mu x} \,.
\]
\end{definition}
\begin{remark}
We call functions of the above form truncated Gaussian kernels.  Note that truncated Gaussian kernels are compactly supported in Fourier space, which will be a convenient property in the analysis of our algorithm later on.
\end{remark}

Note that both $\mcl{K}_{0, l ,c} $  and $\mcl{T}_{0,l , 2\pi / c}$ are meant to approximate the Gaussian $N(0, c^2)$.  The fact that $\mcl{K}_{0, l ,c} $ approximates $N(0,c^2)$ is clear from the definition (and Lemma \ref{lem:Gaussian-taylorseries}).  To see why $\mcl{T}_{0,l , 2\pi / c}$ approximates the Gaussian, note that if in the definition of $\mcl{T}$, we did not truncate before taking the Fourier transform, then we would get exactly $N(0, c^2)$.  
\\\\
We will now prove several inequalities relating to how convolving with the kernels $\mcl{K}$ and $\mcl{T}$ affect a function.  The first set of bounds are an immediate consequence of Lemma \ref{lem:Gaussian-taylorseries}.
\begin{claim} \label{claim:easyL2bound}
Let $l, \eps > 0$ be parameters such that $l \geq \lceil \log 1/\eps \rceil$.  Let $0 < c < 1$ be some constant.  Then
\begin{align*}
&\norm{\mcl{K}_{0,l,c}(x)  - N(0, c^2)(x) }_1 \leq O(\eps l)  \\
&\norm{\mcl{K}_{0,l,c}(x)  - N(0, c^2)(x) }_2^2 \leq O( l \eps^2 / c) \,.
\end{align*}
\end{claim}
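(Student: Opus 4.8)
The plan is to reduce everything to a single pointwise estimate on the polynomial $\mcl{P}_l$ by rescaling, and then split the integral into the bulk $|x|\le cl$ and the tail $|x|>cl$.

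First I would unwind the definitions. By Definition \ref{def:localization-type2}, $\mcl{K}_{0,l,c}(x) = (1/c)\,\mcl{P}_l^{\trunc(l)}(x/c)$, which equals $(1/c)\mcl{P}_l(x/c)$ for $|x|\le cl$ and vanishes otherwise, while $N(0,c^2)(x) = (1/c)\,N(0,1)(x/c)$. Hence the difference is $(1/c)\bigl[\mcl{P}_l^{\trunc(l)}(x/c) - N(0,1)(x/c)\bigr]$, and the substitution $u = x/c$ (with $dx = c\,du$) gives
\[
\norm{\mcl{K}_{0,l,c} - N(0,c^2)}_1 = \norm{\mcl{P}_l^{\trunc(l)} - N(0,1)}_1, \qquad \norm{\mcl{K}_{0,l,c} - N(0,c^2)}_2^2 = \frac{1}{c}\,\norm{\mcl{P}_l^{\trunc(l)} - N(0,1)}_2^2 \,.
\]
So it suffices to prove $\norm{\mcl{P}_l^{\trunc(l)} - N(0,1)}_1 \le O(\eps l)$ and $\norm{\mcl{P}_l^{\trunc(l)} - N(0,1)}_2^2 \le O(\eps^2 l)$; the extra factor $1/c$ that lands only on the $L^2$ bound is precisely what appears in the claim.

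Next I would split $\R$ into $[-l,l]$ and its complement. On $[-l,l]$ the truncation is inactive, so the integrand is $|\mcl{P}_l(u) - N(0,1)(u)|$, which Lemma \ref{lem:Gaussian-taylorseries} bounds by $O(\eps)$ pointwise; integrating over an interval of length $2l$ contributes $O(\eps l)$ to the $L^1$ norm and $O(\eps^2 l)$ to the squared $L^2$ norm. Off $[-l,l]$, $\mcl{P}_l^{\trunc(l)}$ is identically zero, so the integrand is just $N(0,1)(u)$ (resp. its square), and the standard Gaussian tail bounds give $\int_{|u|>l} N(0,1)(u)\,du \le e^{-l^2/2}$ and $\int_{|u|>l} N(0,1)(u)^2\,du \le O(e^{-l^2})$. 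Since $l \ge \lceil \log 1/\eps\rceil$ implies $\eps \ge e^{-l}$, we get $e^{-l^2/2} \le e^{1/2}\eps = O(\eps)$ and $e^{-l^2} = O(\eps^2)$, so both tail terms are absorbed into the bulk estimates. Adding the two pieces and undoing the rescaling completes the proof.

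I do not expect a genuine obstacle here; the proof is essentially bookkeeping. The only two points that need care are (i) tracking the scaling so that the $1/c$ factor attaches exactly to the $L^2$ bound and not to the $L^1$ bound, and (ii) invoking the precise form of the polynomial-approximation guarantee of Lemma \ref{lem:Gaussian-taylorseries} (degree $\approx (10l)^2$, pointwise error $O(\eps)$ on $[-l,l]$) in tandem with the fact that a standard Gaussian puts mass at most $\eps$ outside $[-l,l]$ when $l \ge \log 1/\eps$.
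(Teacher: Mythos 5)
Your proposal is correct and takes essentially the same approach as the paper: both invoke Lemma~\ref{lem:Gaussian-taylorseries} for the pointwise bound on the truncation window, bound the bulk contribution by multiplying by the window length, and dispose of the tail via Gaussian decay. The only cosmetic difference is that you change variables to $u=x/c$ to isolate the $1/c$ factor in the $L^2$ bound, whereas the paper works directly in $x$ with the equivalent pointwise estimate $|(1/c)\mcl{P}_l(x/c)-N(0,c^2)(x)|\le\eps/c$ on $[-lc,lc]$.
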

\begin{proof}
We know by Lemma \ref{lem:Gaussian-taylorseries} that 
\[
|(1/c) \mcl{P}_{l}(x/c) - N(0,c^2)(x)| \leq \eps /c
\]
for all $x \in [-lc, lc]$.  Thus, since $G$ decays rapidly outside the interval $[-lc, lc]$ we have
\[
\norm{\mcl{K}_{0,l,c}(x) - N(0,c^2)(x) }_1 \leq O(\eps l) \,.
\]
The second inequality follows by a similar argument.
\end{proof}

The next claim formalizes the intuition that $\mcl{K}_{\mu, l ,c} $  and $\mcl{T}_{\mu,l , 2\pi / c}$ must be close because they both approximate the same function (the function $N(0,c^2) e^{2\pi i \mu x}$).
\begin{claim}\label{claim:fourier-localization}
Let $l$ be some parameter and let $\eps > 0$ be such that $l \geq \lceil \log 1/\eps \rceil$.  Let $g$ be a function such that $\norm{\wh{g}}_1 \leq 1$.  Then for any $\mu, c $,
\[
\norm{\mcl{K}_{\mu,l,c} * g -  \mcl{T}_{\mu,l,2\pi/c} * g}_{\infty} \leq O( \eps l ) \,,
\]
where $*$ denotes convolution.  
\end{claim}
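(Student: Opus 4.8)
The plan is to move everything to the Fourier side and use the hypothesis $\norm{\wh g}_1 \le 1$ to convert a uniform bound on the two kernels' Fourier transforms into the claimed $L^\infty$ bound on the convolutions. Writing $g$ by Fourier inversion as $g(x) = \int \wh g(\xi)\, e^{2\pi i \xi x}\,d\xi$ and unfolding, for any $y$ and any kernel $\mcl{L}$ one has (by Fubini) $(\mcl{L}*g)(y) = \int \wh g(\xi)\, e^{2\pi i \xi y}\, \wh{\mcl{L}}(\xi)\, d\xi$, hence
\[
\norm{\mcl{K}_{\mu,l,c}*g - \mcl{T}_{\mu,l,2\pi/c}*g}_{\infty} \le \norm{\wh g}_1 \cdot \norm{\wh{\mcl{K}_{\mu,l,c}} - \wh{\mcl{T}_{\mu,l,2\pi/c}}}_{\infty} \le \norm{\wh{\mcl{K}_{\mu,l,c}} - \wh{\mcl{T}_{\mu,l,2\pi/c}}}_{\infty}.
\]
(This Fourier-domain reading is also how I would make sense of $\mcl{T}_{\mu,l,2\pi/c}*g$ at all, since the hard cutoff in its definition makes $\mcl{T}_{\mu,l,2\pi/c}$ decay only like $1/x$ and so lie outside $L^1$.) So it remains to bound $\norm{\wh{\mcl{K}_{\mu,l,c}} - \wh{\mcl{T}_{\mu,l,2\pi/c}}}_{\infty}$ by $O(\eps l)$, and I would do this by the triangle inequality, comparing both Fourier transforms to their common target: the Fourier transform of $N(0,c^2)e^{2\pi i \mu x}$, which by the modulation identity is the function $\xi \mapsto \wh{N(0,c^2)}(\xi - \mu)$.

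For the $\mcl{T}$ term the comparison is essentially an equality. By modulation together with Fourier inversion, $\wh{\mcl{T}_{\mu,l,2\pi/c}}(\xi) = \mcl{S}_{l,2\pi/c}(\xi-\mu) = M_{0,1}^{\trunc(l)}\!\big((\xi-\mu)c/(2\pi)\big)$, while by the remark following Definition \ref{def:localization-type1} the un-truncated object is exactly $\wh{N(0,c^2)}$, i.e.\ $\wh{N(0,c^2)}(\eta) = M_{0,1}(\eta c/(2\pi))$. Thus $\wh{\mcl{T}_{\mu,l,2\pi/c}}$ agrees with $\xi \mapsto \wh{N(0,c^2)}(\xi-\mu)$ on the interval $[\mu - 2\pi l/c,\ \mu + 2\pi l/c]$ and vanishes outside it, so the pointwise difference never exceeds $M_{0,1}(l) = e^{-l^2/2}$, which is $O(\eps l)$ because $l \ge \lceil \log 1/\eps \rceil$. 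For the $\mcl{K}$ term I would simply use $\norm{\wh h}_{\infty} \le \norm{h}_1$: since $\mcl{K}_{\mu,l,c} - N(0,c^2)e^{2\pi i \mu x} = \big(\mcl{K}_{0,l,c} - N(0,c^2)\big)e^{2\pi i \mu x}$ and modulation preserves the $L^1$ norm, Claim \ref{claim:easyL2bound} gives $\norm{\wh{\mcl{K}_{\mu,l,c}} - \wh{N(0,c^2)}(\,\cdot-\mu\,)}_{\infty} \le \norm{\mcl{K}_{0,l,c} - N(0,c^2)}_1 \le O(\eps l)$. Adding the two estimates via the triangle inequality yields $\norm{\wh{\mcl{K}_{\mu,l,c}} - \wh{\mcl{T}_{\mu,l,2\pi/c}}}_{\infty} \le O(\eps l)$ and finishes the proof.

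The one thing to watch out for is that one cannot take the tempting route of bounding $\norm{\mcl{K}_{\mu,l,c} - \mcl{T}_{\mu,l,2\pi/c}}_1$ and invoking $\norm{\mcl{L}*g}_\infty \le \norm{\mcl{L}}_1\norm{g}_\infty$: the sharp Fourier-space truncation that makes $\mcl{T}_{\mu,l,2\pi/c}$ convenient to analyze also destroys integrability, so that $L^1$ distance is actually infinite. Passing to the Fourier side is precisely what makes the hypothesis we are given, $\norm{\wh g}_1 \le 1$ (rather than a bound on $\norm{g}_1$ or $\norm{g}_2$), exactly the right tool, and it reduces the claim to the two elementary comparisons above.
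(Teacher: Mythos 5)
Your proof is correct and takes essentially the same approach as the paper's: both compare $\mcl{K}_{\mu,l,c}$ and $\mcl{T}_{\mu,l,2\pi/c}$ to a common Gaussian pivot $N(0,c^2)e^{2\pi i\mu x}$, using Claim \ref{claim:easyL2bound} for the $\mcl{K}$ side and the Gaussian tail decay of $M_{0,1}$ past $l$ for the $\mcl{T}$ side. The only difference is organizational: the paper handles the $\mcl{K}$--$G$ comparison in physical space (bounding $|\mcl{K}*g-G*g|$ by $\norm{g}_\infty\norm{\mcl{K}-G}_1$) and the $\mcl{T}$--$G$ comparison on the Fourier side, whereas you put everything on the Fourier side from the outset via $\norm{\mcl{K}*g-\mcl{T}*g}_\infty \le \norm{\wh g}_1\norm{\wh{\mcl{K}}-\wh{\mcl{T}}}_\infty$; the two are equivalent since $\norm{g}_\infty\le\norm{\wh g}_1$ and $\norm{\wh{\mcl{K}}-\wh G}_\infty\le\norm{\mcl{K}-G}_1$. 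Your explicit observation that $\mcl{T}_{\mu,l,2\pi/c}\notin L^1$ (so the convolution should be read on the Fourier side) is a nice touch that the paper leaves implicit.
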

\begin{proof}
First, note that it suffices to prove the above for $\mu = 0$.  Now let $G = N(0, c^2)$.  By Claim \ref{claim:easyL2bound}, we have
\[
\norm{\mcl{K}_{0,l,c}(x) - G(x) }_1 \leq O(\eps l) \,.
\]
Since $\norm{\wh{g}}_1 \leq 1$, we know that $\norm{g}_{\infty} \leq 1$ and thus for all $x$,
\begin{equation}\label{eq:convolutionbound}
|\mcl{K}_{0,l,c} * g(x) - G * g(x)| \leq O(\eps l) \,.
\end{equation}
Finally, observe that the Fourier transform of $G * g$ is equal to $\wh{G} \wh{g} $.  Note that 
\[
\wh{G}(x) = M_{0,(2\pi/ c)^2}(x) = M_{0,1}(cx/(2 \pi )) \,.
\]
By construction, 
\[
\wh{\mcl{T}_{0,l,2\pi/c} * g} = M_{0,1}^{\trunc(l)} ( c x /(2\pi) ) \wh{g}  
\]
which is equal to $\wh{G} \wh{g} $ restricted to the interval $[-2 \pi l/c, 2 \pi l/c]$.  Using the fact that  $\wh{G} = M_{0,(2\pi/ c)^2}$ decays rapidly outside $[-2 \pi l/c, 2 \pi l/c]$, we have that
\[
\norm{\wh{\mcl{T}_{0,l,2\pi/c} * g} - \wh{G} \wh{g} }_1 \leq \eps \,.
\]
Thus, $|\mcl{T}_{0,l,2\pi/c} * g(x) - G * g(x)| \leq \eps$ for all $x$ and combining with (\ref{eq:convolutionbound}), we are done.
\end{proof}

\subsection{Decoupling}
In our full algorithm, we will reconstruct frequency-localized versions of a function independently for different frequencies $\theta$ that we localize around.  We will then combine our localized reconstructions by adding them.  In this section, we prove several inequalities that will allow us to analyze what happens to our estimation error when we add different localized reconstructions together.  Recall that convolving by a truncated polynomial kernel $\mcl{K}_{\mu, l ,c}$ or truncated Gaussian kernel $\mcl{T}_{\mu, l ,a}$ is approximately equivalent to multiplying the Fourier transform by a Gaussian multiplier centered around $\mu$.  Lemma \ref{lem:approx-constant} implies that adding up evenly spaced Gaussian multipliers approximates the constant function.  Thus, we expect that convolving by an expression of the form $\sum_{\mu} \mcl{K}_{\mu, l ,c}$ or $\sum_{\mu} \mcl{T}_{\mu, l ,a}$ where the sum is over evenly spaced $\mu$ should roughly recover the original function.  The first two claims here formalize this intuition.

In the first claim, we analyze what happens when we add several localizations obtained by convolving with various truncated polynomial kernels.
\begin{claim}\label{claim:convolution-bound}
Let $l$ be some parameter and let $0 < \eps < 0.1$ be such that $l \geq \lceil \log 1/\eps \rceil$.  Let $g$ be a function.  Let $c$ be some constant.  Let $S$ be a set of integer multiples of $2 \pi/(cl)$.  Then
\[
\int_{-1}^1 \left \lvert \frac{1}{l\sqrt{2\pi}}\sum_{\mu \in S }\mcl{K}_{\mu, l ,c} * g  \right \rvert^2 \leq (1 + O(\eps |S|))^2 \int_{-(1 + lc)}^{1 + lc} |g|^2 \,,
\]
\end{claim}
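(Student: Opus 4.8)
The plan is to pass to the Fourier side, where convolution by $\mcl{K}_{\mu,l,c}$ becomes multiplication by something very close to the Gaussian multiplier $M_{0,(2\pi/c)^2}(\cdot - \mu)$, and then use the periodization estimate of Lemma~\ref{lem:approx-constant} to show that summing these shifted multipliers over the evenly spaced set $S$ produces something uniformly within $O(\eps|S|)$ of the constant $1$ times $\widehat g$ (after the correct normalization $\tfrac{1}{l\sqrt{2\pi}}$, which is exactly the constant appearing in Lemma~\ref{lem:approx-constant} with spacing $c' = 2\pi/(cl)$ and the right standard deviation). So the first step is to write, for each $\mu\in S$,
\[
\widehat{\mcl{K}_{\mu,l,c}*g}(x) = \widehat{\mcl{K}_{\mu,l,c}}(x)\,\widehat g(x),
\]
and use Claim~\ref{claim:easyL2bound} (in its $L^1$ form on the kernel, hence $L^\infty$ on the Fourier side) to replace $\widehat{\mcl{K}_{\mu,l,c}}$ by $\widehat{N(0,c^2)}(\cdot-\mu) = M_{0,1}(c(\cdot-\mu)/(2\pi))$ up to an additive error $O(\eps l)$ in $L^\infty$, uniformly in $\mu$.

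Next I would sum over $\mu \in S$. Since $S$ consists of integer multiples of $2\pi/(cl)$, the spacing is $c' := 2\pi/(cl) \le (\log 1/\eps)^{-1/2}$ (using $l \ge \lceil\log 1/\eps\rceil$, so $1/l \le 1/\log(1/\eps) \le (\log 1/\eps)^{-1/2}$ after adjusting the constant in $c$), so Lemma~\ref{lem:approx-constant} applies with $\sigma = 2\pi/c$ and gives that $\tfrac{1}{l\sqrt{2\pi}}\sum_{\mu\in\mathbb{Z}\cdot c'} M_{0,(2\pi/c)^2}(x-\mu)$ is within $\eps$ of $1$ for every $x$; restricting to the finite set $S$ only drops nonnegative terms, and the dropped tail at any fixed $x$ is bounded by the full sum, i.e. by $O(1)$, but more to the point the finite sum is still between $0$ and $1+\eps$. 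Combining with the kernel-approximation error, $\tfrac{1}{l\sqrt{2\pi}}\sum_{\mu\in S}\widehat{\mcl{K}_{\mu,l,c}}(x)$ lies within $O(\eps|S|)$ of a quantity bounded by $1+\eps$ in absolute value, uniformly in $x$. Hence pointwise
\[
\Bigl| \tfrac{1}{l\sqrt{2\pi}}\sum_{\mu\in S}\widehat{\mcl{K}_{\mu,l,c}*g}(x)\Bigr| \le (1 + O(\eps|S|))\,|\widehat g(x)|.
\]

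For the final step I would translate this back to physical space, but carefully, because the claimed bound has $\int_{-1}^1$ on the left and $\int_{-(1+lc)}^{1+lc}$ on the right — so I cannot simply invoke Plancherel on all of $\mathbb{R}$. The resolution is that each $\mcl{K}_{\mu,l,c}$ is supported in $[-lc, lc]$, so $\bigl(\sum_{\mu\in S}\mcl{K}_{\mu,l,c}*g\bigr)$ restricted to $[-1,1]$ depends only on $g$ restricted to $[-(1+lc),1+lc]$; replacing $g$ by $g\cdot\mathbf{1}_{[-(1+lc),1+lc]}$ does not change the left-hand side, and then applying the pointwise Fourier bound above together with Plancherel to this truncated $g$ gives $\int_{-1}^1 |\cdots|^2 \le \int_{\mathbb R}|\cdots|^2 \le (1+O(\eps|S|))^2 \int_{\mathbb R} |g\mathbf{1}_{[-(1+lc),1+lc]}|^2 = (1+O(\eps|S|))^2\int_{-(1+lc)}^{1+lc}|g|^2$, as desired. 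The main obstacle I anticipate is bookkeeping the error accumulation correctly: the per-$\mu$ kernel error is $O(\eps l)$ but we sum $|S|$ of them, and we must check this aggregates to $O(\eps|S|)$ rather than $O(\eps l |S|)$ — this requires being slightly more careful, e.g. choosing the accuracy parameter in the kernel definition to be $\eps/l$ rather than $\eps$ (which only costs constant factors in $l$ elsewhere), so that the total kernel-replacement error is $O(\eps|S|)$; I would carry a cushion in the definitions precisely for this. A secondary subtlety is making sure $\widehat g$ is well-defined for the Plancherel step, which is why passing to the compactly-supported truncation $g\mathbf 1_{[-(1+lc),1+lc]}$ first is essential.
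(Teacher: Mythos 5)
Your plan is correct and follows essentially the same route as the paper's proof: truncate $g$ to $[-(1+lc),1+lc]$ using the compact support of $\mcl{K}_{\mu,l,c}$, pass to the Fourier side, replace $\widehat{\mcl{K}_{0,l,c}}$ by the Gaussian multiplier $\widehat{N(0,c^2)}$ in $L^\infty$ via Claim~\ref{claim:easyL2bound}, apply Lemma~\ref{lem:approx-constant} to bound the periodized Gaussian sum uniformly by $1+\eps$, and finish with Plancherel. The one point you flag as a potential obstacle --- that the per-$\mu$ kernel-replacement error $O(\eps l)$ summed over $|S|$ terms might give $O(\eps l |S|)$, so you would need to sharpen the kernel's accuracy parameter to $\eps/l$ --- is already taken care of by the $\tfrac{1}{l\sqrt{2\pi}}$ normalization sitting in front of the sum: each term's $O(\eps l)$ Fourier-side error is divided by $l$, giving $O(\eps)$ per term and hence $O(\eps|S|)$ total, exactly as claimed. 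No adjustment to the kernel definition is needed.
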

\begin{proof}
Note that $\mcl{K}_{\mu,l,c}$ is supported on the interval $[-lc, lc]$ so we can restrict $g$ to be supported on $[-(1 + lc), 1 + lc]$ and $0$ outside the interval.  Define the function 
\[
V(x) =  \frac{1}{l\sqrt{2\pi}}\sum_{\mu \in S } \mcl{K}_{\mu,l,c} \,.
\]
Then 
\[
\wh{V}(x) = \frac{1}{l\sqrt{2\pi}}\sum_{\mu \in S} \wh{\mcl{K}_{0,l,c}}(x - \mu) \,.
\]
Next, let $G$ denote the Gaussian $G = N(0, c^2)$.  Recall by Claim \ref{claim:easyL2bound}, $\norm{\mcl{K}_{0,l,c}(x) - G(x) }_1 \leq O(\eps l)$ so $\norm{\wh{\mcl{K}_{0,l,c}} - \wh{G}}_{\infty} \leq O(\eps l)$.  Let 
\[
U(x) = \frac{1}{l\sqrt{2\pi}}\sum_{\mu \in S} \wh{G}(x - \mu) \,.
\]
Note that since $\wh{G} =  M_{0, (2\pi/c)^2}$, for any $x$,
\[
|U(x)| =  \frac{1}{l\sqrt{2\pi}}\sum_{\mu \in S}  M_{0, (2\pi/c)^2}(x - \mu) \leq  \frac{1}{l\sqrt{2\pi}}\sum_{\mu \in (2\pi)/(cl) \Z}  M_{0, (2\pi/c)^2}(x - \mu) \leq 1 + \eps 
\]
where the last inequality follows from Lemma \ref{lem:approx-constant}.  Also $\norm{\wh{K_{0,l,c}} - \wh{G}}_{\infty} \leq O(\eps l)$ so for all $x$,
\[
|\wh{V}(x)| \leq 1 + O(\eps  |S|) \,.
\]

We conclude
\[
\int_{-1}^1 |V * g(x)|^2 dx \leq \norm{V  * g}_2^2 = \norm{\wh{V}\wh{g}}_2^2 \leq ( 1 + O(\eps  |S|))^2\norm{\wh{g}}_2^2 =( 1 + O(\eps  |S|))^2\norm{g}_2^2
\]
To complete the proof recall that we restricted $g$ to be supported on $[-(1 + lc), 1 + lc]$ and we are done.
\end{proof}

The next claim is similar to the previous one except we analyze what happens when we add several localizations obtained by convolving with various truncated Gaussian kernels.
\begin{claim}\label{claim:sum-approximation}
Let $a.l, \eps$ be parameters such that $0 < \eps < 0.1$ and $ l \geq \lceil \log 1/\eps \rceil$. Let $g$ be a function whose Fourier support is contained in a set $S_0 \subset \R$ and such that $\norm{\wh{g}}_1 \leq 1$.  Let $S$ be a set of integer multiples of $a/l$ that contains all multiples within a distance $l \cdot a$ of $S_0$.  Then
\[
\norm{g - \frac{1}{l \sqrt{2\pi}} \sum_{\mu \in S} \mcl{T}_{\mu, l, a} * g }_{\infty}  \leq 2\eps  \,.
\]
\end{claim}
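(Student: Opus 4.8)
The plan is to move everything to the Fourier side, where convolution becomes multiplication, and reduce the statement to the fact (Lemma \ref{lem:approx-constant}) that evenly-spaced Gaussian multipliers sum to the constant $1$. First I would identify the Fourier transform of the kernel: since $\mcl{S}_{l,a} = M_{0,1}^{\trunc(l)}(\cdot/a)$ is even, Fourier inversion gives $\widehat{\wh{\mcl{S}_{l,a}}} = \mcl{S}_{l,a}$, and multiplying by $e^{2\pi i\mu x}$ in real space translates the transform by $\mu$, so $\wh{\mcl{T}_{\mu,l,a}}(\xi) = \mcl{S}_{l,a}(\xi-\mu)$. Hence $\wh{\mcl{T}_{\mu,l,a}*g}(\xi) = \mcl{S}_{l,a}(\xi-\mu)\,\wh g(\xi)$, and
\[
\wh{g}(\xi) - \frac{1}{l\sqrt{2\pi}}\sum_{\mu\in S}\wh{\mcl{T}_{\mu,l,a}*g}(\xi) = \wh g(\xi)\Bigl(1 - \frac{1}{l\sqrt{2\pi}}\sum_{\mu\in S}\mcl{S}_{l,a}(\xi-\mu)\Bigr).
\]
Using $\norm{h}_\infty \le \norm{\wh h}_1$ together with $\norm{\wh g}_1\le 1$ and the fact that $\wh g$ is supported on $S_0$, it then suffices to prove the pointwise bound $\bigl|1 - \tfrac{1}{l\sqrt{2\pi}}\sum_{\mu\in S}\mcl{S}_{l,a}(\xi-\mu)\bigr|\le 2\eps$ for every $\xi\in S_0$.

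For this pointwise bound, fix $\xi\in S_0$. Since $\mcl{S}_{l,a}(y)=M_{0,a^2}(y)$ when $|y|\le la$ and $\mcl{S}_{l,a}(y)=0$ otherwise, only the multiples $\mu\in(a/l)\Z$ with $|\xi-\mu|\le la$ contribute to the sum, and by hypothesis every such $\mu$ lies in $S$ (it is within distance $la$ of the point $\xi\in S_0$). Therefore
\[
\frac{1}{l\sqrt{2\pi}}\sum_{\mu\in S}\mcl{S}_{l,a}(\xi-\mu) = \frac{1}{l\sqrt{2\pi}}\sum_{\mu\in(a/l)\Z}M_{0,a^2}(\xi-\mu) \;-\; \frac{1}{l\sqrt{2\pi}}\sum_{\substack{\mu\in(a/l)\Z\\ |\xi-\mu|>la}}M_{0,a^2}(\xi-\mu).
\]
I would bound the first term by applying Lemma \ref{lem:approx-constant} with spacing parameter $c=1/l$ and standard deviation $a$ — legitimate because $l\ge\lceil\log 1/\eps\rceil$ forces $1/l\le(\log 1/\eps)^{-1/2}$ — which places it in $[1-\eps,1+\eps]$. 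The second term is a nonnegative Gaussian tail: each summand is at most $e^{-l^2/2}$, and since the truncation radius $la$ lies (comfortably) beyond the $O(\sqrt{\log 1/\eps})$-significant interval of $M_{0,a^2}$, the whole tail is at most $\eps$. Combining, the displayed quantity lies in $[1-2\eps,\,1+\eps]$, which gives the claimed $2\eps$ bound.

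Finally I would assemble: multiplying the pointwise estimate by $|\wh g(\xi)|$ and integrating over $S_0$ gives $\bigl\Vert\wh g - \tfrac{1}{l\sqrt{2\pi}}\sum_{\mu\in S}\wh{\mcl{T}_{\mu,l,a}*g}\bigr\Vert_1 \le 2\eps\norm{\wh g}_1\le 2\eps$, and passing back through $\norm{h}_\infty\le\norm{\wh h}_1$ yields the lemma; working on the Fourier side throughout also sidesteps integrability concerns, since there every quantity in sight is $L^1$. The only real care needed is in the first step — getting the Fourier conventions right so that truncating \emph{before} taking the transform in the definition of $\mcl{T}_{\mu,l,a}$ makes $\wh{\mcl{T}_{\mu,l,a}}$ exactly the shifted truncated Gaussian multiplier $\mcl{S}_{l,a}(\cdot-\mu)$ with no residual error — and in verifying that the parameters fed to Lemma \ref{lem:approx-constant} satisfy its hypothesis; the Gaussian tail estimate is routine.
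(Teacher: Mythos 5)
Your proof is correct and follows essentially the same approach as the paper's: pass to the Fourier side where $\wh{\mcl{T}_{\mu,l,a}*g}(\xi) = \mcl{S}_{l,a}(\xi-\mu)\wh g(\xi)$, reduce to the pointwise estimate $\bigl|1 - \tfrac{1}{l\sqrt{2\pi}}\sum_{\mu\in S}\mcl{S}_{l,a}(\xi-\mu)\bigr|\le 2\eps$ on $S_0$, and establish that estimate by combining Lemma~\ref{lem:approx-constant} (with $c=1/l$, $\sigma=a$) for the bulk with a Gaussian tail bound for the correction. The paper organizes the pointwise bound as $B = A + (B-A)$ with $A$ the untruncated sum over $S$ and $B$ the truncated one, whereas you instead write the truncated sum over $S$ as (full lattice sum) minus (far-away tail) — a slightly cleaner bookkeeping, but the same two estimates in both cases.
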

\begin{proof}
Consider the function
\[
A(x) = \sum_{\mu \in S} \frac{1}{l\sqrt{2\pi}} M_{\mu, a^2}(x) \,.
\]
For all $x \in S_0$, we claim that
\[
|A(x) - 1| \leq \eps \,.
\]
To see this, note that by Lemma \ref{lem:approx-constant},
\[
\left \lvert \sum_{\mu \in (a/l)\Z} \frac{1}{l\sqrt{2\pi}} M_{\mu,a^2}(x) - 1 \right \rvert \leq 0.1\eps
\]
for all $x$.  By assumption, the set $S$ contains all integers that are within $la$ of the set $S_0$ so for any $x \in S_0$, 
\[
\sum_{\mu \in (a/l)\Z \backslash S } \frac{1}{l\sqrt{2\pi}} M_{\mu, a^2}(x) \leq 0.1 \eps \,,
\]
and we conclude that we must have $|A(x) - 1| \leq \eps$.  Next, we claim that if we define
\[
B(x) = \sum_{\mu \in S} \frac{1}{l\sqrt{2\pi}} M_{0, 1}^{\textsf{trunc}(l)}((x - \mu)/a) \,,
\]
then we have for all $x$,
\[
|B(x) - A(x) | \leq \eps \,. 
\]
To see this, first note that $M_{0, 1}((x - \mu)/a) = M_{\mu, a^2}(x)$.  Next, using Gaussian tail decay, we have for all $x$,
\[
|B(x) - A(x) | \leq \sum_{\substack{\mu \in (a/l)\Z \\ |x - \mu| \geq la}}  \frac{1}{l\sqrt{2\pi}} M_{\mu, a^2}(x) \leq \eps \,.
\]
Thus, we have
\[
|B(x) - 1| \leq 2\eps
\]
for $x \in S_0$.  Note that by definition,
\[
\wh{g}(x) - \frac{1}{l\sqrt{2\pi}} \sum_{\mu \in S} \wh{\mcl{T}_{\mu, l, a} * g }(x) = \wh{g}(x)  - \wh{g}(x)\sum_{\mu \in S} \frac{1}{l\sqrt{2\pi}} M_{0, 1}^{\textsf{trunc}(l)}((x - \mu)/a)  = (1 - B(x)) \wh{g}(x)
\]
so therefore
\[
\norm{\wh{g} - \frac{1}{l\sqrt{2\pi}} \sum_{\mu \in S} \wh{\mcl{T}_{\mu, l, a} * g }}_{1} \leq 2 \eps
\]
and we conclude
\[
\norm{ g - \frac{1}{l \sqrt{2\pi}} \sum_{\mu \in S}\mcl{T}_{\mu, l, a} * g }_{\infty}  \leq 2\eps 
\]
as desired.
\end{proof}

The last result in this section will allow us to decouple errors from summing over different localizations.  Note that naively, if we add together $n$ estimates with $L^2$ errors $\eps_1, \dots , \eps_n$, then the resulting $L^2$ error of the sum could be as large as $\eps_1 + \dots + \eps_n$.  If the estimates were ``independent" on the other hand, we would expect the $L^2$ error of the sum to only be $\sqrt{\eps_1^2 + \dots + \eps_n^2}$.  We prove that when adding together functions that are frequency-localized at different locations, the error essentially matches the latter bound (up to logarithmic factors).  This tighter bound will be necessary in the proof of Lemma \ref{lem:fourier-manycluster}.

Note that if we have functions $g_1, \dots , g_n$ whose Fourier supports are disjoint, then it is immediate that 
\[
\norm{g_1 + \dots + g_n}_2^2 = \norm{g_1}_2^2 + \dots + \norm{g_n}_2^2 \,.
\]
However, in our setting, we need to restrict the functions to the interval $[-1,1]$ first, which causes the Fourier supports to no longer be disjoint.  Through a few additional arguments we are able to prove an analogous statement for bounding $\int_{-1}^1|g_1 + \dots + g_n|^2$.  We do pay some additional losses both in the inequality itself and the bounds of the integral i.e. we need to integrate the individual functions over a slightly larger interval.

\begin{claim}\label{claim:decoupling-pre}
Let $\alpha, l, \eps$ be parameters such that $\alpha > 1$ and $ l \geq \lceil \log \alpha n/\eps \rceil$.  Let $I_1, \dots , I_n$ be intervals of length at least $\alpha$ and assume that for any $x \in \R$, at most $l$ of the intervals contain $x$.  Let $g_1, \dots , g_n$ be functions such that for all $j \in [n]$, $\norm{\wh{g_j}}_1 \leq 1$ and $\wh{g_j}$ is supported on $I_j$.  Then 
\[
\int_{-1}^1 |g_1 + \dots + g_n|^2 \leq \poly(l) \left( \eps^5 + \int_{-(1 + \alpha^{-1})}^{(1 + \alpha^{-1})} |g_1|^2 + \dots + \int_{-(1 + \alpha^{-1})}^{(1 + \alpha^{-1})} |g_n|^2 \right)\,.
\]
\end{claim}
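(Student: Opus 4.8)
The plan is to multiply $G := g_1 + \dots + g_n$ by a carefully chosen real bump function $\phi$ enjoying three properties: (i) $\phi \ge 1-\eta$ on $[-1,1]$ and $0 \le \phi \le 1$ everywhere; (ii) $\phi(x)\le\eta$ for $|x|\ge 1+\alpha^{-1}$, with $\phi$ decaying at a Gaussian rate past the transition window; and (iii) $\wh{\phi}$ has both $L^1$ and $L^2$ mass at most $\eta$ outside $[-R,R]$, where $R = O(\alpha l)$. Here $\eta := 2^{-Cl}$ for a large absolute constant $C$, and the hypothesis $l \ge \lceil\log(\alpha n/\eps)\rceil$ forces $n\le 2^l$, $\alpha\le 2^l$ and $2^{-l}\le\eps$, so $\eta$ is super-polynomially smaller than $\eps^5/(n^2\poly(l))$ and every ``stray'' error below will be harmless. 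Such a $\phi$ can be produced as a Gaussian-smoothed indicator: $\phi = \one_J * \gamma_\tau$ with $J = [-(1+\tfrac12\alpha^{-1}),\,1+\tfrac12\alpha^{-1}]$, $\gamma_\tau$ the density of $N(0,\tau^2)$, and $\tau \asymp \alpha^{-1}/\sqrt{l}$; properties (i)--(ii) are Gaussian tail estimates, and (iii) follows from $\wh{\phi}(\xi) = \wh{\one_J}(\xi)\,e^{-2\pi^2\tau^2\xi^2}$ together with $|\wh{\one_J}(\xi)| \le \min(3,\,1/(\pi|\xi|))$, so the Gaussian factor annihilates everything beyond $R \asymp \sqrt{l}/\tau \asymp \alpha l$. (It is worth emphasizing that Theorem \ref{thm:approx-interval} is \emph{not} usable here: its narrowest Gaussian multipliers have standard deviation $\sim\eta^2$, hence a Fourier footprint of width $\sim\eta^{-2}$, which is far too large; the point of the single Gaussian-smoothed indicator is precisely that it band-limits to width only $O(\alpha\,\poly(l))$.)

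With $\phi$ in hand, since $\phi\ge 1-\eta$ on $[-1,1]$, Plancherel gives
\[
\int_{-1}^1 |G|^2 \le \frac{1}{(1-\eta)^2}\int_{\R}|G|^2\phi^2 = \frac{1}{(1-\eta)^2}\,\norm{\wh{G\phi}}_2^2,
\qquad \wh{G\phi} = \sum_{j=1}^n \wh{g_j} * \wh{\phi}\,.
\]
Splitting $\wh{\phi} = \wh{\phi}\,\one_{[-R,R]} + \wh{\phi}\,\one_{[-R,R]^c}$, Young's inequality and $\norm{\wh{g_j}}_1\le 1$ bound the total contribution of the second piece by $n\eta\le\eps^5$. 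For the first piece, $h_j := \wh{g_j}*(\wh{\phi}\,\one_{[-R,R]})$ is supported on the $R$-neighborhood $I_j'$ of $I_j$. The combinatorial heart of the argument is that, because the $I_j$ have length $\ge\alpha$ and overlap at most $l$, every point lies in at most $m := 2l(R/\alpha + 1) = O(l^2)$ of the $I_j'$: if $x\in I_j'$ then $I_j$ meets $[x-R,x+R]$, and since $|I_j|\ge\alpha$ this forces $|I_j \cap [x-R-\alpha,\,x+R+\alpha]|\ge\alpha$, whence $m\alpha \le \int_{x-R-\alpha}^{x+R+\alpha}\#\{j: y\in I_j\}\,dy \le (2R+2\alpha)l$. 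Applying Cauchy--Schwarz at each frequency $\xi$ (where at most $m$ of the $h_j(\xi)$ are nonzero) then gives $\norm{\sum_j h_j}_2^2 \le m\sum_j \norm{h_j}_2^2$.

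It remains to control $\norm{h_j}_2$. Since $\norm{h_j - \wh{g_j\phi}}_2 = \norm{\wh{g_j}*(\wh{\phi}\,\one_{[-R,R]^c})}_2 \le \eta$, we get $\norm{h_j}_2 \le \norm{g_j\phi}_2 + \eta$, and $\norm{g_j\phi}_2^2 = \int |g_j|^2\phi^2$ splits at $|x| = 1+\alpha^{-1}$: on the inside $\phi\le 1$, contributing at most $\int_{-(1+\alpha^{-1})}^{1+\alpha^{-1}}|g_j|^2$; on the outside we use only the pointwise bound $|g_j|\le\norm{\wh{g_j}}_1\le 1$ together with the Gaussian decay of $\phi$, so that $\int_{|x|\ge 1+\alpha^{-1}}|g_j|^2\phi^2 \le \int_{|x|\ge 1+\alpha^{-1}}\phi^2 \le \eta$. (This is the one place where it is essential that we never integrate $|g_j|^2$ over an unbounded set: in the intended application $\wh{g_j}$ is a sum of point masses and $g_j\notin L^2(\R)$, so only local $L^2$ bounds and the global pointwise bound $|g_j|\le 1$ are available.) Collecting terms, $\sum_j \norm{h_j}_2^2 \le 2\sum_j \int_{-(1+\alpha^{-1})}^{1+\alpha^{-1}}|g_j|^2 + O(n\eta)$, and substituting back through the two previous displays yields
\[
\int_{-1}^1 |G|^2 \le O(m)\sum_{j}\int_{-(1+\alpha^{-1})}^{1+\alpha^{-1}}|g_j|^2 + O(mn\eta) \le \poly(l)\Big(\eps^5 + \sum_j \int_{-(1+\alpha^{-1})}^{1+\alpha^{-1}}|g_j|^2\Big),
\]
using $m = O(l^2)$ and $mn\eta\le\eps^5$ (for $C$ large enough, invoking $n\le 2^l$ and $2^{-l}\le\eps$).

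The main obstacle is the construction and accounting of the first paragraph: one must exhibit a single bump that is (essentially) supported on $[-1-\alpha^{-1},\,1+\alpha^{-1}]$ with a Gaussian-rate cutoff and whose Fourier transform is essentially band-limited to width only $O(\alpha\,\poly(l))$, so that the overlap count $m$ of the dilated intervals remains $\poly(l)$ rather than blowing up to $\poly(n,1/\eps)$. Everything after that is bookkeeping --- Cauchy--Schwarz over the bounded-overlap family, Young's inequality for the band-limiting error, and the observation that every discarded term is super-exponentially small in $l$ and therefore absorbed into $\eps^5$.
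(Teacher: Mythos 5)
Your proof is correct and rests on the same underlying mechanism as the paper's: multiply by a time-localized window, pass to the Fourier side with Plancherel, exploit the bounded overlap of the (Fourier-)dilated supports via Cauchy--Schwarz, and absorb all truncation errors into $\eps^5$ using the super-exponential smallness of the tails. The implementation, however, differs in a meaningful way. The paper does not use one window: it fixes a grid of $O(\alpha\,l^{50})$ centers $\mu\in[-1,1]$, multiplies by the very narrow Gaussian multiplier $M_{\mu,\alpha^{-2}l^{-100}}$, runs the Plancherel/overlap argument separately for each $\mu$ (with Fourier-side extended intervals $J_j$ of width $\sim\alpha l^{60}$ and overlap count $O(l^{70})$), and then reassembles $\int_{-1}^1|G|^2$ by summing over $\mu$ since $\sum_\mu M_\mu^2$ is bounded below and above by constants on $[-1,1]$. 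You replace this entire family by a \emph{single} Gaussian-smoothed indicator $\phi=\one_J*\gamma_\tau$ with $\tau\asymp\alpha^{-1}/\sqrt{l}$, which is $\approx 1$ on $[-1,1]$, has Gaussian tails past $1+\alpha^{-1}$, and is essentially band-limited to $[-R,R]$ with $R\asymp\alpha l$. This eliminates the sum over $\mu$ altogether and tightens the overlap count to $O(l^2)$. Your aside about why Theorem~\ref{thm:approx-interval} cannot be invoked (its narrowest constituent Gaussians have time-domain width $\sim\eta^2$, hence Fourier footprint $\sim\eta^{-2}$, which would blow the overlap count out of $\poly(l)$) is precisely the right diagnosis; note that the paper's own proof of this claim likewise avoids Theorem~\ref{thm:approx-interval} for exactly this reason and instead works directly with a hand-built family of narrow multipliers. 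Both proofs also share the care you flagged about never integrating $|g_j|^2$ over an unbounded set, since in the application $\widehat{g_j}$ may be atomic and $g_j\notin L^2(\R)$; the paper handles this through $\|Mg_j\|_2\le\|M\|_2$, and you through the Gaussian tails of $\phi$ together with $|g_j|\le 1$. The net effect is that your version is a cleaner, single-window instantiation of the same decoupling strategy, with slightly better quantitative constants, while the paper's multi-window version is somewhat more mechanical but arrives at the identical $\poly(l)$-loss statement.
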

\begin{proof}
Consider the Gaussian multiplier $M = M_{\mu, \alpha^{-2} l^{-100}}$ for some $\mu \in [-1,1]$.  Now first, we bound 
\[
\int_{-\infty}^{\infty} M(x)^2 |g_1(x) + \dots + g_n(x)|^2 dx \,.
\]
Define the functions $h_j = \wh{M g_j}$ for all $j$.  Then by Plancherel,
\begin{equation}\label{eq:bound1}
\int_{-\infty}^{\infty} M(x)^2 |g_1(x) + \dots + g_n(x)|^2 dx = \int_{-\infty}^{\infty}|h_1 + \dots + h_n|^2 dx
\end{equation}
On the other hand, note that $h_j = \wh{M} * \wh{g_j}$.  Let $J_j$ denote the interval containing all points within distance at most $10 \alpha l^{60}$ of the interval $I_j$.  Let $h_j'$ be the function $h_j$ restricted to $J_j$ (and equal to $0$ outside).  Recall that the support of $\wh{g_j}$ is contained within $I_j$.  Then we claim that 
\[
\int_{-\infty}^{\infty} |h_j - h_j'|^2 \leq (\eps/(\alpha n))^{100} \,.
\]
This follows because $|\wh{M}| = N(0, (2 \pi \alpha l^{50})^2)$ and $\norm{\wh{g_j}}_1 \leq 1$ so for a point $x$ that is distance $d$ away from the interval $I_j$, we have
\[
|h_j(x)| \leq \max_{ y \in I_j} |\wh{M}(x - y)| \leq  N(0, (2 \pi \alpha l^{50})^2)(d) \,.
\]
Also note that$\norm{\wh{g_j}}_1 \leq 1$, implies $\norm{g_j}_{\infty} \leq 1$ so
\[
\norm{h_j}_2^2 = \norm{Mg_j}_2^2 \leq \norm{M}_2^2  \leq 1 \,.
\]
Combining the previous two inequalities over all $j$, we have
\begin{align*}
\left \lvert \norm{h_1 + \dots + h_n}_2 - \norm{h_1' + \dots + h_n'}_2 \right \rvert \leq (\eps/(\alpha n))^{49} \\
\norm{h_1 + \dots + h_n}_2 + \norm{h_1' + \dots + h_n'}_2 \leq 3n
\end{align*}
which implies
\begin{equation}\label{eq:bound2}
\int_{-\infty}^{\infty}|h_1 + \dots + h_n|^2 dx \leq 0.1 (\eps/(\alpha n))^{10} + \int_{-\infty}^{\infty}|h_1' + \dots + h_n'|^2 dx \,.
\end{equation}
Now note that since not too many of the intervals $I_j$ may contain the same point $x \in \R$, not too many of the extended intervals $J_j$ can contain the same point $x \in \R$.  In particular, at most $O(l^{70})$ of the extended intervals can contain the same point $x \in \R$.  In other words, each point $x \in \R$ is in the support of at most $O(l^{70})$ of the $h_1', \dots , h_n'$.  Thus, by Cauchy Schwarz,
\begin{equation}\label{eq:bound3}
\int_{-\infty}^{\infty}|h_1' + \dots + h_n'|^2 \leq O(l^{70}) \left( \int_{-\infty}^{\infty}|h_1'|^2 + \dots + \int_{-\infty}^{\infty}|h_n'|^2  \right) \,.
\end{equation}
Now we bound 
\begin{equation}\label{eq:bound4}
\int_{-\infty}^{\infty}|h_j'|^2  \leq \int_{-\infty}^{\infty} |h_j|^2 = \int_{-\infty}^{\infty} M(x)^2 |g_j(x)|^2 dx \leq 0.1 (\eps/(\alpha n))^{10} + \int_{- ( 1 + \alpha^{-1})}^{1 + \alpha^{-1}} M(x)^2 |g_j(x)|^2 dx
\end{equation}
where the last step holds because $\norm{g_j}_{\infty} \leq 1$ and the multiplier $M(x)$ is always at most $1$ and decays rapidly outside the interval $[- ( 1 + \alpha^{-1}), 1 + \alpha^{-1}]$ since $\mu \in [-1,1]$.  Putting everything together (\ref{eq:bound1}, \ref{eq:bound2}, \ref{eq:bound3}, \ref{eq:bound4}), we get
\[
\int_{-\infty}^{\infty} M(x)^2 |g_1(x) + \dots + g_n(x)|^2 dx \leq \poly(l) \left( (\eps/(\alpha n))^5 + \sum_{j = 1}^n \int_{- ( 1 + \alpha^{-1})}^{1 + \alpha^{-1}} M(x)^2 |g_j(x)|^2 dx  \right)\,.
\]

Now summing the above over different multipliers $M =  M_{\mu, \alpha^{-2} l^{-100}}$ i.e. with $\mu$ uniformly spaced on $[-1,1]$ with spacing $\alpha^{-1} l^{-50}$, we conclude 
\begin{align*}
\int_{-1}^{1}  |g_1(x) + \dots + g_n(x)|^2 dx \leq 10 \sum_{\mu}\int_{-1}^{1}    M_{\mu, \alpha^{-2} l^{-100}}(x)^2 |g_1(x) + \dots + g_n(x)|^2 dx \\ \leq \poly(l)  \eps^5 + \poly(l) \left(\sum_{\mu}\int_{-(1 + \alpha^{-1})}^{(1 + \alpha^{-1})}  M_{\mu, \alpha^{-2} l^{-100}}(x)^2(|g_1|^2 + \dots +  |g_n|^2) \right) \\ \leq \poly(l)  \eps^5 + \poly(l) \left(\int_{-(1 + \alpha^{-1})}^{(1 + \alpha^{-1})}  |g_1|^2 + \dots +  \int_{-(1 + \alpha^{-1})}^{(1 + \alpha^{-1})}|g_n|^2) \right)\,.
\end{align*}
\end{proof}

\subsubsection{Completing the Proof of Lemma \ref{lem:fourier-manycluster}}

In this section, we will complete the proof of Lemma \ref{lem:fourier-manycluster}.  First, we need to introduce some notation.  We will carry over all of the notation from the statement of Lemma \ref{lem:fourier-manycluster}.  We also use the following conventions:
\begin{itemize}
    \item Let $S_0 = \{\theta_1, \dots , \theta_k \}$ be the frequencies in the Fourier support of $\mcl{M}$
    \item  Let $\gamma > 0$ be parameter to be chosen later and let $l' = \lceil \log 1/ \gamma \rceil$ (we will ensure $\gamma$ is sufficiently small i.e. $\gamma < (\eps c/(k  N))^{K}$ for some sufficiently large absolute constant $K$  ) 
    
    \item Let the function $r(x)$ be defined as $r(x) = f(x) - \mcl{M}(x)$ on the interval $[-1,1]$ and $r(x) = 0$ outside the interval.

\end{itemize}

Recall that the way we will reconstruct the function is by attempting to localize around each of the points in the given set $T$ and reconstructing the localized function using Corollary \ref{coro:fourier-simplecase}.  We then find $k \poly(l'/c)$ of these localized reconstructions that we can combine to approximate the entire function.

In the first claim, we bound the error of our reconstruction using Corollary \ref{coro:fourier-simplecase} for a given localization.  Recall the two types of kernels, the truncated polynomial kernel and the truncated Gaussian kernel, defined in Section \ref{sec:local-fourier-properties}.  Consider the kernels $\mcl{K}_{\mu,l',c/l'}$ and $\mcl{T}_{\mu, l',2\pi l'/c } $ (which, recall, are approximately the same). In the next lemma, we will bound the distance between   $\mcl{K}_{\mu,l',c/l'} * f$ and $\mcl{T}_{\mu, l',2\pi l'/c }  * \mcl{M}$ in terms of $r(x)$.  The reason we care about these two functions is that the first is something that we can compute since we are given explicit access to $f$.  On the other hand, the second is Fourier sparse and has bounded Fourier support so it can be plugged into Corollary \ref{coro:fourier-simplecase} (as the unknown function $\mcl{M}$).

\begin{claim}\label{claim:contraction-bound}
For any real number $\mu$, 
\[
\int_{-1 + c}^{1 - c} | \mcl{K}_{\mu,l',c/l'} * f - \mcl{T}_{\mu, l',2\pi l'/c }   * \mcl{M}|^2 \leq \poly(l'/c)\gamma^2 + 4\int_{-\infty}^{\infty} |M_{0, (2\pi l'/c)^2}(x - \mu) \wh{r}(x)|^2 dx \,.
\]
\end{claim}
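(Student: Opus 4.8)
The plan is to first exploit the compact support of the truncated polynomial kernel to reduce everything to the decomposition $f=\mcl{M}+r$, then handle the ``$\mcl{M}$-part'' with Claim~\ref{claim:fourier-localization} and the ``$r$-part'' with Plancherel. Recall that $\mcl{K}_{\mu,l',c/l'}(x)=(1/(c/l'))\,\mcl{P}_{l'}^{\trunc(l')}\!\big(x/(c/l')\big)e^{2\pi i\mu x}$ is supported on $[-c,c]$, since $\mcl{P}_{l'}^{\trunc(l')}$ vanishes outside $[-l',l']$. Hence for $x\in[-1+c,1-c]$ the value $(\mcl{K}_{\mu,l',c/l'}*f)(x)$ only depends on $f$ on $[-1,1]$, where $f=\mcl{M}+r$. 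Consequently, on $[-1+c,1-c]$ we may write
\[
\mcl{K}_{\mu,l',c/l'}*f-\mcl{T}_{\mu,l',2\pi l'/c}*\mcl{M}
=\underbrace{\big(\mcl{K}_{\mu,l',c/l'}*\mcl{M}-\mcl{T}_{\mu,l',2\pi l'/c}*\mcl{M}\big)}_{A}+\underbrace{\mcl{K}_{\mu,l',c/l'}*r}_{B},
\]
and bound $\int_{-1+c}^{1-c}|A+B|^2\le 2\int_{-1+c}^{1-c}|A|^2+2\int_{-1+c}^{1-c}|B|^2$ (note $\mcl{T}_{\mu,l',2\pi l'/c}$ is convolved against $\mcl{M}$, not $f$, so it never ``sees'' $r$).

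For the term $A$, I would apply Claim~\ref{claim:fourier-localization} with $l\leftarrow l'$, $c\leftarrow c/l'$ (so $2\pi/c\leftarrow 2\pi l'/c$), and $g\leftarrow\mcl{M}$; this is legitimate because $\mcl{M}$ is $(k,1)$-simple, so $\norm{\wh{\mcl{M}}}_1\le 1$, and $l'=\lceil\log 1/\gamma\rceil$ meets the hypothesis with accuracy parameter $\gamma$. This gives $\norm{A}_\infty\le O(\gamma l')$, hence $\int_{-1+c}^{1-c}|A|^2\le O(\gamma^2 l'^2)\le\poly(l'/c)\gamma^2$.

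For the term $B$, the key step is to introduce the genuine Gaussian-times-exponential $G_\mu(x):=N(0,(c/l')^2)(x)\,e^{2\pi i\mu x}$ as an intermediary. By Claim~\ref{claim:easyL2bound} (applied with $l\leftarrow l'$, $c\leftarrow c/l'$, $\eps\leftarrow\gamma$) together with the fact that multiplication by the unimodular factor $e^{2\pi i\mu x}$ preserves $L^1$ norm, $\norm{\mcl{K}_{\mu,l',c/l'}-G_\mu}_1\le O(\gamma l')$; since $\norm{r}_2=\big(\int_{-1}^1|f-\mcl{M}|^2\big)^{1/2}\le\eps<0.1$, Young's inequality gives $\norm{(\mcl{K}_{\mu,l',c/l'}-G_\mu)*r}_2\le O(\gamma l')$. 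For the main piece $G_\mu*r$, since $G_\mu\in L^1$ and $r\in L^2$, Plancherel yields $\wh{G_\mu*r}=\wh{G_\mu}\,\wh{r}$, and $\wh{G_\mu}(\xi)=\wh{N(0,(c/l')^2)}(\xi-\mu)=M_{0,(2\pi l'/c)^2}(\xi-\mu)$ (the exponential shifts the transform by $\mu$, and $\wh{N(0,c^2)}=M_{0,(2\pi/c)^2}$ exactly as recorded in the proof of Claim~\ref{claim:fourier-localization}). Hence $\int_{\R}|G_\mu*r|^2=\int_{\R}|M_{0,(2\pi l'/c)^2}(\xi-\mu)\wh{r}(\xi)|^2\,d\xi$, so
\[
\int_{-1+c}^{1-c}|B|^2\le\int_{\R}|B|^2\le 2\int_{\R}|G_\mu*r|^2+2\norm{(\mcl{K}_{\mu,l',c/l'}-G_\mu)*r}_2^2\le 2\int_{\R}|M_{0,(2\pi l'/c)^2}(\xi-\mu)\wh{r}(\xi)|^2\,d\xi+O(\gamma^2 l'^2).
\]
Adding the bounds for $A$ and $B$ and absorbing all $O(\gamma^2 l'^2)$ terms into $\poly(l'/c)\gamma^2$ (using $c<1$) gives exactly the claimed inequality. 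There is no deep obstacle here; the only points requiring care are the support bookkeeping that justifies writing $f=\mcl{M}+r$ inside the convolution, and normalizing the Fourier transform so that the intermediary $G_\mu$ reproduces precisely the multiplier $M_{0,(2\pi l'/c)^2}(\cdot-\mu)$ appearing on the right-hand side.
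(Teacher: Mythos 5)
Your proposal is correct and essentially mirrors the paper's proof: the same support argument to split $f=\mcl{M}+r$, the same triangle inequality, Claim~\ref{claim:fourier-localization} for the $\mcl{M}$-part, and Claim~\ref{claim:easyL2bound} plus Plancherel for the $r$-part. The only (immaterial) difference is that you replace $\mcl{K}_{\mu,l',c/l'}$ by the honest Gaussian $G_\mu$ in physical space via Young's inequality and then apply Plancherel, whereas the paper applies Plancherel to $\mcl{K}_{\mu,l',c/l'}*r$ first and then swaps $\wh{\mcl{K}_{0,l',c/l'}}$ for $\wh{G}$ on the Fourier side using $\norm{\wh{r}}_\infty$ and the $L^2$ bound from Claim~\ref{claim:easyL2bound}.
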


\begin{proof}
Note that since $\mcl{K}_{\mu,l',c/ l'}$ is supported on $[-c, c]$, 
\[
\int_{-1 + c}^{1 - c} \left \lvert \left(\mcl{K}_{\mu,l',c/l'} * f \right)(x) - \left(\mcl{K}_{\mu,l',c/l'} * \mcl{M} \right)(x) \right \rvert^2 dx \leq \int_{-1}^1   \lvert  \left(\mcl{K}_{\mu,l',c/l'} * r \right)(x)  \rvert^2  dx
\]
Now the Fourier transform of $\mcl{K}_{\mu,l',c/l'} * r$ is $\wh{\mcl{K}_{0,l',c/l'}}(x - \mu) \wh{r}(x )$ so
\[
\int_{-1}^1  \lvert  \left(\mcl{K}_{\mu,l',c/l'} * r \right)(x)  \rvert^2  dx \leq \int_{-\infty}^{\infty} |\wh{\mcl{K}_{0,l',c/l'}}(x - \mu) \wh{r}(x )|^2 dx
\]
We deduce that
\begin{align*}
\int_{-1 + c}^{1 - c} |\mcl{K}_{\mu,l',c/l'} * f - \mcl{T}_{\mu, l',2\pi l'/c }   * \mcl{M}|^2 \leq 2 \int_{-1 + c}^{1 - c}  \left\lvert \mcl{K}_{\mu,l',c/l'} * \mcl{M} - \mcl{T}_{\mu, l',2\pi l'/c }   * \mcl{M}\right\rvert^2 \\ + 2\int_{-\infty}^{\infty} |\wh{\mcl{K}_{0,l',c/l'}}(x - \mu) \wh{r}(x )|^2 dx \\ \leq \poly(l'/c) \gamma^2 + 2\int_{-\infty}^{\infty} |\wh{\mcl{K}_{0,l',c/l'}}(x - \mu) \wh{r}(x )|^2 dx 
\end{align*}
where the last inequality follows from Claim \ref{claim:fourier-localization}.
\\\\
Note since $r$ is supported on $[-1,1]$ and $\norm{r}_2^2 \leq \eps^2 \leq 0.1$, we must have $\norm{r}_1 \leq 1$ which then implies $\norm{\wh{r}}_{\infty} \leq 1$.  Together with Claim \ref{claim:easyL2bound}, if we let $G = N(0, (c/l')^2)$ then we have
\begin{align*}
\int_{-\infty}^{\infty} |\wh{\mcl{K}_{0,l',c/l'}}(x - \mu) \wh{r}(x)|^2 dx \leq 2\int_{-\infty}^{\infty} |\wh{G}(x - \mu) \wh{r}(x )|^2 dx + 2 \norm{\wh{r}}_{\infty}^2 \norm{\mcl{K}_{0,l',c/l'} - G}_2^2 \\ \leq \poly(l'/c)\gamma^2 + 2\int_{-\infty}^{\infty} |M_{0, (2\pi l'/c)^2}(x - \mu) \wh{r}(x )|^2 dx \,. 
\end{align*}
and combining with the previous inequality, we get the desired result.
\end{proof}

We are now ready to complete the proof of Lemma \ref{lem:fourier-manycluster}.
\begin{proof}[Proof of Lemma \ref{lem:fourier-manycluster}]
First, let $T'$ be the set of integer multiples of $2\pi/c$ that are within distance $(10l')^2/c$ of the set $T$.  For all $\mu \in T'$, do the following.  We compute the function $f^{(\mu)} = \mcl{K}_{\mu, l', c/l'} * f$.  Next we apply Corollary \ref{coro:fourier-simplecase} (with appropriate rescaling) to compute a function $h^{(\mu)}$ in $\poly(l'/c)$ time that has Fourier support in $[ \mu - 2\pi l'^2/c, \mu + 2\pi l'^2/c]$, is $(\poly(l'/c),  \poly(l'/c))$-simple and such that
\begin{equation}\label{eq:piecewise-estimates}
\int_{-1 + c}^{1 - c} |f^{(\mu)} - h^{(\mu)}|^2 \leq 20\left( \gamma^2 +   \int_{-1 + c}^{1 - c} |f^{(\mu)} - \mcl{T}_{\mu, l' , 2\pi l'/c} * \mcl{M}|^2 \right) \,.
\end{equation}
To see why we can do this, note that 
\[
\wh{ \mcl{T}_{\mu, l' , 2\pi l'/c} * \mcl{M}} = M_{0,1}^{\trunc(l')} ( c(x - \mu) /(2 \pi l')) \wh{\mcl{M}}
\]
is supported on $[ \mu - 2\pi l'^2/c, \mu + 2\pi l'^2/c]$.  Also it is clear that $\norm{\wh{ \mcl{T}_{\mu, l' , 2\pi l'/c} * \mcl{M}}}_1 \leq \norm{\wh{\mcl{M}}}_1 \leq 1$.
\\\\
Now we choose a set $U \subset T'$ with $|U| \leq k(10l')^2$ such that the following quantity is minimized:
\[
E_{U} = \sum_{\mu \in U}  \int_{-1 + c}^{1 - c} |f^{(\mu)} - h^{(\mu)}|^2 + \sum_{\mu \in T' \backslash U} \int_{-1 + c}^{1 - c} |f^{(\mu)} |^2 \,.
\]
Note that this can be done using a simple greedy procedure.  First, we obtain a bound on the value $E_U$ that we compute.  Let $U_0$ be the set of all integer multiples of $2\pi/c$ that are within distance $10l'^2/c$ of $S_0$ (recall that $S_0$ is the Fourier support of $\mcl{M}$ which consist of $k$ points).  By assumption, we know that $U_0 \subset T'$ and it is clear that $|U_0| \leq k(10l')^2$.  Note that by definition, for any $\mu \notin U_0$, the function $\mcl{T}_{\mu, l' , 2\pi l'/c} * \mcl{M}$ is identically $0$.  Now using (\ref{eq:piecewise-estimates}), then Claim \ref{claim:contraction-bound},
\begin{align*}
E_{U_0} &\leq 20\sum_{\mu \in U_0}\left( \gamma^2 +   \int_{-1 + c}^{1 - c} |f^{(\mu)} -  \mcl{T}_{\mu, l' , 2\pi l'/c} * \mcl{M}|^2 \right) + \sum_{\mu \in T'\backslash U_0} \int_{-1 + c}^{1 - c} |f^{(\mu)} |^2 \\
& \leq 80   \sum_{\mu \in T'} \left( \poly(l'/c)\gamma^2 + \int_{-\infty}^{\infty} |M_{0, (2\pi l'/c)^2}(x - \mu) \wh{r}(x)|^2 dx\right)
\\ & \leq |T'| \poly(l'/c) \gamma^2 +  80\int_{-\infty}^{\infty} | \wh{r}(x )|^2  \sum_{\mu \in (2\pi/c)\Z} M_{0, (2\pi l'/c)^2}(x - \mu)^2dx
\\ & \leq \gamma + \poly(l') \norm{r}_2^2 \,.
\end{align*}
Note that we used the fact that $\gamma$ is sufficiently small and the tail decay properties of the Gaussian multipliers in the last step.  Thus, we can ensure that the error that we compute satisfies $E_U \leq  \gamma + \poly(l') \norm{r}_2^2$.  Now we output the function
\[
\wt{\mcl{M}} = \sum_{\mu \in U} \frac{1}{l'\sqrt{2 \pi}}h^{(\mu)} \,.
\]
It remains to bound the error between $\wt{\mcl{M}}$ and $f$.  First we apply Claim \ref{claim:decoupling-pre} to decouple over all $\mu \in T'$.  Note that $h^{(\mu)}$ and $\mcl{T}_{\mu, l' , 2\pi l'/c} * \mcl{M}$ both have  Fourier support contained in the interval $[ \mu - 2\pi l'^2/c, \mu +2\pi l'^2/c]$.  For distinct $\mu$ that are integer multiples of $2\pi/c$, there are at most $O(l'^2)$ intervals that contain any point.  Also, note that for all $\mu$, $\norm{h^{(\mu)}}_1 \leq \poly(l'/c)$ and $\norm{\wh{ \mcl{T}_{\mu, l' , 2\pi l'/c} * \mcl{M}}}_1 \leq 1$.  Thus, by Claim \ref{claim:decoupling-pre} (with appropriate rescaling of the functions and the interval),
\begin{align*}
&\int_{-1 + 2c}^{1 - 2c} \left \lvert \wt{\mcl{M}} - \sum_{\mu \in T'} \frac{1}{l'\sqrt{2 \pi}} \mcl{T}_{\mu, l' , 2\pi l'/c} * \mcl{M} \right \rvert \\ &\leq \poly(l'/c) \gamma  + \poly(l') \left( \sum_{\mu \in U} \int_{-1 + c}^{1 - c} |h^{(\mu)} - \mcl{T}_{\mu, l' , 2\pi l'/c} * \mcl{M} |^2  + \sum_{\mu \in T'\backslash U} \int_{-1 + c}^{1 - c} |\mcl{T}_{\mu, l' , 2\pi l'/c} * \mcl{M} |^2  \right)
\\ &\leq \poly(l'/c)\gamma + \poly(l') \left(   \sum_{\mu \in U} \int_{-1 + c}^{1 - c} |h^{(\mu)} - f^{(\mu)}|^2  + \sum_{\mu \in T'\backslash U} \int_{-1 + c}^{1 - c} |f^{(\mu)}|^2  \right)
\\ &\quad + \poly(l') \sum_{\mu \in T'}\int_{-1 + c}^{1 - c} |f^{(\mu)} - \mcl{T}_{\mu, l' , 2\pi l'/c} * \mcl{M} |^2 
\\ &\leq \poly(l'/c) \gamma + \poly(l')\left(E_U +  \sum_{\mu \in T'} \left( \poly(l'/c)\gamma^2 + \int_{-\infty}^{\infty} |M_{0, (2\pi l'/c)^2}(x - \mu) \wh{r}(x )|^2 dx\right)\right)
\\ &\leq \poly(l'/c) \gamma + \poly(l')( E_U + \norm{r}_2^2)
\\& \leq \poly(l') \eps^2
\end{align*}
where we used that $\gamma = (\eps c/(kN))^{O(1)}$ is sufficiently small and Claim \ref{claim:contraction-bound} and the last two inequalities follow from the same argument as in the bound for $E_{U_0}$.   Next, by Claim \ref{claim:sum-approximation} (and the definition of $T'$), we have that
\[
\norm{\mcl{M} - \sum_{\mu \in T'} \frac{1}{l'\sqrt{2 \pi}} \mcl{T}_{\mu, l' , 2\pi l'/c} * \mcl{M}  }_{\infty} \leq O(\eps)
\]
so overall, we conclude
\[
\int_{-1 + 2c}^{1 - 2c} |\wt{\mcl{M}} - \mcl{M}|^2 \leq \poly(l') \eps^2
\]
from which we immediately deduce
\[
\int_{-1 + 2c}^{1 - 2c} |\wt{\mcl{M}} - f|^2 \leq \poly(l') \eps^2 \,.
\]
It is also clear that $\wt{\mcl{M}}$ is $(k \poly(l'/c), k \poly(l'/c))$-simple (since it is a sum of at most $k(10l')^2$ functions that are $( \poly(l'/c), \poly(l'/c))$-simple).  Now we are done because $l' = O(l)$. 
\end{proof}

Similar to obtaining Corollary \ref{coro:fourier-simplecase} from Lemma \ref{lem:Fourier-simplecase}, we can extend Lemma \ref{lem:fourier-manycluster} to work even when we do not know the target accuracy but only a lower bound on it.
\begin{corollary}\label{coro:fourier-manycluster}
Assume we are given $N, k, \eps , c$ with $0 < \eps < 0.1$.  Let $l = \lceil \log kN / (\eps c)  \rceil$ be some parameter.  Let $\mcl{M}$ be a function that is $(k,1)$-simple.  Also, assume that we are given a set $T \subset \R$  of size $N$ such that all of the support of $\wh{\mcl{M}}$ is within distance $1$ of $N$.  Further, assume we are given access to a function $f$.  There is an algorithm that runs in $\poly(N,k,l , 1/c)$ time and outputs a function $\wt{\mcl{M}}$ that is $(k\poly(l/c), k\poly(l/c) )$-simple and such that
\[
\int_{-1 + c}^{1 - c} |\wt{\mcl{M}}(x) - f(x)|^2 dx \leq \eps^2 + \poly(l) \int_{-1}^1 |f(x) - \mcl{M}(x)|^2 dx\,. 
\]
\end{corollary}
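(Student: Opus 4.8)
The plan is to reduce to Lemma \ref{lem:fourier-manycluster} exactly as Corollary \ref{coro:fourier-simplecase} was reduced to Lemma \ref{lem:Fourier-simplecase}: run the algorithm on a geometrically decreasing sequence of guessed accuracies, certify each guess by an explicit computation, and keep the last one that is certified. First I would note that since $\norm{\wh{\mcl{M}}}_1 \le 1$ we have $\norm{\mcl{M}}_\infty \le 1$, so $\int_{-1}^1 |\mcl{M}|^2 \le 2$; since we have explicit access to $f$ we can compute a number $\gamma_0$ with $\gamma_0^2 \ge 2\int_{-1}^1 |f|^2 + 4$, which then upper bounds $\int_{-1}^1 |f - \mcl{M}|^2$. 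Starting from $\gamma = \gamma_0$, I would run the algorithm of Lemma \ref{lem:fourier-manycluster} with target accuracy $\gamma$ (keeping $N, k, T, c$ fixed) to get $\wt{\mcl{M}}_\gamma$, and then test the explicitly computable inequality $\int_{-1+c}^{1-c} |\wt{\mcl{M}}_\gamma - f|^2 \le C\poly(l)\,\gamma^2$, where $C\poly(l)$ is precisely the bound appearing in the conclusion of Lemma \ref{lem:fourier-manycluster}. If the test passes and $0.99\gamma \ge \eps$, replace $\gamma$ by $0.99\gamma$ and repeat; otherwise halt and output $\wt{\mcl{M}}_{\gamma^*}$, where $\gamma^*$ is the last $\gamma$ at which the test passed (so $\gamma^* \ge \eps$).

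For the analysis, set $\eps_*^2 = \int_{-1}^1 |f - \mcl{M}|^2$. For every $\gamma \ge \eps_*$ the hypothesis of Lemma \ref{lem:fourier-manycluster} holds with its ``$\eps$'' equal to $\gamma$, so $\wt{\mcl{M}}_\gamma$ is $(k\poly(l/c), k\poly(l/c))$-simple and passes the test; hence the search never terminates on a failed test while $\gamma \ge \eps_*$, which forces $\eps \le \gamma^* \le \max(\eps, \eps_*)/0.99$. Consequently the output satisfies
\[
\int_{-1+c}^{1-c} |\wt{\mcl{M}}_{\gamma^*} - f|^2 \le C\poly(l)\,(\gamma^*)^2 \le \poly(l)\bigl(\eps^2 + \eps_*^2\bigr) \,,
\]
and it is $(k\poly(l/c), k\poly(l/c))$-simple by Lemma \ref{lem:fourier-manycluster}. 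To obtain the sharper floor $\eps^2$ in place of $\poly(l)\,\eps^2$, I would rerun the whole procedure with the internal accuracy floor $\eps$ replaced by $\eps/L$ for a suitable $L = \poly(l)$; since $\lceil \log(kN L/(\eps c))\rceil = l + O(\log l) = O(l)$, every $\poly(l)$ and $\poly(l/c)$ factor changes only by a constant, so the $\eps^2/L^2$ contribution is absorbed into $\eps^2$ while the other term stays $\poly(l)\,\eps_*^2$. The running time is that of Lemma \ref{lem:fourier-manycluster} multiplied by the number of iterations $O(\log(\gamma_0/\eps)) = \poly(N, k, l, 1/c)$, hence still $\poly(N, k, l, 1/c)$.

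The only points that need care — and the closest thing here to a genuine obstacle — are (i) choosing the test threshold to be exactly the constant from the conclusion of Lemma \ref{lem:fourier-manycluster}, so that a failed test genuinely certifies $\gamma < \eps_*$ and the search cannot stop prematurely; and (ii) checking that replacing the internal accuracy $\eps$ by $\eps/\poly(l)$ leaves the parameter $l$ that governs all of the polynomial overheads unchanged up to a constant factor, which is what permits stating the error floor as $\eps^2$ with no extra logarithmic loss. Everything else is the same bookkeeping as in the proof of Corollary \ref{coro:fourier-simplecase}.
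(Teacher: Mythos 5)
Your proposal matches the paper's proof of Corollary~\ref{coro:fourier-manycluster}, which is exactly the geometric-guessing-and-certification scheme used for Corollary~\ref{coro:fourier-simplecase}: run Lemma~\ref{lem:fourier-manycluster} at decreasing target accuracies $\gamma$, explicitly test $\int_{-1+c}^{1-c}|\wt{\mcl{M}}_\gamma - f|^2 \le \poly(l)\,\gamma^2$, and output the result for the smallest certified $\gamma$. Your additional remark about replacing the internal floor $\eps$ by $\eps/\poly(l)$ so that the output bound reads $\eps^2 + \poly(l)\int|f-\mcl{M}|^2$ rather than $\poly(l)(\eps^2 + \int|f-\mcl{M}|^2)$ is a correct observation about a detail the paper's one-line proof elides; the fix is sound because $l$ only changes by an additive $O(\log l)$, so all $\poly(l)$ overheads are unaffected up to constants.
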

\begin{proof}
This will be the exact same argument as the proof of Corollary \ref{coro:fourier-simplecase}.  For a target accuracy $\eps' > \eps$ we run the algorithm in Lemma \ref{lem:fourier-manycluster} to get a function $\wt{\mcl{M}}_{\eps'}(x)$.  We then check whether 
\[
\int_{-1 + c}^{1 - c} |\wt{\mcl{M}}_{\eps'}(x) - f(x)|^2 dx \leq \poly(l) \eps'^2 \,.
\]
If the check passes, we take $\eps' \leftarrow 0.99 \eps'$ and repeat the above until we find the smallest $\eps'$ (up to a constant factor) for which the check passes.  The guarantee of Lemma \ref{lem:fourier-manycluster} implies that for this $\eps'$, we can just output $\wt{\mcl{M}}_{\eps'}(x)$  and it is guaranteed to satisfy the desired inequality.
\end{proof}

\subsection{Proof of Main Theorem}

Using Theorem \ref{thm:sparse-fourier-old} and Corollary \ref{coro:fourier-manycluster}, we can prove our main theorem, Theorem \ref{thm:fourier-main}.  The main thing that we need to prove is that the frequencies in the function $f'$ computed by Theorem \ref{thm:sparse-fourier-old} cover (within distance $\poly(k, \log 1/\eps)$) all of the frequencies in $\mcl{M}$.  This will then let us use the frequencies in $f'$ to construct a set $T$ of size $\poly(k, \log 1/\eps)$ that covers all frequencies in $\mcl{M}$ to within distance $1$ that we can then plug into Corollary \ref{coro:fourier-manycluster}.  We will need the following technical lemma from \cite{chen2016fourier}.
\begin{lemma}\label{lem:extension-bound}[Lemma 5.1 in \cite{chen2016fourier}]
For any $k$-Fourier sparse signal $g: \R \rightarrow \C$,
\[
\max_{x \in [-1,1]} |g(x)|^2 \leq O(k^4 \log^3 k) \int_{-1}^1 |g(x)|^2 dx \,.
\]
\end{lemma}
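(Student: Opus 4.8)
The plan is to reduce this to a pointwise (Christoffel‑function) bound and then control that pointwise bound through a sampling identity for sparse exponential sums; the crux is coping with the absence of any frequency gap.

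\textbf{Reduction to a pointwise bound.} It suffices to prove $|g(t_0)|^2 \leq O(k^4\log^3 k)\int_{-1}^1 |g|^2$ for each fixed $t_0\in[-1,1]$, and take the supremum over $t_0$. One clean way to package this: $V := \mathrm{span}\{e^{2\pi i\theta_j x}\}_{j\le k}$ is a subspace of $L^2[-1,1]$ of dimension $\le k$, and if $\Lambda(t)$ denotes its Christoffel function (the diagonal of the reproducing kernel, equivalently $\Lambda(t)=\max_{\phi\in V}|\phi(t)|^2/\|\phi\|_{L^2[-1,1]}^2$) then $|g(t_0)|^2\le \Lambda(t_0)\int_{-1}^1|g|^2$ while $\int_{-1}^1\Lambda = \dim V\le k$. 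So the whole task is to show that $\Lambda$, which has average $\le k/2$ on $[-1,1]$, is nowhere larger than $O(k^4\log^3 k)$ --- i.e. that it is not too spiky. Two normalizations guide the estimate: (i) the substitution $x\mapsto x/T$ makes the inequality invariant under rescaling the interval and the frequencies together, so the argument must be scale‑free --- no bound may depend on the magnitudes of the $\theta_j$; (ii) since $|g(x)|^2=g(x)\overline{g(x)}$ and $\overline{g(x)}=\sum_j\overline{a_j}e^{-2\pi i\theta_j x}$, the function $|g|^2$ is a nonnegative, real $O(k^2)$‑Fourier‑sparse signal, which is why one should expect a $k^4=(k^2)^2$‑type exponent.

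\textbf{Bounding $\Lambda(t_0)$ via a sampling identity.} Fix $t_0$, say $t_0\ge 0$ (the other case is symmetric). Choose a step $h>0$ and a sample count $M=\poly(k)$ with $Mh\le 1$, so that $t_0-h,\dots,t_0-Mh$ all lie in $[-1,1]$. If $c\in\C^M$ satisfies the $k$ linear equations $\sum_{m=1}^M c_m e^{-2\pi i\theta_j mh}=1$ for $j\in[k]$, then for every $g=\sum_j a_j e^{2\pi i\theta_j x}$ one gets the exact identity $g(t_0)=\sum_{m=1}^M c_m\, g(t_0-mh)$, hence $|g(t_0)|^2\le\|c\|_2^2\sum_{m=1}^M|g(t_0-mh)|^2$. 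Because $M\gg k$, the minimum‑norm such $c$ has $\|c\|_2^2=\mathbf 1^{*}G^{-1}\mathbf 1$, where $G$ is the $k\times k$ Gram matrix with $G_{jj}=M$ and $G_{j\ell}=\sum_{m=1}^M e^{2\pi i(\theta_j-\theta_\ell)mh}$, so $|G_{j\ell}|\le 1/|\sin\pi(\theta_j-\theta_\ell)h|$ for $j\ne\ell$; if $h$ is chosen so all the products $(\theta_j-\theta_\ell)h$ stay away from the integers, then $G\succeq (M/2)I$ by Gershgorin and $\|c\|_2^2\le 2k/M$. Finally one would like to (a) average the displayed inequality over $h$ in a short range and change variables $u=mh$ so that the discrete average $\tfrac1M\sum_m|g(t_0-mh)|^2$ becomes a constant times $\int_{-1}^1|g|^2$, and (b) make the genericity requirement on $h$ actually satisfiable --- after which one reads off $\Lambda(t_0)\le\poly(k)$. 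In spirit this realizes a Nikolskii/Remez‑type inequality for sparse exponential sums.

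\textbf{The hard part.} Both (a) and (b) are subtle precisely because no frequency gap is assumed. For (b): if two frequencies differ by only $O(k)$ then $(\theta_j-\theta_\ell)h$ is forced to within $O(k/M)$ of an integer for \emph{every} admissible $h$, so $G$ is ill‑conditioned; and one cannot repair this by shrinking the sampling window, since the frequencies inside such a cluster still sweep through many oscillations over it. For (a): if some frequency difference is enormous, then $(\theta_j-\theta_\ell)h$ runs through many periods as $h$ varies, so no single $h$‑window is simultaneously good for all pairs. I expect the resolution to be a multi‑scale/clustering argument: at scale $h$ the frequency set breaks into clusters of diameter $O(1/h)$ that cannot be resolved there; one collapses each cluster to a single effective frequency at a polynomial cost, applies the sampling construction across the $O(\log)$ coarser scales needed to separate the clusters, and recurses on the within‑cluster fine structure. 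The bookkeeping over those $O(\log)$ scales, together with the choices $M=\poly(k)$ and the length of the $h$‑range, is what produces the final $O(k^4\log^3 k)$ constant; taking the supremum over $t_0\in[-1,1]$ then finishes the proof.
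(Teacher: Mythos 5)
This lemma is not proved in the present paper at all: it is imported verbatim as Lemma~5.1 of Chen--Kane--Price--Song~\cite{chen2016fourier}, so there is no in-paper proof for your argument to be compared against. Judged on its own terms, your write-up is a sketch with a genuine gap rather than a proof.

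The reductions you make are sound but not load-bearing. The Christoffel-function observation ($\int_{-1}^{1}\Lambda = \dim V\le k$ and $|g(t_0)|^2\le\Lambda(t_0)\|g\|_{L^2}^2$) is a clean reformulation, but it only restates the lemma as ``$\Lambda$ is nowhere more than $\poly(k)$ times its mean''; it does not produce that bound. The sampling-identity scheme ($g(t_0)=\sum_m c_m g(t_0-mh)$ with $c$ the min-norm solution of $\Phi c=\mathbf 1$, then $\|c\|_2^2=\mathbf 1^{*}G^{-1}\mathbf 1$) is a reasonable mechanism, and your Gershgorin estimate for $G$ is correctly set up. But you then correctly identify that this estimate collapses exactly in the regime the lemma is about: if two frequencies satisfy $|\theta_j-\theta_\ell|\lesssim k$, then for every admissible $h\lesssim 1/M$ the quantity $(\theta_j-\theta_\ell)h$ is within $O(k/M)$ of an integer, so $|G_{j\ell}|$ can be of order $M/k$ and $G$ is no longer diagonally dominant. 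Averaging $h$ over a window cannot repair this, since the obstruction persists for all $h$ in the window. Your proposed fix---``collapse each cluster to a single effective frequency at a polynomial cost'' and recurse over $O(\log)$ scales---is stated as an expectation, not carried out, and it is exactly the hard quantitative content of the lemma: a cluster of $s$ near-coincident frequencies makes the relevant Gram/Vandermonde determinants degenerate like $(\text{gap})^{\Theta(s^2)}$, and one must show that the extrapolation coefficients can nevertheless be kept $\poly(k)$ rather than gap-dependent or $2^{\Theta(k)}$. Nothing in the sketch bounds that cost, and without it the argument does not go through. As written, this is a plausible plan of attack that names the obstacle but does not overcome it, so the proposal should be regarded as incomplete.
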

The above lemma roughly says that the mass of a $k$-Fourier sparse function cannot be too concentrated.  We now finish the proof of Theorem \ref{thm:fourier-main}.

\begin{proof}[Proof of Theorem \ref{thm:fourier-main}]
We first apply Theorem \ref{thm:sparse-fourier-old} to compute a function $f'$. such that
\begin{enumerate}
    \item $f'$ is $(\poly(k, \log 1/\eps), \exp( \poly(k, \log 1/\eps)) )$- simple
    \item 
    \[
    \int_{-1}^1 |f' - f|^2 \leq O\left(\eps^2 + \int_{-1}^1 |f - \mcl{M}|^2 \right)  \,.
    \]
\end{enumerate}

Let $L = (k \log 1/\eps)^{K}$ for some sufficiently large absolute constant $K$.  Let $\gamma = e^{-L}$.  Now apply Claim \ref{claim:sum-approximation} on the function $f'$ with parameters $a \leftarrow L, l \leftarrow L, \eps \leftarrow \gamma $. Let $S \subset L \Z$ be the set of all integer multiples of $L$ that are within distance $L^3$ of the Fourier support of $f'$.  We have
\begin{equation}\label{eq:infbound1}
\norm{f' - \frac{1}{L\sqrt{2\pi}} \sum_{\mu \in S} \mcl{T}_{\mu, L , L^2} * f' }_{\infty} \leq 2 \norm{\wh{f'}}_1 \gamma 
\end{equation}

Next, we apply Claim \ref{claim:fourier-localization} on the function $ (\mcl{M} - f')$.  We deduce that for any $\mu$,
\begin{equation}\label{eq:infbound2}
\norm{\mcl{K}_{\mu, L, 2\pi/L^2} * (\mcl{M} - f')-  \mcl{T}_{\mu, L , L^2} *(\mcl{M} - f')}_{\infty} \leq O\left(\gamma L \left(\norm{\wh{f'}}_1  +\norm{ \wh{\mcl{M}}}_1 \right) \right)  \,.
\end{equation}
Finally, by Claim \ref{claim:convolution-bound} applied to the function $ (\mcl{M} - f')$ (with parameters $\eps \leftarrow \gamma ,l \leftarrow L, c \leftarrow (2\pi)/L^2$), we have
\begin{align*}
\int_{-1}^1 \left \lvert  \frac{1}{L\sqrt{2\pi}} \sum_{\mu \in S}\mcl{K}_{\mu, L, 2\pi/L^2} * (\mcl{M} - f')\right \rvert^2 \leq (1 + O\left(\gamma |S|\right) )^2\int_{-1 - 2\pi/L}^{1 + 2\pi /L} | \mcl{M} - f'|^2   \leq  2 \int_{-1}^1 | \mcl{M} - f'|^2 \,.
\end{align*}
Note that in the last step we use Lemma \ref{lem:extension-bound} and the fact that $ \mcl{M} - f'$ is $\poly(k, \log 1/\eps)$-Fourier sparse so choosing $L = (k \log 1/\eps)^{O(1)}$ sufficiently large ensures that
\[
\int_{-1 - 2\pi/L}^{1 +2\pi /L} | \mcl{M} - f'|^2  \leq 1.1 \int_{-1}^1 | \mcl{M} - f'|^2 \,.
\]
Define the functions
\begin{align*}
&A(x) = f' - \frac{1}{L\sqrt{2\pi}} \sum_{\mu \in S}\mcl{T}_{\mu,L,L^2} * \mcl{M} \\
&B(x) = \frac{1}{L\sqrt{2\pi}} \sum_{\mu \in S}\mcl{K}_{\mu, L, 2\pi/L^2} * (  \mcl{M} - f')
\end{align*}
Note $\norm{A}_{\infty}, \norm{B}_{\infty} \leq |S|\left(\norm{\wh{f'}}_1 + \norm{\wh{\mcl{M}}}_1 \right) $.  Combining (\ref{eq:infbound1}, \ref{eq:infbound2}) we have
\[
\int_{-1}^1 |A(x)|^2 - \int_{-1}^1 |B(x)|^2  \leq \gamma \poly\left(L, |S|, \norm{\wh{f'}}_1 + \norm{\wh{\mcl{M}}}_1 \right)
\]
However, we proved that $\int_{-1}^1 |B(x)|^2 \leq 2 \int_{-1}^1 |\mcl{M} - f'|^2  $ so choosing $L = (k \log 1/\eps)^{O(1)}$ sufficiently large and since $\gamma = e^{-L}$, we conclude
\begin{equation}\label{eq:true-errorbound}
\int_{-1}^1 |A(x)|^2 \leq \eps^2 +  2\int_{-1}^1 |\mcl{M} - f'|^2 \leq O\left(\eps^2  +  \int_{-1}^1 |\mcl{M} - f|^2 \right) 
\end{equation}
where we are using the guarantee from Theorem \ref{thm:sparse-fourier-old}.  Now note that the function 
\[
\mcl{M}' = \frac{1}{L\sqrt{2\pi}} \sum_{\mu \in S}\mcl{T}_{\mu,L,L^2} *\mcl{M}
\]
is $k$-Fourier sparse and has $\norm{\wh{\mcl{M}'}}_1 \leq 2\norm{\wh{\mcl{M}}}_1$ by Lemma \ref{lem:approx-constant}.   Furthermore, all of its Fourier support is within distance $\poly(L)$ of the Fourier support of $f'$ (by the construction of the set $S$).  Thus, we can apply Corollary \ref{coro:fourier-manycluster} on $f'$ (where we treat the unknown function as $\mcl{M}'$) with $N = \poly(L) = \poly(k , \log 1/\eps)$ and recover a function $\wt{\mcl{M}}$ such that 
\begin{align*}
\int_{-1 + c}^{1 - c} |\wt{\mcl{M}} - f'|^2  \leq \eps^2 + \poly(\log k/(\eps c)) \int_{-1}^1 |f' - \mcl{M}'| = \eps^2 + \poly(\log k/(\eps c)) \int_{-1}^1 |A(x)|^2   \\ =  \poly(\log k/(\eps c)) \left(\eps^2  +  \int_{-1}^1 |\mcl{M} - f|^2 \right) 
\end{align*}
where the last step is from (\ref{eq:true-errorbound}).  Since $\int_{-1}^1 |f' - f|^2 \leq O\left(\eps^2 + \int_{-1}^1 |f - \mcl{M}|^2 \right)$, the above implies
\[
\int_{-1 + c}^{1 - c} |\wt{\mcl{M}} - f|^2  \leq \poly(\log k/(\eps c)) \left(\eps^2  +  \int_{-1}^1 |\mcl{M} - f|^2 \right) 
\]
and we are done.
\end{proof}

\subsection{Implementation of Computations}\label{sec:implementation}

In the proof of Theorem \ref{thm:fourier-main}, we use the result from \cite{chen2016fourier} to obtain an approximation $f'$ that is written as a sum of $\poly(k, \log 1/\eps)$ exponentials and has coefficients bounded by $\exp(\poly (k , \log 1/\eps))$.  We then perform explicit computations using this function in our algorithm to eventually compute a sparser approximation with smaller coefficients.  Here we briefly explain why these explicit computations can all be implemented efficiently.  Note that all of the functions that we perform computations on can be written as sums of polynomials multiplied by exponentials i.e.
\begin{equation}\label{eq:func-form}
P_1(x)e^{2\pi i \theta_1 x} + \dots + P_n(x)e^{2\pi i \theta_n x}
\end{equation}
where there are at most $\poly(k, \log 1/\eps)$ terms in the sum, all of the polynomials have degree at most $\poly(k, \log 1/\eps)$ and all of the coefficients are bounded by $\exp(\poly (k , \log 1/\eps))$.  To see this, note that convolving by a polynomial $P(x)$ truncated to an interval (recall the truncated polynomial kernel in Definition \ref{def:localization-type2} ) preserves a function of the form in (\ref{eq:func-form}) (only increasing the degrees of the polynomials by $\deg(P)$).  All other computations that we need such as computing the exact value, adding and multiplying and integrating over some interval can clearly be done explicitly in $\poly (k , \log 1/\eps)$ time and to $\exp(\poly (k , \log 1/\eps))^{-1}$ accuracy for functions of the form specified in (\ref{eq:func-form}).

\section{Basic Tools}\label{appendix:basic}
In this section, we have a few basic tools that are used repeatedly throughout the paper.
\subsection{Chebyshev Polynomials}
Here we will introduce several basic results about the  Chebyshev polynomials, which have algorithmic applications in a wide variety of settings \cite{rivlin2020chebyshev, guruswami2016robust}.

\begin{definition}[Chebyshev Polynomials]
The Chebyshev Polynomials are a family of polynomials defined as follows: $T_0(x) = 1, T_1(x) = x$ and for $n \geq 2$,
\[
T_n(x) = 2xT_{n-1}(x) - T_{n-2}(x) \,.
\]
\end{definition}

\begin{fact}\label{fact:chebyshev-basic}
The Chebyshev polynomials satisfy the following property:
\[
T_n(\cos \theta) = \cos n \theta \,.
\]
\end{fact}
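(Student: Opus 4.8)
The plan is a straightforward induction on $n$, using the three-term recurrence defining the Chebyshev polynomials together with a product-to-sum identity for cosines. The identity $T_n(\cos\theta)=\cos n\theta$ is an identity of polynomials in $\cos\theta$, so it suffices to verify it as an identity in $\theta$ for all real $\theta$; equivalently, since $\cos\theta$ ranges over all of $[-1,1]$, it suffices to check equality of the two polynomials at infinitely many points, which the induction will supply.

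First I would handle the base cases. For $n=0$ we have $T_0(x)=1$, so $T_0(\cos\theta)=1=\cos(0\cdot\theta)$. For $n=1$ we have $T_1(x)=x$, so $T_1(\cos\theta)=\cos\theta=\cos(1\cdot\theta)$. These match the claimed formula.

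Next comes the inductive step. Fix $n\ge 2$ and assume $T_{n-1}(\cos\theta)=\cos((n-1)\theta)$ and $T_{n-2}(\cos\theta)=\cos((n-2)\theta)$ for all $\theta$. Plugging $x=\cos\theta$ into the recurrence $T_n(x)=2xT_{n-1}(x)-T_{n-2}(x)$ gives
\[
T_n(\cos\theta)=2\cos\theta\,\cos((n-1)\theta)-\cos((n-2)\theta).
\]
Now I would invoke the product-to-sum formula $2\cos A\cos B=\cos(A+B)+\cos(A-B)$ with $A=(n-1)\theta$ and $B=\theta$, which yields $2\cos\theta\cos((n-1)\theta)=\cos(n\theta)+\cos((n-2)\theta)$. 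Substituting this into the previous display, the $\cos((n-2)\theta)$ terms cancel and we are left with $T_n(\cos\theta)=\cos(n\theta)$, completing the induction.

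There is no real obstacle here; the only thing to be slightly careful about is making sure the base cases cover enough indices for the recurrence (which references $T_{n-1}$ and $T_{n-2}$) to kick in, and phrasing the product-to-sum step correctly. One could alternatively give a one-line non-inductive argument via $\cos n\theta=\operatorname{Re}(e^{in\theta})=\operatorname{Re}((\cos\theta+i\sin\theta)^n)$, expanding by the binomial theorem and replacing $\sin^2\theta$ by $1-\cos^2\theta$ to exhibit $\cos n\theta$ explicitly as a polynomial in $\cos\theta$ satisfying the same recurrence; but the inductive proof above is the cleanest and is what I would write.
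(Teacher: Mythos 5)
Your proof is correct and complete: strong induction with base cases $n=0,1$ plus the product-to-sum identity $2\cos A\cos B=\cos(A+B)+\cos(A-B)$ is exactly the standard argument. The paper itself states this as a \textsf{Fact} without proof (it is classical), so there is no paper proof to compare against; your write-up is the canonical one.
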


As an immediate consequence of the above, we have a few additional properties.
\begin{fact}\label{fact:chebyshev-basic2}
The Chebyshev polynomials satisfy the following properties:
\begin{enumerate}
    \item $T_n(x)$ has degree $n$ and leading coefficient $2^{n-1}$
    \item For $x \in [-1,1]$, $T_n(x) \in [-1,1 ]$
    \item $T_n(x)$ has $n$ zeros all in the interval $[-1,1]$
    \item There are $n+1$ values of $x$ for which $T_n(x) = \pm 1$, all in the interval $[-1,1]$
\end{enumerate}
\end{fact}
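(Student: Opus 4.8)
The plan is to derive all four properties from the identity in Fact~\ref{fact:chebyshev-basic} together with the recurrence defining the $T_n$. First I would establish property (1) by induction on $n$ using $T_n(x) = 2xT_{n-1}(x) - T_{n-2}(x)$: the base cases $T_1(x) = x$ and (for the degree part of the statement) $T_0(x) = 1$ are immediate, and in the inductive step the term $2xT_{n-1}(x)$ has degree $(n-1)+1 = n$ with leading coefficient $2 \cdot 2^{n-2} = 2^{n-1}$, while $T_{n-2}(x)$ has degree $n-2$ and hence neither raises the degree nor cancels the top-degree coefficient. In particular $T_n$ is genuinely of degree $n$, which we reuse below.

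For property (2), I would use that $\cos$ maps $[0,\pi]$ onto $[-1,1]$: any $x \in [-1,1]$ can be written as $x = \cos\theta$ with $\theta \in [0,\pi]$, and then Fact~\ref{fact:chebyshev-basic} gives $T_n(x) = T_n(\cos\theta) = \cos(n\theta) \in [-1,1]$.

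For properties (3) and (4), I would again parametrize $x = \cos\theta$ with $\theta \in [0,\pi]$, on which $\cos$ is a strictly decreasing bijection onto $[-1,1]$. Then $T_n(x) = 0 \iff \cos(n\theta) = 0 \iff n\theta \in \{\tfrac{\pi}{2}, \tfrac{3\pi}{2}, \dots, \tfrac{(2n-1)\pi}{2}\}$, which yields exactly $n$ values $\theta_j = \tfrac{(2j-1)\pi}{2n}$ for $j = 1, \dots, n$, all in $(0,\pi)$; by injectivity of $\cos$ on $[0,\pi]$ these give $n$ distinct roots $x_j = \cos\theta_j \in (-1,1)$, and since $\deg T_n = n$ by property (1) these are \emph{all} of its roots. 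Similarly $T_n(x) = \pm 1 \iff \cos(n\theta) = \pm 1 \iff n\theta \in \{0, \pi, 2\pi, \dots, n\pi\}$, giving the $n+1$ values $\theta_j = j\pi/n$ for $j = 0, 1, \dots, n$, hence $n+1$ distinct points $x_j = \cos(j\pi/n) \in [-1,1]$ at which $T_n$ equals $\pm 1$.

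The only point requiring care is the bookkeeping in (3) and (4): one must verify that the listed $\theta_j$ are distinct and land in the correct interval, and invoke the degree bound from property (1) to rule out additional roots (in particular roots outside $[-1,1]$). I do not expect any substantive obstacle — the whole argument is a routine transport of elementary facts about $\cos$ through the identity $T_n(\cos\theta) = \cos(n\theta)$, with the induction in (1) supplying the one structural ingredient (the exact degree of $T_n$) that the counting arguments rely on.
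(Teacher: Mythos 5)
Your proposal is correct and follows the standard derivation that the paper has in mind: the paper states Fact~\ref{fact:chebyshev-basic2} without proof, labeling it an ``immediate consequence'' of the identity $T_n(\cos\theta)=\cos(n\theta)$, and your argument is exactly that transport (plus the recurrence-based induction for the leading coefficient, which is indeed needed since the cosine identity alone does not give it). No substantive gap.
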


\noindent In light of the previous properties, we make the following definition.

\begin{definition}
For an integer $n$, we define the Chebyshev points of degree $n$, say $t_0, \dots , t_n$, as the points in the interval $[-1,1]$ where the Chebyshev polynomial satisfies $T_n(t_j) = \pm 1$.  Note that the Chebyshev points are exactly
\[ 
\{ \cos 0, \cos \frac{\pi}{n}, \dots \cos \frac{(n-1)\pi}{n}, \cos \pi \} \,.
\]
\end{definition}

Next, we have a result saying that if we have a bound on the value of a degree-$n$ polynomial at all of the degree $n$ Chebyshev points, then we can bound the value over the entire interval $[-1,1]$.  Similar results are used in \cite{rivlin2020chebyshev, guruswami2016robust}, but there does not appear to be a directly usable reference.

\begin{claim}\label{claim:chebyshev-bound}
Let $P(x)$ be a polynomial of degree at most $n$ with real coefficients.  Let $t_0, \dots , t_n$ be the Chebyshev points of degree $n$.  Assume that $|P(t_j)| \leq 1 $  for $j = 0,1, \dots , n$.  Then $|P(x) | \leq 2n$ for all $x \in [-1,1]$.
\end{claim}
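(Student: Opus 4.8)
The plan is to expand $P$ in the basis of Chebyshev polynomials and control the $\ell^1$ norm of its coefficient vector. Write $P(x) = \sum_{k=0}^n b_k T_k(x)$; since $|T_k(x)| \le 1$ for $x \in [-1,1]$ (Fact \ref{fact:chebyshev-basic2}), this gives $|P(x)| \le \sum_{k=0}^n |b_k|$ on $[-1,1]$, so it suffices to prove $\sum_{k=0}^n |b_k| \le 2n$. To recover the $b_k$ from the sampled values $P(t_j)$ at the Chebyshev points $t_j = \cos(j\pi/n)$, I would use the discrete orthogonality of the Chebyshev polynomials at these points.

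Concretely, using $T_k(t_j) = \cos(kj\pi/n)$ (Fact \ref{fact:chebyshev-basic}), the elementary identity $\sum_{j=0}^n{}' \cos(mj\pi/n) = 0$ for every integer $m$ with $1 \le m \le 2n-1$ and $=n$ for $m=0$ — where $\sum'$ denotes the sum with the $j=0$ and $j=n$ terms weighted by $1/2$, which is a short geometric-series computation — together with the product-to-sum formula $\cos(kj\pi/n)\cos(lj\pi/n) = \tfrac12\cos((k+l)j\pi/n) + \tfrac12\cos((k-l)j\pi/n)$, one obtains
\[
\sum_{j=0}^n{}' T_k(t_j)T_l(t_j) = \begin{cases} 0 & k \ne l, \\ n/2 & k = l \in \{1,\dots,n-1\},\\ n & k = l \in \{0, n\}. \end{cases}
\]
Multiplying $P(t_j) = \sum_l b_l T_l(t_j)$ by $T_k(t_j)$, taking the weighted sum $\sum_{j=0}^n{}'$ over $j$, and using the display gives $b_k = \tfrac{1}{h_k}\sum_{j=0}^n{}' P(t_j) T_k(t_j)$, where $h_k = n/2$ for $1 \le k \le n-1$ and $h_k = n$ for $k \in \{0,n\}$.

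The bound is then immediate: $|P(t_j)| \le 1$ by hypothesis, $|T_k(t_j)| \le 1$, and the total weight appearing in $\sum_{j=0}^n{}'$ is $\tfrac12 + (n-1) + \tfrac12 = n$, so $|b_k| \le n/h_k$, i.e. $|b_k| \le 2$ for $1 \le k \le n-1$ and $|b_k| \le 1$ for $k \in \{0,n\}$. Summing over the $n+1$ indices, $\sum_{k=0}^n |b_k| \le 1 + 2(n-1) + 1 = 2n$, hence $|P(x)| \le \sum_k |b_k| \le 2n$ for all $x \in [-1,1]$. The only real work is establishing the discrete orthogonality relation with the correct endpoint half-weights (the one slightly fiddly step); once that is in hand, every remaining inequality is one line and the constant $2n$ comes out exactly, with the degenerate case $n=1$ — where there are no interior indices — automatically covered.
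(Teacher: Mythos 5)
Your proof is correct, and it takes a genuinely different route from the paper's.

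The paper argues via Lagrange interpolation: it writes $P$ in the Lagrange basis at the Chebyshev points, observes (using the interpolation of $T_n$ itself, whose values at the nodes are $\pm 1$ with alternating signs) that the sum of absolute reciprocals $\sum_j \bigl| \prod_{i\ne j}(t_j - t_i) \bigr|^{-1}$ equals exactly $2^{n-1}$, and then bounds $\max_j |Q(x)/(x-t_j)|$ for $Q(x) = \prod_i (x - t_i)$ by $\max |Q'|$ via the mean value theorem, computing $Q'(\cos\theta)$ explicitly to get $|Q'| \le n/2^{n-2}$. The product of the two bounds gives $2n$. Your proof instead expands $P$ in the Chebyshev basis, $P = \sum_k b_k T_k$, uses the discrete orthogonality of $\{T_k\}$ at the Chebyshev--Gauss--Lobatto nodes (with endpoint half-weights) to solve for each $b_k$ as an explicit weighted average of the sampled values, and then applies the triangle inequality: $|b_k| \le n/h_k$, summing to $2n$. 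Both arguments land on the identical constant, and neither is sharp (the true Lebesgue constant for these nodes is $O(\log n)$), but yours is arguably cleaner: it replaces the mean-value-theorem-plus-explicit-derivative step with a single linear-algebraic identity, and the place where ``the work happens'' is isolated in the standard discrete orthogonality relation, which you correctly state and which is a short geometric-series check. One small thing worth stating explicitly if you write this up: the case analysis in the orthogonality relation relies on $k+l \le 2n$ with equality iff $k=l=n$, and $|k-l| \le n-1 < 2n$ when $k\ne l$, which is why the $m=2n$ case of the cosine sum (which gives $n$, not $0$) only ever contributes when $k=l=n$; you've built that in correctly, but it's the one spot a reader might want to see spelled out.
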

\begin{proof}
By Lagrange interpolation, we may write
\[
P(x) = \frac{P(t_0) (x - t_1) \cdots (x - t_n)}{(t_0 - t_1) \cdots (t_0 - t_n)}  + \dots + \frac{P(t_n) (x - t_0) \cdots (x - t_{n-1})}{(t_n - t_0) \cdots (t_n - t_{n-1})} \,.
\]
Thus, it suffices to upper bound the quantity
\[
F(x) = \left\lvert \frac{(x - t_1) \cdots (x - t_n)}{(t_0 - t_1) \cdots (t_0 - t_n)} \right\rvert  + \dots + \left\lvert \frac{ (x - t_0) \cdots (x - t_{n-1})}{(t_n - t_0) \cdots (t_n - t_{n-1})} \right\rvert 
\]
on the interval $[-1,1]$.  Note that by Lagrange interpolation on $T_n(x)$, we have
\[
T_n(x) = \frac{T_n(t_0) (x - t_1) \cdots (x - t_n)}{(t_0 - t_1) \cdots (t_0 - t_n)}  + \dots + \frac{T_n(t_n) (x - t_0) \cdots (x - t_{n-1})}{(t_n - t_0) \cdots (t_n - t_{n-1})} \,.
\]
Also note that $T_n(t_j) = (-1)^{n - j} $ which has the same sign as $(t_j - t_0) \cdots (t_j - t_{j-1}) (t_j - t_{j+1}) \cdots (t_j - t_n)$.  Thus, 
\[
\left\lvert \frac{1}{(t_0 - t_1) \cdots (t_0 - t_n)} \right\rvert  + \dots + \left\lvert \frac{ 1}{(t_n - t_0) \cdots (t_n - t_{n-1})} \right\rvert  = 2^{n-1} \,,
\]
since the leading coefficient of $T_n(x)$ is $2^{n-1}$.  Now we will upper bound
\[
M = \max\left( |(x - t_1) \cdots (x - t_n)|, \dots , |(x - t_0) \cdots (x - t_{n-1})| \right) 
\]
and once we do this, we will have a bound on $F(x)$ since $F(x) \leq 2^{n-1}M$.  Define the polynomial
\[
Q(x) = (x-t_0)(x-t_1) \cdots (x - t_n) = \frac{\sqrt{x^2 - 1}}{2^n} \left( (x+ \sqrt{x^2 - 1})^n  - (x -  \sqrt{x^2 - 1})^n \right) \,.
\]
To see why the last equality is true, note that the RHS has roots at $t_0, \dots , t_n$ and is a monic polynomial of degree $n+1$ so it must be equal to $(x - t_0) \cdots (x - t_n)$.  Now, 
\[
M = \max\left( \left \lvert \frac{Q(x)}{x - t_0} \right \rvert, \dots , \left \lvert \frac{Q(x)}{x - t_n} \right \rvert \right) \leq \max |Q'(x)| 
\]
where the last step holds by the mean value theorem (because $Q(t_j) = 0$ for all $j$).  Now note that 
\[
Q(\cos \theta) = -\frac{\sin \theta \sin(n\theta)}{2^{n-1}}
\]
so 
\[
Q'(\cos \theta) = \frac{n\cos n \theta}{2^{n-1}}  + \frac{\cos \theta \sin(n \theta)}{\sin(\theta) 2^{n-1}} 
\]
and from the above it is clear that
\[
|Q'(\cos \theta)| \leq \frac{n}{2^{n-1}} + \frac{n}{2^{n-1}} = \frac{n}{2^{n-2}} \,.
\]
Now we are done because
\[
\max_{x \in [-1,1]} |P(x)| \leq F(x) \leq 2^{n-1}M \leq 2n \,.
\]
\end{proof}

It turns out that we can restate the above result in terms of convex hulls of points on the moment curve.  This reformulation is the version that is useful in our algorithms.
\begin{definition}
For a real number $x$, we define the moment vector $\mcl{V}_n(x) = (1,x, \dots , x^{n})$.
\end{definition}

\begin{corollary}\label{coro:convexhull}
Let $t_0, t_1, \dots , t_n$ be the Chebyshev points of degree $n$.  Then for any $x \in [-1,1]$, the point $\mcl{V}_n(x)$ is contained in the convex hull of the points 
\[
\{ \pm 2n\mcl{V}_n(t_0), \dots , \pm 2n\mcl{V}_n(t_n) \} \,.
\]
\end{corollary}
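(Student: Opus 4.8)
The plan is to derive Corollary \ref{coro:convexhull} from Claim \ref{claim:chebyshev-bound} by a separating-hyperplane (LP duality) argument. Fix $x \in [-1,1]$ and suppose, for contradiction, that $\mcl{V}_n(x)$ lies outside the convex hull $K$ of the $2(n+1)$ points $\{\pm 2n\,\mcl{V}_n(t_j)\}_{j=0}^{n}$. Since $K$ is a compact convex subset of $\R^{n+1}$, the separating-hyperplane theorem gives a vector $c = (c_0, \dots, c_n) \in \R^{n+1}$ with
\[
\la c, \mcl{V}_n(x) \ra \;>\; \la c, v \ra \qquad \text{for all } v \in K.
\]

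Next I would identify the linear functional $v \mapsto \la c, v\ra$ on $\R^{n+1}$ with evaluation of the real-coefficient polynomial $P(y) = c_0 + c_1 y + \dots + c_n y^n$, so that $\la c, \mcl{V}_n(y)\ra = P(y)$ for every real $y$. Applying the separation inequality to the vertices $\pm 2n\,\mcl{V}_n(t_j)$ (both signs lie in $K$) yields $P(x) > 2n\, P(t_j)$ and $P(x) > -2n\, P(t_j)$ for every $j$, hence
\[
P(x) \;>\; 2n\,|P(t_j)| \qquad \text{for } j = 0, 1, \dots, n,
\]
and in particular $P(x) > 0$.

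Now split into two cases. If $\max_j |P(t_j)| = 0$, then the degree-$\le n$ polynomial $P$ vanishes at all $n+1$ Chebyshev points and is therefore identically zero, contradicting $P(x) > 0$. Otherwise put $B = \max_j |P(t_j)| > 0$; then $P/B$ has degree at most $n$ and satisfies $|(P/B)(t_j)| \le 1$ for all $j$, so Claim \ref{claim:chebyshev-bound} gives $|(P/B)(x)| \le 2n$, i.e. $|P(x)| \le 2n B = 2n\,\max_j |P(t_j)|$. This contradicts the strict inequality $P(x) > 2n\,\max_j|P(t_j)|$ established above. Hence $\mcl{V}_n(x) \in K$, which is exactly the assertion of the corollary.

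The only points I would be careful about — and I expect them to be the only (minor) obstacles — are (i) that the separation can be taken strict, which holds because $K$ is closed and $\mcl{V}_n(x)\notin K$, and (ii) disposing of the degenerate case where $P$ vanishes at all Chebyshev points. All the quantitative content is already packaged in Claim \ref{claim:chebyshev-bound}, so no further estimates are needed; the argument is purely a compactness/duality reformulation of that bound.
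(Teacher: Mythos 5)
Your proof is correct and is essentially the same argument as the paper's: both assume $\mcl{V}_n(x)$ lies outside the convex hull, invoke a separating hyperplane, identify the linear functional with a degree-$\le n$ polynomial $P$, and derive a contradiction from Claim \ref{claim:chebyshev-bound}. The only difference is a minor one of care: the paper immediately normalizes by dividing by the offset $b$ (implicitly assuming $b \neq 0$), whereas you explicitly dispose of the degenerate case $\max_j |P(t_j)| = 0$, which is a small but genuine tightening of the write-up.
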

\begin{proof}
Assume for the sake of contradiction that the above is not true.  Then there must be a separating hyperplane.  Assume that this hyperplane is given by $a \cdot x = b$ where $a$ is a vector and $b$ is a real number.  Now WLOG $b \geq 0$ and we must have
\begin{align*}
& a \cdot \mcl{V}_n(x) \geq b \\
& |a \cdot \mcl{V}_n(t_j) | \leq \frac{b}{2n}  \quad \forall j 
\end{align*}
However, applying Claim \ref{claim:chebyshev-bound} with $P(x) = \frac{2n (a \cdot \mcl{V}_n(x)))}{b}$ gives a contradiction.  Thus, no separating hyperplane can exist and we are done.
\end{proof}

\subsection{Approximating a Gaussian with a Polynomial}
We will also need to approximate Gaussians with polynomials.  This is a somewhat standard result which we state below. 
\begin{lemma}\label{lem:Gaussian-taylorseries}
Let $G = N(0, 1)$ be the standard Gaussian.  Let $l$ be some parameter.  Then we can compute a polynomial $P(x)$ of degree $(10l)^2$ such that for all $x \in [-2l, 2l]$, 
\[
|G(x) - P(x)| \leq e^{-l} \,. 
\]
\end{lemma}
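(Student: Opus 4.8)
The plan is to take $P$ to be a truncation of the Taylor series of the Gaussian density. Since $e^{-x^2/2}$ is entire, it has the absolutely convergent expansion $e^{-x^2/2} = \sum_{j=0}^{\infty} \frac{(-1)^j}{2^j j!} x^{2j}$, so I would set
\[
P(x) = \frac{1}{\sqrt{2\pi}} \sum_{j=0}^{\lfloor 50 l^2 \rfloor} \frac{(-1)^j}{2^j j!} x^{2j},
\]
which is a polynomial of degree at most $2 \cdot 50 l^2 = (10l)^2$, as required.

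Next I would bound the error by the tail of the series. For $x \in [-2l, 2l]$ we have $x^2/2 \le 2 l^2$, and since $1/\sqrt{2\pi} < 1$,
\[
\left| G(x) - P(x) \right| \le \sum_{j > 50 l^2} \frac{(x^2/2)^j}{j!} \le \sum_{j > 50 l^2} \frac{(2 l^2)^j}{j!}.
\]
The key estimate is the factorial lower bound $j! \ge (j/e)^j$, which gives, for every integer $j \ge 50 l^2 + 1$,
\[
\frac{(2 l^2)^j}{j!} \le \left( \frac{2 e l^2}{j} \right)^j \le \left( \frac{2 e}{50} \right)^j = \left( \frac{e}{25} \right)^j.
\]
Since $9 e < 25$ we have $e/25 < 1/9$, so summing the geometric tail (whose ratio is below $1/9$, hence whose sum is at most twice its first term) yields
\[
\left| G(x) - P(x) \right| \le \sum_{j > 50 l^2} \left(\frac{e}{25}\right)^j \le 2 \left(\frac{e}{25}\right)^{50 l^2} \le 2 \cdot 9^{-50 l^2} \le e^{-l},
\]
the last inequality holding for all $l \ge 1$ (with vast room to spare, since $9^{-50 l^2}$ decays like $e^{-\Omega(l^2)}$).

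There is no real obstacle here: the argument is entirely routine, and the only thing to watch is bookkeeping with the constants so that the stated degree $(10l)^2$ and error bound $e^{-l}$ fall out cleanly. Because the Taylor tail of the Gaussian decays super-exponentially, essentially any sensible choice of truncation point works, and the displayed constants are far from tight.
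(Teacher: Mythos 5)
Your proposal is correct and follows essentially the same route as the paper's own proof: truncate the Taylor series of $e^{-x^2/2}$ at degree $(10l)^2$ and bound the tail by a factorial estimate to get super-exponential decay. The only differences are cosmetic (you use $j!\ge (j/e)^j$ where the paper uses $m!\ge (m/3)^m$, and your bookkeeping of the truncation index and of the $1/\sqrt{2\pi}$ prefactor is actually a bit tidier than the paper's).
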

\begin{proof}
Write
\[
G(x) = \frac{1}{ \sqrt{2\pi}} e^{-x^2/2} \,.
\]
and now we can write the Taylor expansion
\[
e^{-\frac{x^2}{2} } = \sum_{m = 0}^\infty \frac{\left(-\frac{x^2}{2}\right)^m }{m!} = \sum_{m =0}^{\infty}\frac{(-1)^m x^{2m}}{2^m m!}
\]

Now define 
\[
P(x) = \sum_{m = 0}^{(10l)^2} \frac{(-1)^m x^{2m}}{2^m m!}  \,.
\]
For $x \in [-2l,2l]$, we have
\begin{align*}
|G(x) - P(x)| \leq \left \lvert \sum_{m =(10l)^2 + 1}^\infty   \frac{(-1)^m x^{2m}}{2^m m!} \right \rvert \leq \sum_{m =10^2l + 1}^\infty \frac{ (2l)^{2m}}{2^m m!} \leq \sum_{m =10^2l + 1}^\infty  \left(\frac{(2l)^2}{2m/3} \right)^{m} \\ \leq \sum_{m =10^2l + 1}^\infty \frac{1}{2^{m}}\leq e^{-l} \,.
\end{align*}

\end{proof}

In light of the above, we use the following notation.
\begin{definition}\label{def:poly-approx-Gaussian}
We will use $\mcl{P}_{l}(x)$ to denote the polynomial computed in Lemma \ref{lem:Gaussian-taylorseries} for parameter $l$.  Note that $\mcl{P}_l$ is a polynomial of degree $(10l)^2$ and for $G = N(0,1)$, we have
\[
|G(x) - \mcl{P}_l(x)| \leq e^{-l}
\]
for $x \in [-2l, 2l]$.
\end{definition}

\subsection{Linear Regression}
Recall that at the core of the problems we are studying, we are given some function $f$ and want to approximate it as a weighted sum $a_1 f_1 + \dots + a_nf_n$ of some functions $f_1, \dots , f_n \in \mcl{F}$ for some family of functions $\mcl{F}$.  The result below allows us to solve the problem of computing the coefficients if we already know the components $f_1, \dots , f_n$ that we want to use.  The precise technical statement is slightly more complicated in order to incorporate the various types of additional constraints that we may want to impose on the coefficients $a_1, \dots , a_n$.

\begin{lemma}\label{lem:linear-regression}
Let $\mcl{D}$ be a distribution on $\R$ that we are given.  Also assume that we are given functions $f ,f_1, \dots , f_n, g, g_1, \dots , g_n : \R \rightarrow \R$. 
Assume that there are nonnegative coefficients $a_1, \dots , a_n$ such that $a_1 + \dots + a_n \leq 1$ and
\[
\int_{-\infty}^{\infty} (f(x) - a_1f_1(x) - \dots - a_nf_n(x))^2 \mcl{D}(x)dx  + \int_{-\infty}^{\infty} (g(x) - a_1g_1(x) - \dots - a_ng_n(x))^2 \mcl{D}(x)dx\leq \eps^2
\]
for some parameter $\eps > 0$.  Then there is an algorithm that runs in $\poly(n, \log 1/\eps)$ time and outputs nonnegative coefficients $b_1, \dots , b_n$ such that $b_1 + \dots + b_n \leq 1$ and 
\[
\int_{-\infty}^{\infty} (f(x) - b_1f_1(x) - \dots - b_nf_n(x))^2 \mcl{D}(x)dx + \int_{-\infty}^{\infty} (g(x) - b_1g_1(x) - \dots - b_ng_n(x))^2 \mcl{D}(x)dx \leq 2\eps^2 \,.
\]
\end{lemma}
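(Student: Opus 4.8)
### Proof proposal for Lemma~\ref{lem:linear-regression}

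\textbf{Overview.} The plan is to recast the problem as a convex quadratic program over the simplex $\Delta_n = \{b \in \R^n : b_i \ge 0,\ \sum_i b_i \le 1\}$ and then argue that this program can be solved to sufficient accuracy in $\poly(n,\log 1/\eps)$ time. The crucial point is that the two integral terms are quadratic functions of the coefficient vector $b = (b_1,\dots,b_n)$ whose Gram matrices and linear terms are explicit, so the whole objective is a single convex quadratic in $b$, and minimizing a convex quadratic over a polytope with $\poly(n)$ facets is a tractable convex optimization problem.

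\textbf{Step 1: set up the quadratic.} Define the (positive semidefinite) matrix $A \in \R^{n\times n}$ and vector $u \in \R^n$ and scalar $\rho$ by
\[
A_{ij} = \int f_i f_j \,\mcl{D} + \int g_i g_j \,\mcl{D}, \qquad u_i = \int f f_i\,\mcl{D} + \int g g_i\,\mcl{D}, \qquad \rho = \int f^2\,\mcl{D} + \int g^2\,\mcl{D},
\]
so that the objective to be minimized is $\Phi(b) = b^\top A b - 2 u^\top b + \rho$, which equals exactly the sum of the two integrals in the statement evaluated at coefficients $b$. The hypothesis says there exists $a \in \Delta_n$ with $\Phi(a) \le \eps^2$, and we want to output $b \in \Delta_n$ with $\Phi(b) \le 2\eps^2$. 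Since $A \succeq 0$, $\Phi$ is convex.

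\textbf{Step 2: solve the convex program approximately.} We run any polynomial-time convex optimization method — e.g.\ the ellipsoid method or projected gradient descent — on $\min_{b \in \Delta_n} \Phi(b)$; note $\Delta_n$ admits a trivial separation oracle and a polynomial-time projection. Because the optimum value is at most $\eps^2$, it suffices to drive the objective to within an additive $\eps^2$ of the optimum, which requires only $\poly(n, \log 1/\eps)$ iterations provided the relevant quantities ($\|A\|$, $\|u\|$, and the diameter of $\Delta_n$, which is $O(1)$) are polynomially bounded; one can assume WLOG these norms are at most $\poly(n,1/\eps)$ since otherwise the hypothesis $\Phi(a)\le \eps^2 < 1$ is already violated on the relevant scale, or one simply works with the objective restricted to directions that matter. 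Each iteration requires only evaluating $\Phi$ and $\nabla \Phi = 2Ab - 2u$, which costs $\poly(n)$ arithmetic operations once $A$ and $u$ are known. Outputting the resulting $b \in \Delta_n$ gives $\Phi(b) \le \Phi(a) + \eps^2 \le 2\eps^2$, as required.

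\textbf{Main obstacle.} The genuinely delicate point is \emph{bit complexity / conditioning}: we need to make sure the integrals defining $A, u, \rho$ can be computed (or approximated) in polynomial time from the given data, and that the convex program is solvable to additive accuracy $\eps^2$ in polynomial time despite the matrix $A$ possibly being ill-conditioned (the $f_i$ may be nearly linearly dependent, as indeed happens in our applications where the $f_i$ are exponentials at nearby frequencies). The resolution is that we only need \emph{additive} accuracy in the objective value, not recovery of a near-optimal $b$ in norm, so ill-conditioning of $A$ is harmless: a standard analysis of gradient descent or ellipsoid for convex quadratics gives additive error $\eps^2$ in $\poly(n,\log 1/\eps)$ steps with polynomially-many bits of precision, regardless of the smallest eigenvalue of $A$. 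I would spell this out by invoking the standard guarantee that projected gradient descent on an $L$-smooth convex function over a convex set of diameter $D$ reaches additive error $\delta$ in $O(LD^2/\delta)$ steps, with $L = 2\|A\|_{\mathrm{op}}$, $D = O(1)$, $\delta = \eps^2$, and noting that the requisite arithmetic can be carried out with $\poly(\log(1/\eps), \log\|A\|)$ bits throughout.
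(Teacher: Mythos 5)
Your proposal is correct and takes essentially the same approach as the paper: both recast the problem as minimizing a convex quadratic in $b$ over the polytope $\{b_i \ge 0,\ \sum b_i \le 1\}$ and invoke polynomial-time convex optimization, with the factor-$2$ slack absorbing the additive optimization error. The paper's proof is terser — it packages the quadratic as $v^\top M v$ with $v=(1,b_1,\dots,b_n)$, takes a PSD square root $N$ of $M$, and states that $\min_v \norm{Nv}_2^2$ over the constraint set is efficiently solvable, without discussing iteration counts or conditioning. Your version is more careful on exactly those points (additive accuracy suffices, so ill-conditioning of $A$ is harmless; smoothness $L=2\norm{A}_{\mathrm{op}}$ and diameter $O(1)$ give $\poly(n,\log 1/\eps)$ complexity), which is a legitimate filling-in of detail rather than a different route.
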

\begin{proof}
Let $v = (1,b_1, \dots , b_n)$.  Note that we can write 
\[
\int_{-\infty}^{\infty} (f(x) - b_1f_1(x) - \dots - b_nf_n(x))^2 \mcl{D}(x)dx + \int_{-\infty}^{\infty} (g(x) - b_1g_1(x) - \dots - b_ng_n(x))^2 \mcl{D}(x)dx = v^TMv
\]
where $M$ is a matrix whose entries are $\int_{-\infty}^{\infty} \left(f(x) f_j(x) + g(x)g_j(x)\right) \mcl{D}(x)dx$ in the first row and column and the other entries are $\int_{-\infty}^{\infty} \left( f_i(x) f_j(x) + g_i(x)g_j(x) \right) \mcl{D}(x)dx$.  Since all of these functions are given to us, we can explicitly compute $M$.  Also note that clearly $M$ is positive semidefinite.  Thus, we can compute its positive semidefinite square root, say $N$.  Now 
\[
v^TMv = \norm{Nv}_2^2
\]
so it remains to solve $\min_{v} \norm{Nv}_2^2 $ which is a convex optimization problem that we can solve efficiently (the size of the problem is $\poly(n)$).
\end{proof}

\end{document}